\newcommand{\be}{\begin{eqnarray}}
\newcommand{\ee}{\end{eqnarray}}
\newcommand{\beq}{\begin{eqnarray}}
\newcommand{\eeq}{\end{eqnarray}}
\newtheorem{definition}{Definition}[section]
\newtheorem{theorem}[definition]{Theorem}
\newtheorem{corollary}[definition]{Corollary}
\def\clock{{\count0=\time
           \divide\count0 60
           \ifnum\count0<10 0\fi\the\count0
           \multiply\count0 -60 \advance\count0 \time
           :\ifnum\count0<10 0\fi \the\count0
         }}
\newcommand{\timestamp}{{\small\vbox{\hbox{\tt\jobname.tex}
\hbox{\the\day/\the\month/\the\year, \clock}}}}
\numberwithin{equation}{section}
\begin{document}

\begin{titlepage}
\ \ \vskip 1.8cm

\centerline{\Huge \bf Blackfolds, Plane Waves and } 
\vskip 0.3cm
\centerline{\Huge \bf Minimal Surfaces}

\vskip 1.6cm \centerline{\bf Jay Armas$^{1,2}$ and Matthias Blau$^{2}$} \vskip 0.7cm

\begin{center}
\sl $^{1}$ Physique Th\'{e}orique et Math\'{e}matique \\
Universit\'{e} Libre de Bruxelles and International Solvay Institutes \\
ULB-Campus Plaine CP231, B-1050 Brussels, Belgium\\
\end{center}
\vskip 0.3cm

\begin{center}
\sl $^{2}$ Albert Einstein Center for Fundamental Physics, University of Bern,\\
Sidlerstrasse 5, 3012 Bern, Switzerland
\end{center}
\vskip 0.3cm

\centerline{\small\tt jarmas@ulb.ac.be, blau@itp.unibe.ch}

\vskip 1.3cm \centerline{\bf Abstract} \vskip 0.2cm \noindent
Minimal surfaces in Euclidean space provide examples of possible non-compact horizon geometries and topologies
in asymptotically flat space-time. On the other hand, the existence of limiting surfaces in the space-time
provides a simple mechanism for making these configurations compact. Limiting surfaces appear naturally in a
given space-time by making minimal surfaces rotate but they are also inherent to plane wave or de Sitter
space-times in which case minimal surfaces can be static and compact. We use the blackfold approach in order to
scan for possible black hole horizon geometries and topologies in asymptotically flat, plane wave and de Sitter
space-times. In the process we uncover several new configurations, such as black helicoids and catenoids, some
of which have an asymptotically flat counterpart. In particular, we find that the ultraspinning regime of
singly-spinning Myers-Perry black holes, described in terms of the simplest minimal surface (the plane), can be
obtained as a limit of a black helicoid, suggesting that these two families of black holes are connected. We also show that minimal surfaces embedded in spheres 
rather than Euclidean space can be used to construct static compact horizons in asymptotically de Sitter space-times.

\end{titlepage}

\tableofcontents

\section{Introduction}  \label{sec:intro}

Black holes in higher-dimensions are hard to classify and to construct
analytically, as Einstein equations become more intricate and complex
as the number of space-time dimensions is increased. In particular, in
asymptotically flat space-times in dimensions $D\ge6$ very few black
hole solutions are known analytically and only classification schemes,
which do not specify the solutions uniquely, based on the domain structure
are known \cite{Harmark:2009dh, Armas:2011ed}. In space-times with more
non-trivial geometry, such as plane wave and de Sitter space-times, the
problem of finding and classifying black holes only becomes aggravated.

However, recently, progress in understanding the phase structure of
black holes in $D\ge6$ has been made based on effective theories
and numerical methods. One of these effective theories, known as
the blackfold approach \cite{Emparan:2009cs, Emparan:2009at},
describing the long wavelength dynamics of black branes in a
derivative expansion including hydrodynamic and elastic degrees
of freedom \cite{Armas:2012jg, Armas:2013hsa,Armas:2013goa, Armas:2014bia,
Armas:2014rva}, has allowed to scan for non-trivial black hole horizon
topologies in asymptotically flat and (Anti)-de Sitter space-times
\cite{Emparan:2009vd,Armas:2010hz,Caldarelli:2010xz,Emparan:2011hg}. These
new black hole topologies include black rings in higher-dimensions,
black odd-spheres and black cylinders, some of which have
been constructed numerically \cite{Kleihaus:2012xh,Dias:2014cia,
Kleihaus:2014pha, Figueras:2014dta}. However, these works have only scratched the surface
of the entire set of possible horizon topologies.

This paper has a two-fold purpose: on the one hand it aims at providing
evidence for more complicated black hole horizon geometries and topologies
in different space-times and, on the other hand, to show that plane
wave space-times in vacuum allow for a very rich phase structure of
higher-dimensional black holes. The key ingredient in this work is
the use of established results in classical minimal surface theory in
higher-dimensional Euclidean and spherical spaces in order to construct
new compact horizon topologies using the blackfold approach.

Regarding the first input, plane wave space-times, 
we note that for vacuum plane wave space-times no exact analytic
black hole solutions are known, though attempts to construct such
solutions using the blackfold approach have been made in the past
\cite{LeWitt:2009qx}.\footnote{In supergravity, exact black hole
solutions that are asymptotically plane wave space-times have been
found. See \cite{Hubeny:2002nq,Hubeny:2002pj,Liu:2003cta, Gimon:2003xk,
Hubeny:2003ug} for work done in the context of supergravity.} The
configurations we shall construct in this paper should be thought of as black
holes in plane wave backgrounds. However, they are not necessarily asymptotically
plane wave black holes in some strict sense, as defined e.g.\ in
\cite{LeWitt:2008zx}. We will nevertheless occasionally refer to them
simply as asymptotically plane wave black holes in the following, and we
will come back to this issue in the concluding Sec.~\ref{sec:conclusions}.

Regarding the second input, it is a well known result from classical
minimal surface theory that minimal surfaces in $\mathbb{R}^3$ must be
non-compact \cite{Nitsche:2011}, and our aim is to show how these can
nevertheless be used to construct compact black hole horizons.
To illustrate this, note that compact minimal surfaces are found
everywhere in nature, the simplest example being that of soap films. Soap
films are thin surfaces with equal pressure on each of its sides and
are characterised by a surface tension. The surface tension acts
as a force that tries to shrink the area of the surface and hence
equilibrium configurations are minimal. 

The most common example of a
soap film is that which forms on a bubble wand after dipping it in a
soapy solution. Commonly, bubble wands have a circular shape at one of
their ends and hence the soap film takes the form of disc. Indeed, this is
the simplest example of a compact minimal surface: a plane $\mathbb{R}^2$
embedded in $\mathbb{R}^3$ with a circular boundary. From this example one
draws the following conclusion for surfaces in $\mathbb{R}^3$: in order
to create compact minimal surfaces one needs to introduce boundaries in
the embedded space.

The phenomenology of soap films is rather different than that of soap
bubbles. While soap films have an equal pressure on each side of its
surface and hence its equilibrium states are minimal surfaces, soap
bubbles have an internal pressure different from the exterior pressure
and hence, due to an interplay with the surface tension, its equilibrium
states are surfaces of non-vanishing constant mean extrinsic curvature. 

As we will see, the phenomenology of certain black brane configurations
can be similar either to that of soap films or to that of soap bubbles.
In particular, black branes share one common feature with soap films:
they are also characterised by a tension. In fact, it was noted in
\cite{Emparan:2009vd} (and we will review this in Sec.~\ref{sec:theorems})
that quite generally minimal surfaces in $\mathbb{R}^3$ may provide
non-trivial geometries for static non-compact black brane horizons in
asymptotically flat space-time.\footnote{These must be subjected to
regularity constraints such as no curvature divergences.} We observe
here that rotation provides a simple mechanism for making some of these
geometries compact, at least in some directions.

More generally, 
boundaries can be created in a given embedding space-time by introducing
limiting surfaces where the brane is forced to move at the speed
of light. Introducing rotation on a geometry implies the existence
of a stationary background Killing vector field and, generically,
of an ergo-region in the ambient space-time. Rotation involves the
existence of a $U(1)$ family of isometries inherent to an $\mathbb{R}^2$
plane and hence its boundary - defined by the limiting surface - will
always be a circle on that plane. However, there are other ways of
introducing limiting surfaces. With direct analogy to the bubble wand,
one can consider embedding space-times where limiting surfaces are
naturally present such as in de Sitter space-times, where the limiting
surface is located at the cosmological horizon and its shape is always
a higher-dimensional sphere, or in plane wave space-times, where its
shape is defined by a more general quadratic function.

In order to clarify what we mean by introducing limiting surfaces in
the embedding space-time we will now review a few examples from the
literature where this point is made explicit. In the examples that follow
(and throughout this article) we denote by $\textbf{ds}^2$ the induced
line element on the surface, and in the examples below we have embedded
each of the geometries trivially along the time-like direction $t$ of
the ambient space-time such that $t=\tau$, where $\tau$ is the time-like
coordinate on the surface, thus accounting for the term $-d\tau^2$ in 
the line element.

\begin{itemize}
\item The $\mathbb{R}^2$ plane in flat space-time \\\\ 
The simplest example of a minimal surface in Euclidean space is the $\mathbb{R}^2$ plane. We can trivially embed the two-dimensional spatial plane in Minkowski space-time as in \cite{Emparan:2009vd}, giving rise to the worldvolume geometry
\beq \label{ds:plane}
\textbf{ds}^2=-d\tau^2+d\rho^2+\rho^2d\phi^2
\eeq
(in spatial polar coordinates).
The configuration, embedded in this way, is still minimal. We can then set the plane to rotate with angular velocity $\Omega$ by considering the existence of a Killing vector field $\textbf{k}^{a}\partial_a=\partial_\tau+\Omega\partial_\phi$ whose norm is given by
\beq
\textbf{k}^2=1-\Omega^2\rho^2~~.
\eeq
The brane cannot rotate faster than the speed of light and hence we see that a limiting surface appears on the circular boundary defined by $\rho=\Omega^{-1}$. The existence of this boundary renders the $\mathbb{R}^2$ plane compact. In fact, this geometry describes the ultraspinning limit of the singly-spinning Myers-Perry black hole \cite{Emparan:2009vd}. We will revisit this configuration in Sec.~\ref{sec:helicoid} and furthermore construct a more non-trivial geometry based on minimal surface embeddings, namely the helicoid, which captures this disc geometry in an appropriate limit. In App.~\ref{sec:higherhelicoid} we construct higher-dimensional versions of helicoid geometries.

\item The $\mathbb{R}^2$ plane in de Sitter space-time \\\\
We now consider embedding a plane in de Sitter space-time as done in \cite{Armas:2010hz}.\footnote{This geometry, and higher-dimensional versions, played an important role in \cite{Emparan:2011ve} in the understanding of  horizon intersections and merger transitions.} The worldvolume geometry takes now the form
\beq \label{eq:dsr2de}
\textbf{ds}^2=-R_0^2d\tau^2+R_0^{-2}d\rho^2+\rho^2d\phi^2~~,~~R_0^2=1-\frac{\rho^2}{L^2}~~,
\eeq
where $\rho=L$ is the location of the cosmological horizon. The geometry is still minimal, even though embedded in de Sitter space-time as we will show in Sec.~\ref{sec:theorems}. In this case we do not need to set the plane to rotate and can instead consider a static geometry with Killing vector field $\textbf{k}^{a}\partial_a=\partial_\tau$ whose norm is 
\beq
\textbf{k}^2=1-\frac{\rho^2}{L^2}~~.
\eeq
Again we see that there is an inherent limiting surface at the cosmological horizon $\rho=L$ where the brane
must move at the speed of light. This introduces a circular boundary in the $\mathbb{R}^2$ plane and renders it
compact. This geometry describes the intersection of the event horizon of singly spinning Kerr-de Sitter black
holes with the cosmological horizon \cite{Armas:2010hz}. In Sec.~\ref{sec:discpw} we will show that such
compact $\mathbb{R}^2$ planes can also arise as parts of black hole horizons in plane wave space-times.
Analogously to what happens in asymptotically flat space-time, we also construct in \ref{sec:helicoidpw} two
different classes of helicoid geometries in plane wave space-times, which also reduce to disc geometries in an
appropriate limit. These results are generalised to higher-dimensional versions of helicoid geometries in
App~\ref{sec:higherhelicoid}. Furthermore, we construct other non-trivial examples using minimal embeddings such as rotating black catenoids and Scherk surfaces in Sec.~\ref{sec:catenoid} and higher-dimensional versions of rotating catenoids in App.~\ref{sec:highercatenoid}.
\end{itemize}

As can be seen from the second example presented above, embedding space-times with inherent limiting surfaces can be used to construct static geometries which were not possible in asymptotically flat space-time. With this in mind, and for the purpose of explanation, we will review and generalise two simple examples in de Sitter space-time present in the literature:
\begin{itemize}
\item Static black $p$-spheres \\\\ 
Families of static black $p$-spheres with radius $R$ in de Sitter space-time were constructed (for $p$ odd) in \cite{Armas:2010hz} where the worldvolume geometry is described by\footnote{The particular case of $p=1$, describing a black ring, was treated first in \cite{Caldarelli:2008pz}.}
\beq \label{eq:dspodd}
\textbf{ds}^2=-R_0^2d\tau^2+R^2d\Omega_{(p)}^2~~,~~R_0^2=1-\frac{R^2}{L^2}~~.
\eeq
The phenomenology of these geometries is slightly different from what we have encountered for minimal surfaces
and can be thought of as being analogous to soap bubbles (rather than soap films) instead. The tension of the brane tries to shrink the $p$-sphere but the gravitational potential of de Sitter space-time acts as an internal pressure. Equilibrium is attained when \cite{Armas:2010hz}
\beq \label{b:oddds}
\textbf{R}^2=\frac{p}{D-2}~~,~~\textbf{R}=\frac{R}{L}~~,
\eeq
where $D$ is the number of space-time dimensions. We observe in this paper that this result is actually
valid for all $p\ge1$ and not only for odd $p$, the reason being that since the $p$-sphere is not rotating, there is no obstruction to solving the equations of motion. In App.~\ref{sec:blackodd}, we present the analogous configurations in plane wave space-times.

\item Static black $p+2$-balls \\\\
The existence of a limiting surface in de Sitter space-times also allows for a higher-dimensional generalisation of the simplest minimal surface, namely, the $\mathbb{R}^2$ plane described in Eq.~\eqref{eq:dsr2de}. These geometries take the simple form of a $p+2$-ball
\beq\label{eq:dsbde}
\textbf{ds}^2=-R_0^2d\tau^2+R_0^{-2}d\rho^2+ \rho^2d\Omega_{(p+1)}^2~~,~~R_0^2=1-\frac{\rho^2}{L^2}~~,
\eeq
and are minimal surfaces in higher dimensions, as we will show in Sec.~\ref{sec:theorems}. In the case where
$p$ is even, they have also been considered in \cite{Armas:2010hz}, though presented in a different way, and
they describe the intersection of Kerr-de Sitter black holes with multiple ultraspins with the cosmological
horizon \cite{Armas:2010hz}. However, these configurations are valid for all $p\ge0$ and for the case in which
$p$ is odd they describe a new type of static black holes which is not 
connected to the family of Kerr-de Sitter black holes. In Sec.~\ref{sec:discpw} we will construct the analogous configurations in plane wave space-times while in Sec.~\ref{sec:desitter} we will consider de Sitter space-times with a black hole horizon. We will use these geometries in Sec.~\ref{sec:desitter} in order to show that one can construct compact black hole horizons in de Sitter space-times from minimal surfaces in the $(p+1)$-sphere of \eqref{eq:dsbde}, in particular we will construct black hole horizons using the Clifford torus and its higher-dimensional version as the starting point.
\end{itemize}

This paper is organised as follows. In Sec.~\ref{sec:blackfolds} we review
the necessary ingredients regarding minimal surfaces and the blackfold
approach required for the purposes of this paper. We also analyse in
detail the validity of the method based on a second order effective
action and improve previous analyses in the literature. In particular,
in Sec.~\ref{sec:validity} we identify the intrinsic and extrinsic
curvature invariants that classify each blackfold configuration up to
second order in a derivative expansion. The length scale associated
with each of these invariants is required to be much larger than the
thickness of the brane. We also introduce a condition which is necessary
for dealing with intersections of multiple worldvolume geometries. Subsequently,
in Secs.~\ref{sec:pw} and \ref{sec:cembed} we specify our conventions
for the different ambient space-times that we will consider. In
Secs.~\ref{sec:theorems} and \ref{sec:sclasses}, we make several observations
about the solutions to the blackfold equations and prove a number of statements
(theorems)
regarding minimal surfaces in the relevant embedding space-times. For
instance, we show that, amongst all minimal surfaces embedded in
$\mathbb{R}^{3}$, the plane and the helicoid are the only stationary
minimal embeddings which solve the blackfold equations in flat space-time.

In Sec.~\ref{sec:examples} we construct and study several new black
hole configurations in flat, plane wave and de Sitter space-times. In
particular, in Sec.~\ref{sec:helicoid} we construct a rotating black
helicoid in asymptotically flat space-time which turns out to have a
limit in which the ultraspinning regime of Myers-Perry black holes is
captured, hence showing that these two families of black hole solutions
are connected. In Secs.~\ref{sec:discpw} and \ref{sec:helicoidpw}
we construct the analogous configurations in plane wave space-times
and show that they have valid static limits due to the presence of
inherent limiting surfaces in the space-time. In Sec.~\ref{sec:catenoid}
we study more non-trivial examples of minimal embeddings in plane wave
space-times such as rotating catenoids and Scherk surfaces, which do not
have a flat space-time counterpart.  Finally in Sec.~\ref{sec:desitter}
we find black hole configurations using the Clifford torus and its
higher-dimensional counterpart.

In Sec.~\ref{sec:conclusions} we conclude with open problems and future
research directions. We also include several appendices. In App.~\ref{sec:valanal} we give specific details regarding the validity analysis of the configurations studied in this paper. In App. \ref{sec:higherhelicoid} and
App.~\ref{sec:highercatenoid} we study higher-dimensional versions of helicoids and catenoids. While the focus of this paper is on minimal surfaces
and their relevance for black hole horizons, in App.~\ref{sec:blackodd} we
construct and study several classes of stationary geometries with non-zero
constant mean extrinsic curvature that generalise \eqref{eq:dspodd}
to plane wave space-times.

\section{The blackfold approach and minimal surfaces} \label{sec:blackfolds}
In this section we first review some of the literature and required definitions for studying minimal surfaces,
with special focus on two-dimensional surfaces embedded in three-dimensional Euclidean space, which we will use
in subsequent parts of this paper. This is followed by a review of the necessary material for applying the
blackfold approach to the cases relevant in this paper, while improving the analysis of its regime of validity
based on a second order effective action. We then specify our conventions for the ambient space-times that
we will consider and introduce various classes of embeddings that we are interested in. This is followed by
several theorems for minimal surfaces in the relevant ambient space-times as well as by an overview of
different types of solutions of the blackfold equations that will appear in the following.
\subsection{Minimal surfaces}\label{sec:minimal}
Minimal surfaces is a vast and rich topic in the mathematics literature
with applications that range from soap films to polymer physics (see
e.g. \cite{Nitsche:2011} for a historical perspective of minimal
surfaces). These surfaces, within the mathematics literature, are
defined as the critical points of the induced area functional
\beq \label{eq:area}
\mathcal{A}[X^{\mu}(\sigma^{a})]=\int_{\mathcal{W}_{p+1}}d\sigma^{p+1}\sqrt{|\gamma|}~~.
\eeq
Here $\sigma^{a}$ are coordinates on the surface, 
$X^{\mu}(\sigma^{a})$ parametrises the $(p+1)$-dimensional surface $\mathcal{W}_{p+1}$ in the ambient 
(background, embedding) space(-time) with metric $g_{\mu\nu}$, and $\gamma$ is the determinant of the
induced metric 
\beq
\gamma_{ab}= \partial_a X^\mu \partial_b X^\nu g_{\mu\nu}
\eeq
(and see e.g.\ \cite{MeeksReview} for different (but equivalent) definitions and characterisations
of minimality).
We should note here that the mathematical terminology is somewhat (and
uncharacteristically) imprecise, as these surfaces are called minimal surfaces regardless of whether or
not the area is actually a minimum (and not some other extremum or critical point) of the area functional.
It might be more appropriate to refer to them as extremal surfaces, but we will follow the standard terminology
here.

Classically, most of the work done on minimal surfaces has been on two-dimensional surfaces embedded in
Euclidean three-dimensional space $\mathbb{R}^3$, equipped with the standard Euclidean metric with line
element
\beq \label{ds:r3}
d\mathbb{E}^{2}_{(3)}=dx_1^2+dx_2^2+dx_3^2~~,
\eeq
and hence are codimension one surfaces. Finding the critical points of \eqref{eq:area} is then equivalent 
to finding solutions to the minimal surface equation $K=0$, where $K$ is the mean extrinsic curvature of the
surface. In higher dimensions, and for surfaces of arbitrary codimension, the minimal surface equation 
takes the form
\beq \label{eq:min}
K^{i}=0~~,
\eeq
where the index $i$ labels the transverse directions to the surface embedding. The mean extrinsic curvature is defined as $K^{i}=\gamma^{ab}{K_{ab}}^{i}$ where ${K_{ab}}^{i}$ is the extrinsic curvature of the embedding given by
\beq \label{eq:ext}
{K_{ab}}^{i}={n^{i}}_{\mu}\partial_a {u^{\mu}}_{b}+{n^{i}}_{\rho}{\Gamma^{\rho}}_{\mu\nu}{u^{\mu}}_{a}{u^{\nu}}_{b}~~,
\eeq
where ${\Gamma^{\rho}}_{\mu\nu}$ is the Christoffel connection associated with the ambient metric $g_{\mu\nu}$ whereas ${n^{i}}_{\mu}$ projects orthogonally to $\mathcal{W}_{p+1}$ and satisfies the relations ${n^{i}}_{\mu}{u^{\mu}}_{a}=0$ and ${n^{i}}_{\mu}{n_{j}}^{\mu}=\delta^{i}_{j}$, where ${u^{\mu}}_{a}=\partial_a X^{\mu}$ projects along the surface $\mathcal{W}_{p+1}$. 

There is a vast number of two-dimensional minimal surfaces embedded into $\mathbb{R}^{3}$ (see e.g. \cite{MeeksReview} for an overview), which can have multiple genus and self-intersections but must always be non-compact. Two of the simplest examples of minimal surfaces in $\mathbb{R}^{3}$ are the so called \emph{ruled} surfaces consisting of the plane $\mathbb{R}^{2}$ and the helicoid. Both of these can be described by the embedding 
\beq \label{e:helicoid}
X^{1}(\rho,\theta)=\rho \cos(a\theta)~~,~~X^2(\rho,\theta)=\rho \sin(a\theta)~~,~~X^3(\rho,\theta)=\lambda \theta~~,
\eeq
into $\mathbb{R}^{3}$ with metric \eqref{ds:r3}, where $\lambda/a$ is the pitch of the helicoid. If we set
$\lambda=0$ then we recover the embedding of the plane $\mathbb{R}^{2}$. This example will play a significant role when we look at black hole horizon geometries in Sec.~\ref{sec:examples}.

The problem of finding minimal surfaces in $\mathbb{R}^{3}$ has been
partially solved since finding solutions to the complicated second
order differential equation \eqref{eq:min} has been reduced to
finding holomorphic functions of one complex variable by using
the Weierstrass-Enneper representation of minimal surfaces (see
e.g. \cite{MeeksReview}). In $D$-dimensional Euclidean spaces
$\mathbb{R}^{(D)}$ or other spaces such as $D$-dimensional spheres
$\mathbb{S}^{(D)}$, this tool is not generally available and hence
finding minimal surfaces is a more complex task. In $\mathbb{R}^{(D)}$
generalisations of certain minimal surfaces are available, such
as the planes, helicoids, catenoids \cite{Barbosa1981, Jorge:1984,
Hoppe:2013}, Enneper's surface \cite{Choe:1996} and Riemann minimal
surfaces \cite{Kaabachi_riemannminimal} but very few cases are
known. In $\mathbb{S}^{3}$,  the equatorial 2-sphere and the Clifford torus
constitute the simplest examples of minimal surfaces but also more
non-trivial examples such as Lawson surfaces have been constructed
(see \cite{Brendle2013} for a recent overview of the results). Minimal
surfaces in Lorentzian space-times $\mathbb{L}^{(D)}$ have also been
considered in the mathematics literature and some examples of minimal
surfaces are known (see e.g. \cite{Mira:2003, LeeS:2008, Wook:2011}
for a selection of minimal surfaces). However, as we will explain
in Sec.~\ref{sec:conclusions}, these are not of use for the purposes of
this work. 

\subsection*{Minimal surface equation in $\mathbb{R}^{3}$}
Two-dimensional minimal surfaces embedded in
$\mathbb{R}^{3}$ can be described in terms of what is known as a Monge parametrisation, which takes the form
\beq
X^{1}(u,v)=u~~,~~X^{2}(u,v)=v~~,~~X^{3}(u,v)=f(u,v)~~.
\eeq
Therefore, these minimal surfaces are described in terms of a single function $f(u,v)$ of two variables $u,v$. Evaluating explicitly the normal vector $n_\rho$ in $\mathbb{R}^{3}$ we find\footnote{We have omitted the transverse index $i$ from ${n^{i}}_{\rho}$ since the surface is of codimension one.}
\beq
n_\rho=\frac{1}{\sqrt{1+f_u^2+f_v^2}}\left(-f_u,-f_v,1\right)~~,
\eeq
where $f_u=\partial f/\partial u$ and $f_v=\partial f/\partial v$. Using this in \eqref{eq:min} one finds the minimal surface equation
\beq \label{eq:minm}
f_{uu}(1+f_{v}^2)+f_{vv}(1+f_u^2)-2f_u f_v f_{uv}=0~~,
\eeq
where $f_{uu}=\partial_u f_u$, $f_{vv}=\partial_v f_v$ and $f_{uv}=\partial_u f_v$. Eq.~\eqref{eq:minm} will
play an important role when we explore minimal surfaces solutions in non-trivial ambient space-times.

\subsection{The blackfold approach} \label{sec:validity}
The blackfold approach describes the effective dynamics of long-wavelength perturbations of black branes \cite{Emparan:2009cs, Emparan:2009at}. It consists of wrapping asymptotically flat black $p$-branes on an arbitrary $(p+1)$-dimensional submanifold $\mathcal{W}_{p+1}$ placed in a background space-time with metric $g_{\mu\nu}$. In this work we are interested in patching $\mathcal{W}_{p+1}$ with neutral vacuum black $p$-branes endowed with the metric \cite{Emparan:2009at}
\beq \label{ds:blackp}
ds^2_p=\left(\gamma_{ab}(\sigma^{c})+\frac{r_0^n(\sigma^{c})}{r^n}u_{a}(\sigma^{c})u_{b}(\sigma_{c})\right)d\sigma^{a}d\sigma^{b}+\frac{dr^2}{1-\frac{r_0^n(\sigma^{c})}{r^n}}+r^2d\Omega_{(n+1)}^2+\dots~.
\eeq
Here $\sigma^{c}$ denotes the set of coordinates on $\mathcal{W}_{p+1}$ and the worldvolume indices $a,b,c..$
run over $a=1,...,p+1$. The black brane metric \eqref{ds:blackp} is characterised by a set of fields
$\gamma_{ab},u^{a},r_0$ to leading order that vary slowly over $\mathcal{W}_{p+1}$ while higher-order
corrections - represented by the 'dots' - involve derivatives of $\gamma_{ab},u^{a},r_0$. As in our discussion
of minimal surfaces above, the worldvolume tensor $\gamma_{ab}=g_{\mu\nu}(X^{\alpha})\partial_a X^{\mu}\partial_b X^{\nu}$ is the induced metric on $\mathcal{W}_{p+1}$ and $X^{\alpha}(\sigma^{c})$ the set of mapping functions describing the location of the submanifold in the ambient space-time. The vector $u^{a}$ denotes the local boost velocity of the brane and is normalised such that $u^{a}u_{a}=-1$, while $r_0$ is the local brane thickness, i.e., the horizon size of the transverse $(n+2)$-dimensional part of the metric \eqref{ds:blackp}. We have chosen to parameterise the number of space-time dimensions such that $D=n+p+3$.

\subsubsection*{Effective dynamics}
In this work we are interested in stationary configurations, embedded in a stationary background with Killing vector field $k^{\mu}$, rotating with angular velocity $\Omega^a$ in each of the worldvolume rotational isometry directions $\phi_a$. These are characterised by a worldvolume Killing vector field of the form
\beq \label{eq:kvf}
\textbf{k}^{a}\partial_{a}=\partial_\tau+\Omega^{a}\partial_{\phi_a}~~,
\eeq
where $\tau$ labels the worldvolume time-like direction, and respective moduli on $\mathcal{W}_{p+1}$ given by $-|\partial_t|^2=-|\partial_\tau|^2=R_0^2$ and $|\partial_{\phi_a}|^2=R_a^2$. This worldvolume Killing vector field is required to map to the background Killing vector field $k^{\mu}$ and hence one must have that $k^{\mu}={u^{\mu}}_{a}\textbf{k}^{a}$ where ${u^{\mu}}_{a}=\partial_{a}X^{\mu}$ projects onto worldvolume directions.\footnote{This is required in order for the local thermodynamics of the brane \eqref{ds:blackp} to be well defined on the worldvolume, see \cite{Emparan:2009at} for a discussion of this point.} We note that we have assumed that the modulus of the time-like Killing vector field of the background space-time $\partial_t$ has the same norm as the worldvolume time-like Killing vector field $\partial_\tau$. This will be the case for all configurations presented in this paper, examples where this is not the case can be found in \cite{Armas:2012bk, Armas:2013ota}.

For stationary configurations, the effective dynamics of blackfolds, to second order in derivatives, is described in terms of a free energy functional of the form \cite{Armas:2013hsa, Armas:2013goa}\footnote{Here we have ignored backreaction corrections and also corrections due to spin in transverse directions to the worldvolume. See \cite{Armas:2011uf} for a discussion of backreaction corrections and \cite{Armas:2013hsa, Armas:2014rva} where spin corrections are included into the free energy.}
\beq\label{eq:free}
\begin{split}
\mathcal{F}[X^{i}]=-\int \sqrt{-\gamma}d^{p}\sigma\Big(P&+\upsilon_1\textbf{k}^{-1}\nabla_{a}\nabla^a\textbf{k}+\upsilon_2 \mathcal{R}+\upsilon_{3}u^{a}u^{b}\mathcal{R}_{ab} \\
&+\lambda_1 K^{i}K_{i}+\lambda_2 K^{abi}K_{abi}+\lambda_3 u^{a}u^{b}{K_{a}}^{ci}K_{bci}\Big)~~.
\end{split}
\eeq
Here we have introduced the indices $i,j,k...$ that run over $i=1,..,n+2$ to label orthogonal directions to $\mathcal{W}_{p+1}$. Furthermore, if the worldvolume Killing vector field is hypersurface orthogonal with respect to the spatial metric we can write $\sqrt{-\gamma}d\sigma^{p}=R_0dV_{(p)}$ where $R_0$ is the norm on $\mathcal{W}_{p+1}$ of the time-like Killing vector field $\partial_t$ associated with the time-like direction $t$ of the ambient space-time and $dV_{(p)}$ is the volume form on the spatial part of the worldvolume. 

The leading order term in \eqref{eq:free} is the pressure $P$ of the
effective fluid living on the worldvolume part of \eqref{ds:blackp}
and is responsible for the modified dynamics of blackfolds when
compared with the area-minimising action \eqref{eq:area} for minimal
surfaces. The other second order contributions are proportional to
the gradient of $\textbf{k}$, which measures variations of the local fluid temperature,\footnote{This term can be exchanged by a term proportional to the square of the fluid acceleration or the fluid vorticity $\omega_{ab}\omega^{ab}$, with the fluid velocity being given by $u^{a}$, if no boundaries are present \cite{Armas:2013hsa}. If the worldvolume has boundaries $\omega_{ab}\omega^{ab}$ may be independent. However, for all configurations we consider here $\omega_{ab}=0$ and hence we do not need to consider it.} the induced Ricci scalar $\mathcal{R}$
and the induced Ricci tensor $\mathcal{R}_{ab}$ on $\mathcal{W}_{p+1}$,
the mean extrinsic curvature $K^{i}=\gamma^{ab}{K_{ab}}^{i}$ and other
contractions with the extrinsic curvature tensor ${K_{ab}}^{i}$ defined
in \eqref{eq:ext}.
The set of scalars $P,\upsilon_i,\lambda_i$
depend only on the local temperature $\mathcal{T}$ of the brane
\eqref{ds:blackp}, which is related to the global temperature $T$
of the configuration via a local redshift $T=\textbf{k}\mathcal{T}$
where 
\beq
\textbf{k}=\sqrt{-\gamma_{ab}\textbf{k}^{a}\textbf{k}^{b}}
\eeq
is the modulus of the Killing vector field \eqref{eq:kvf}.

\subsubsection*{Equations of motion to leading order}
In order to scan for possible horizon topologies it is not necessary to consider more than the leading order term in \eqref{eq:free}, since all the other terms in \eqref{eq:free} are correction terms to the leading order dynamics. However, as we will see later in this section, they are necessary for understanding the regime of validity of this approach. Focusing on the leading order term, the equations of motion and boundary conditions that arise from varying \eqref{eq:free} take the form
\beq \label{eq:bfeom}
\nabla_{a}T^{ab}=0~~,~~T^{ab}{K_{ab}}^{i}=0~~,~~T^{ab}\eta_{b}|_{\partial\mathcal{W}_{p+1}}=0~~,
\eeq
where $\eta_{b}$ is a unit normalised normal vector to the worldvolume boundary $\partial\mathcal{W}_{p+1}$ and the effective stress-energy tensor $T^{ab}$ takes the perfect fluid form
\beq \label{eq:stress}
T^{ab}=P\gamma^{ab}-P'\textbf{k}u^{a}u^{b}~~,~~u^{a}=\frac{\textbf{k}^{a}}{\textbf{k}}~~,
\eeq
where the prime denotes a derivative with respect to $\textbf{k}$ and the fluid velocity $u^{a}$ is aligned
with the worldvolume Killing vector field $\textbf{k}^a$. Because of worldvolume general covariance,
the stress tensor \eqref{eq:stress} automatically solves the conservation equation in \eqref{eq:bfeom}. Therefore only the extrinsic equation $T^{ab}{K_{ab}}^{i}=0$ and the boundary condition are non-trivial. For the black branes \eqref{ds:blackp} the pressure $P$ takes the form \cite{Emparan:2009at}
\beq \label{eq:pressure}
P=-\frac{\Omega_{(n+1)}}{16\pi G}r_0^{n}~~,~~r_0=\frac{n}{4\pi T}\textbf{k}~~,
\eeq
where $\Omega_{(n+1)}$ is the volume of the unit $(n+1)$-sphere and $G$ is Newton's gravitational constant. In this case Eqs.~\eqref{eq:bfeom} reduce to
\beq \label{eq:bfeom1}
K^{i}=nu^{a}u^{b}{K_{ab}}^{i}~~,~~\textbf{k}|_{\partial\mathcal{W}_{p+1}}=0~~.
\eeq
These equations are known as the blackfold equations \cite{Emparan:2009cs, Emparan:2009at}. The first equation in \eqref{eq:bfeom1} exhibits the difference between blackfold dynamics and area-minimising actions \eqref{eq:area} due to the presence of the generically non-vanishing contraction $u^{a}u^{b}{K_{ab}}^{i}$. If the brane is rotating this contraction can be thought of as a repulsive centrifugal force, while if the brane is static but embedded in a space-time with a limiting surface it can be thought of as a force due to the non-trivial background gravitational potential. The second equation in \eqref{eq:bfeom1} expresses the fact that at the boundary the brane must be moving with the speed of light and hence the brane thickness $r_0$ must vanish there. This is the reason why limiting surfaces can provide a mechanism for making certain geometries compact. We note that while the first equation in \eqref{eq:bfeom1} has been shown to arise as a constraint equation when solving Einstein equations for a perturbed black brane metric \eqref{ds:blackp} \cite{Emparan:2007wm,Camps:2012hw}, blackfolds with boundaries were not considered in \cite{Emparan:2007wm,Camps:2012hw} and hence recovering the second equation in \eqref{eq:bfeom1} from gravity is still an open problem.

It was shown in \cite{Camps:2012hw} that, for worldvolumes without boundaries, for every solution of the blackfold equations \eqref{eq:bfeom1} there always exists a perturbed near-horizon metric \eqref{ds:blackp} which is regular on the horizon. The blackfold method for constructing the perturbed metric for a given worldvolume geometry relies on a matched asymptotic expansion. In this expansion, the metric in the far region $r\gg r_0$, obtained by solving the linearised Einstein equations in a given background space-time with a given source, serves as a boundary condition for the metric in the near-horizon region \eqref{ds:blackp}. It is unclear at present whether horizon regularity can be reconciled with arbitrary asymptotic boundary conditions. Ref.~\cite{LeWitt:2009qx} provides an example where horizon regularity and plane wave boundary conditions in the sense defined in \cite{LeWitt:2008zx} could not be simultaneously fulfilled. We will briefly come back to this issue in Sec.~\ref{sec:conclusions}, and comment on in which way our setting differs from that considered in \cite{LeWitt:2009qx}.

\subsubsection*{Thermodynamics}
The thermodynamic properties of these configurations can be obtained directly from the free energy functional \eqref{eq:free} by noting that the free energy satisfies the relation
\beq\label{eq:freer}
\mathcal{F}=M-T S-\Omega^{a}J_{a}~~,
\eeq
where $M$ is the total mass, $S$ is the entropy and $J_a$ is the angular momentum associated with each rotational isometry direction $\phi_a$. Given the free energy $\mathcal{F}$ we can obtain the entropy and angular momenta simply by \cite{Armas:2014rva}
\beq \label{eq:thermo}
S=-\frac{\partial\mathcal{F}}{\partial T}~~,~~J_a=-\frac{\partial\mathcal{F}}{\partial \Omega^a}~~,
\eeq
and hence the mass via \eqref{eq:freer}. It is easy to show that the branes \eqref{ds:blackp} satisfy the relation $\mathcal{F}=TS/n$ and hence throughout this paper we avoid presenting expressions for $S$ as we always present the free energy $\mathcal{F}$ for all configurations. These configurations satisfy a Smarr-type relation of the form
\beq\label{eq:smarr}
(n+p)M-(n+p+1)\left(TS+\Omega^{a}J_a\right)=\boldsymbol{\mathcal{T}}~~,
\eeq
where $\boldsymbol{\mathcal{T}}$ is the total tension to leading order defined as
\beq \label{eq:tension}
\boldsymbol{\mathcal{T}}=-\int dV_{(p)}R_0\left(\gamma^{ab}+\xi^{a}\xi^{b}\right)T_{ab}~~,
\eeq
and $\xi^{a}\partial_a=\partial_\tau$ is the worldvolume Killing vector field associated with time translations of the worldvolume. Here we have assumed that $\partial_\tau$ is hypersurface orthogonal with respect to the spatial worldvolume metric. For configurations in asymptotically flat space-time (without non-compact directions) the total tension $\boldsymbol{\mathcal{T}}$ vanishes and we recover the usual Smarr relation for asymptotically flat black holes. 

\subsubsection{Regime of validity} \label{sec:rvalidity}
As mentioned at the beginning of this section, the blackfold approach is a perturbative expansion in the fields $\gamma_{ab},u^{a},r_0$ and as such its regime of validity is also defined at each step in the perturbative expansion. Generically, one must require that at each order in the expansion, the length scales associated to each of the geometrical invariants describing the intrinsic and extrinsic geometry of the blackfold to next order must be large when compared to the local brane thickness $r_0$. To be precise, it can be shown by dimensional analysis that the transport coefficients $\upsilon_i,\lambda_i$ scale as $r_0^{n+2}$,\footnote{Since we know that the transport coefficients $\lambda_i$ scale as $r_0^{n+2}$ \cite{Armas:2013hsa} the remaining scalings can be obtained using Gauss-Codazzi equations and similar relations relating fluid data given in \cite{Armas:2013hsa}. Alternatively, one may use the thermodynamic identities found in \cite{Armas:2014rva}.} thus by \eqref{eq:pressure} as $\textbf{k}^{n+2}$, and therefore by looking at the second order free energy functional \eqref{eq:free} one must require that to leading order
\beq \label{eq:inv}
\!\!\!\!r_0\ll\left(|\frac{\nabla_{a}\nabla^a\textbf{k}}{\textbf{k}}|^{-\frac{1}{2}},|\mathcal{R}|^{-\frac{1}{2}},|u^{a}u^{b}\mathcal{R}_{ab}|^{-\frac{1}{2}},|K^{i}K_{i}|^{-\frac{1}{2}},|K^{abi}K_{abi}|^{-\frac{1}{2}},|u^{a}u^{b}{K_{a}}^{ci}K_{bci}|^{-\frac{1}{2}}\right).
\eeq
This ensures that locally on $\mathcal{W}_{p+1}$ the geometry can be seen as an asymptotically flat brane \eqref{ds:blackp}. The geometric invariants presented in \eqref{eq:inv} correspond to a particular choice \cite{Armas:2013hsa} as these are related to the background Riemann curvature via the Gauss-Codazzi equation
\beq
R_{abcd}=\mathcal{R}_{abcd}-{K_{ac}}^{i}{K_{bdi}}+{K_{bc}}^{i}{K_{adi}}~~,
\eeq
where $R_{abcd}$ is the projection of the Riemann curvature tensor of the ambient space-time along worldvolume directions. Contracting this equation with combinations of $\gamma^{ab}$ and $u^{a}$ one finds the two equations
\beq \label{eq:gc2}
\begin{split}
R_{||}&=\mathcal{R}-K^{i}K_{i}+{K^{ab}}_{i}{K_{ab}}^{i}\\
R_{//}&=u^{a}u^{c}\mathcal{R}_{ac}-u^{a}u^{c}{K_{ac}}^{i}K_{i}+u^{a}u^{c}{K_{ab}}^{i}{K^{b}}_{ci}~~,
\end{split}
\eeq
where we have defined $R_{||}=\gamma^{ac}\gamma^{bd}R_{abcd}$ and $R_{//}=u^{a}u^{c}\gamma^{bd}R_{abcd}$. Using this, we will recast the free energy functional \eqref{eq:free} in a way which will be more suitable for the study of minimal surface embeddings. We note that the second term on the r.h.s. of the second equation in \eqref{eq:gc2} can be exchanged, to second order, by a term proportional to $K^{i}K_{i}$ using the equations of motion \eqref{eq:bfeom1} as explained in \cite{Armas:2013hsa, Armas:2013goa}. Therefore we can write the free energy functional \eqref{eq:free} as
\beq\label{eq:free1}
\begin{split}
\mathcal{F}[X^{i}]=-\int \sqrt{-\gamma}d\sigma^{p}\Big(P&+\upsilon_1\textbf{k}^{-1}\nabla_{a}\nabla^a\textbf{k}+(\upsilon_2-\lambda_2) \mathcal{R}+(\upsilon_{3}-\lambda_3)u^{a}u^{b}\mathcal{R}_{ab} \\
&+(\lambda_1+\lambda_2+\frac{\lambda_3}{n}) K^{i}K_{i}+\lambda_2 R_{||}+\lambda_3 R_{//}\Big)~~.
\end{split}
\eeq
The validity conditions to leading order can then be recast as
\beq \label{eq:req}
r_0\ll\left(|\frac{\nabla_{a}\nabla^a\textbf{k}}{\textbf{k}}|^{-\frac{1}{2}},|\mathcal{R}|^{-\frac{1}{2}},|u^{a}u^{b}\mathcal{R}_{ab}|^{-\frac{1}{2}},|K^{i}K_{i}|^{-\frac{1}{2}},|R_{||}|^{-\frac{1}{2}},|R_{//}|^{-\frac{1}{2}}\right).
\eeq
In order to show the usefulness of these manipulations, we apply this to the case of (Anti)-de Sitter space-time. Using the fact that it is maximally symmetric $R_{\mu\nu\lambda\rho}=L^{-2}(g_{\mu\lambda}g_{\nu\rho}-g_{\mu\rho}g_{\nu\lambda})$ we compute the background curvature invariants
\beq
|R_{||}|^{-\frac{1}{2}}=\frac{L}{\sqrt{p(p+1)}}~~,~~|R_{//}|^{-\frac{1}{2}}=\frac{L}{\sqrt{p}}~~,
\eeq
where $L$ is the (Anti)-de Sitter radius. Therefore one obtains the requirement
\beq
r_0\ll L~~,
\eeq
which justifies the arguments used in \cite{Caldarelli:2008pz, Armas:2010hz}.

If we focus on minimal surfaces, which by definition satisfy condition
\eqref{eq:min} then of the six invariants involved in \eqref{eq:req}
only five are non-trivial. In this case the perturbative expansion
\eqref{eq:free1}, to second order in derivatives, can be seen as a purely
hydrodynamic expansion in a curved background.

\subsubsection*{Blackfolds with boundaries}
Most configurations analysed in this paper have boundaries, which as mentioned above, are described by the condition $\textbf{k}=0$. The effective free energy \eqref{eq:free} is given by a derivative expansion, and
is a priori unrelated to effects due to the presence of boundaries. In particular, as a long-wavelength effective theory, the blackfold approach will not be able to probe distances below a certain scale that we denote by $\ell$. If $\rho_+$ is the location of the boundary and $\epsilon$ the distance away from it, then one must require 
\beq
\rho_+-\epsilon\gg\ell~~,
\eeq
for the approximation to be valid. In fact, the existence of this break down of the approximation can be seen directly from the requirement \eqref{eq:req} associated with the invariant $|\textbf{k}^{-1}\nabla_{a}\nabla^a\textbf{k}|^{-1/2}$. In general we have that
\beq \label{eq:T}
|\textbf{k}^{-1}\nabla_{a}\nabla^a\textbf{k}|\propto \textbf{k}^{-4}~~,
\eeq
and therefore the requirement \eqref{eq:req} reduces to
\beq
r_+\ll \textbf{k}~~,~~r_+=\frac{n}{4\pi T}~~.
\eeq
As $\textbf{k}$ approaches $0$ at the boundary, it is not possible to satisfy this condition, signalling a possible break down of the expansion.

The effective description of blackfolds is given in terms of a hydrodynamic and elastic expansion, however, when boundaries are present one should also consider a \emph{boundary expansion} in powers of $\epsilon$, in which case the description can become increasingly better with the addition of higher-order corrections. We note that what is considered leading order terms or higher-order corrections in the effective free energy \eqref{eq:free} in a derivative expansion (either hydrodynamic or elastic) is not necessarily the same from the point of view of a boundary expansion. In fact, by looking at \eqref{eq:T} we see that the correction term in \eqref{eq:free}, from the point of view of a derivative expansion, associated with $|\textbf{k}^{-1}\nabla_{a}\nabla^a\textbf{k}|$ scales as $\textbf{k}^{n-2}$ and hence, as one approaches the boundary $\textbf{k}=0$, this term is not sub-leading when compared with $P\propto\textbf{k}^{n}$. 

Presumably, though not necessarily, for a given black hole, the blackfold description may be the correct one in a patch of the geometry while another patch may not be locally described by a metric of the form \eqref{ds:blackp}. Examples of situations where this behaviour may be the case are found in the context of BIon solutions \cite{Grignani:2010xm, Grignani:2011mr} and M2-M5 intersections \cite{Niarchos:2012pn, Niarchos:2012cy, Niarchos:2013ia}. If the geometry has boundaries, this could potentially signify that the geometry near the boundary would have to be replaced by something else than \eqref{ds:blackp} but which would smoothly connect to \eqref{ds:blackp}. Alternatively, one can demand the existence of a smooth limit of the blackfold description near the boundary under the assumption that, even though the approximation is expected to break down, the existence of a smooth limit when $r_0\to0$ yields the correct gravitational description. This, as we will review in Sec.~\ref{sec:blackdisc}, is exactly what happens for ultraspinning Myers-Perry black holes and can be seen by analysing the exact analytic metric as in \cite{Armas:2011uf}. This illustrates that in certain circumstances the blackfold approach appears to work better than one a priori has the right to expect.

While a deeper understanding of these issues is of interest, this is beyond the scope of this paper. Instead, and in the absence of exact analytic solutions, we will construct several blackfold geometries with boundaries assuming that a well defined boundary expansion exists and show, in Sec.~\ref{sec:blackdisc}, that their thermodynamic properties can be obtained exactly, to leading order in $\epsilon$, regardless of what the correct boundary description might be.

\subsubsection*{Multiple blackfolds and self-intersections}
It is important to mention that the second order corrected free energy \eqref{eq:free} has not taken into account corrections due to gravitational backreaction or gravitational self-force. In particular, when one is considering a configuration of multiple
worldvolumes, then the blackfold approximation is expected to break down when the distance $d$ between two worldvolumes becomes of the order of $r_0$. One therefore also needs to require \cite{Emparan:2009vd} 
\begin{equation} \label{r:d}
r_0 \ll d\;\;.
\end{equation}
A fortiori this means that intersecting (or self-intersecting - see Sec.~\ref{sec:catenoid} for an example)
worldvolume configurations lie outside the regime of validity of the blackfold approximation, but one might
expect gravitational backreaction to regularise or smooth out such intersections, much as in the case of
backreacted intersecting brane geometries in string theory, and it would certainly be of interest to investigate this further.


\subsection{Plane waves}  \label{sec:pw}

Among the background space-times that we will consider in this work are plane waves, 
and here we briefly summarise the properties of plane waves that we will make use of
later on.

Plane wave space-times have metrics of the form
\beq
ds^2 = 2 dudv -2 A(u,x^q) du^2 + d\mathbb{E}^{2}_{(D-2)}(x^{q})~~,
\eeq
where,
\beq
d\mathbb{E}^{2}_{(D-2)}(x^{q}) = \sum_{q=1}^{D-2}(dx^q)^2 
\eeq
is the $(D-2)$-dimensional Euclidean metric describing the planar wave front of the gravitational 
wave, and the function $A(u,x^{q})$ describing the wave profile is a quadratic function 
\beq
A(u,x^{q})=A_{qr}(u)x^{q}x^{r}
\eeq
of the transverse coordinates, $q,r=1,...,D-2$. This quadratic function encodes all 
the non-vanishing components of the Riemann tensor, namely 
\beq 
R_{uqur} = 2 A_{qr}(u)
\eeq 
(the somewhat unconventional prefactor of 2 here and in 
the metric serves the purpose of avoiding a proliferation of factors of 2 later on, when 
using  standard time and space coordinates $(t,y)$ instead of the null(ish) coordinates
$(u,v)$). This implies that the only non-vanishing component of the Ricci tensor is
\beq
R_{uu} = 2 \text{Tr}(A_{qr})~~,
\eeq
and that the Ricci scalar is zero, 
\beq
R=0\;\;.
\eeq
In particular, therefore, solutions of the non-linear vacuum Einstein equations correspond to 
transverse traceless matrices $A_{qr}(u)$ (``gravitons''). 

For the blackfold approach we are 
interested in stationary background space-times, and therefore we will focus on time-independent
plane waves, with a $u$-independent profile 
\beq
A(x^{q})=A_{qr}x^{q}x^{r}\;\;.
\eeq
Even though this is not manifest in these coordinates, these space-times are homogeneous
(even symmetric) and, in particular, the origin $x^q=0$ of the transverese coordinates is 
not in any way a special locus in space-time (only in these coordinates).
By introducing the coordinates
\beq \label{eq:uv}
u = (y+t)/\sqrt{2}\quad,\quad v = (y-t)/\sqrt{2}\;\;,
\eeq
these metrics then take the standard stationary (but not static) form
\beq
\label{ds:pw0}
ds^2=-(1+A(x^{q}))dt^2+(1-A(x^{q}))dy^2-2A(x^{q})dtdy+d\mathbb{E}^{2}_{(D-2)}(x^{q})~~,
\eeq
with Killing vector $\partial_t$. In these coordinates, the components of 
the Riemann and Ricci tensors are
\beq
R_{\mu q \nu r} = A_{qr} \quad,\quad R_{\mu\nu} = \text{Tr}A_{qr} ~~,
\eeq
for $\mu,\nu \in \{t,y\}$. 

In this time-independent case, constant $SO(D-2)$ transformations of the transverse coordinates can be
used to diagonalise the constant symmetric matrix $A_{qr}$, 
\beq
A_{qr} = A_q \delta_{qr} \;\;.
\eeq
Moreover, by a boost in the $(t,y)$ (or $(u,v)$) plane, the eigenvalues can be rescaled by an 
overall positive factor, 
\beq
(u,v) \rightarrow (\lambda u,\lambda^{-1}v) \quad\Rightarrow\quad A_q \rightarrow \lambda^2 A_q\;\;.
\eeq
Thus a priori only ratios of eigenvalues of $A_{qr}$ have an invariant physical meaning. However, 
via the embedding of $p$-branes into the plane wave background, in particular via the identification
$t=\tau$ of the worldvolume and background time coordinates, this boost invariance is broken and 
the magnitudes of the individual eigenvalues have physical significance. Moreover, such an embedding
will reduce the transverse $SO(D-2)$-invariance, and thus in principle off-diagonal matrix elements
could be present. However, in none of the numerous examples that we have investigated did such 
non-diagonal elements turn out to be particularly useful (let alone necessary). For that reason, and
in order not to unduly burden the notation, we will concentrate on diagonal wave profiles in the following
(and only add a comment here and there on off-diagonal contributions).

\subsection{Classes of embedding space-times and classes of embedded geometries} \label{sec:cembed}
In this paper we consider three different classes of $D$-dimensional Lorentzian 
embedding space-times $\mathbb{L}^{(D)}$ 
into which we will embed different classes of geometries. Some of these space-times have inherent limiting surfaces and hence provide an interesting playground for constructing compact minimal surfaces. These are:
\begin{itemize}
\item \textbf{Flat space-time:} we write down the metric of flat space-time in the form
\beq \label{ds:flat}
ds^2=-dt^2+d\mathbb{E}^{2}_{(D-1)}(x^{q})~~,
\eeq
where $d\mathbb{E}^{2}_{(D-1)}(x^{q})$ is the metric on the $(D-1)$-dimensional Euclidean space $\mathbb{E}^{(D-1)}$ parametrised in terms of the coordinates $x^{q}$ where the index $q$ runs over $q=1,...,D-1$. We use the indices $q,r,t,s$ to label space-time directions in $\mathbb{E}^{(D-1)}$. For this class of ambient space-times the background curvature invariants $R_{||},R_{//}$ in \eqref{eq:req} vanish.

\item \textbf{Plane wave space-times:} as discussed above, 
we consider time-independent plane wave space-times equipped with the metric
\eqref{ds:pw0}
\beq \label{ds:pw}
ds^2=-(1+A(x^{q}))dt^2+(1-A(x^{q}))dy^2-2A(x^{q})dtdy+d\mathbb{E}^{2}_{(D-2)}(x^{q})~~,
\eeq
and with 
\beq
A_{qr}=A_q\delta_{qr}\;\;.
\eeq
If at least one of the eigenvalues is negative, then the plane wave space-time will have a limiting surface 
where the time-like Killing vector field $\partial_t$ becomes null, i.e., where $(1+A(x^{q}))=0$.

We focus on the class of plane waves which are solutions of the
vacuum Einstein equations, therefore we impose $\text{Tr}A_{qr} =0$.
For these space-times the background curvature invariants $R_{||},R_{//}$
in \eqref{eq:req} depend on the precise form of the embedding. For the
two types of embeddings that we consider below these invariants either
vanish or are given in terms of linear combinations of the eigenvalues $A_q$.

\item \textbf{de Sitter space-times:} we consider de Sitter space-times in the presence of a black hole, where the metric is written as
\beq \label{ds:ds1}
ds^2=-f(r)dt^2+f(r)^{-1}dr^2+r^2d\Omega_{(D-2)}^2~~,~~f(r)=1-\frac{r_m^{D-3}}{r^{D-3}}-\frac{r^2}{L^2}~~.
\eeq
When $r_m=0$ the black hole horizon is no longer there and we recover pure de Sitter with radius $L$. The range of the coordinate $r$ lies in between the two real and positive roots of $f(r)=0$. Therefore, this class of space-times inherits two limiting surfaces located at the black hole horizon and at the cosmological horizon where the time-like Killing vector field $\partial_t$ becomes null. When $r_m=0$ the only limiting surface is located at the cosmological horizon where $r=L$.

It will sometimes be useful to introduce spatially conformally flat coordinates by defining a new coordinate $\tilde r$ such that $r^2=h(\tilde r)^{-1}\tilde r^2$ and $f(r)^{-1}dr^2=h(\tilde r)^{-1}d\tilde r^2$. The metric \eqref{ds:ds1} then takes the form
\beq \label{ds:ds2}
ds^2=-f(\tilde r)dt^2+h(\tilde r)^{-1}d\mathbb{E}^{2}_{(D-1)}(x^{q})~~,~~\tilde r^2=\sum_{q=1}^{D-1}x_{q}^{2}~~,
\eeq
where the index $q$ runs over $q=1,...,D-1$. We will also write the spatial part of the metric \eqref{ds:ds2}, as the metric $d\tilde{\mathbb{E}}_{(D-1)}^2(x^q)=h(\tilde r)^{-1}d\mathbb{E}^{2}_{(D-1)}(x^{q})$, on the conformally Euclidean space ${\tilde {\mathbb{E}}}^{(D-1)}$.

\end{itemize}

In these space-times we embed three classes of worldvolume geometries which are either minimal or are constructed using minimal surfaces in $\mathbb{E}^{(D-1)}$ (in flat and de Sitter space-times) or in $\mathbb{E}^{(D-2)}$ (in plane wave space-times) as the starting point. In App.~\ref{sec:blackodd} we focus on a class of worldvolume geometries with constant mean curvature related to the example presented in \eqref{eq:dspodd}. These classes of embeddings, which may be static or stationary with Killing vector field \eqref{eq:kvf}, have boundaries when $\textbf{k}=0$ and are of the following form:
\begin{itemize}
\item \textbf{Type I:} this class of $(p+1)$-dimensional worldvolume geometries have induced metric
\beq \label{ds:type1}
\textbf{ds}^2=-R_0^2(X^q_M)d\tau^2+d\tilde{\mathbb{E}}_{(p)}^2(X^q_M)~~,
\eeq
where $\textbf{ds}^2=\gamma_{ab}d\sigma^{a}d\sigma^{b}$ is the induced volume element while $d\tilde{\mathbb{E}}_{(p)}^2(X^q_M)$ is the induced $p$-dimensional spatial metric obtained by restricting the metric $d\tilde{\mathbb{E}}^{2}_{(D-1)}$ (in the case of flat or de Sitter space-times), or $d{\mathbb{E}}^{2}_{(D-2)}$ (in the case of plane wave space-times),  to the minimal embedding $x^{q}=X^q_M$ with respect to ${{\mathbb{E}}}^{(D-1)}$ or $\mathbb{E}^{(D-2)}$. These geometries can be obtained by choosing the embedding coordinates $(t,x^{q})=(\tau,X^{q}_M)$ in the space-times \eqref{ds:flat} and \eqref{ds:ds2} or by choosing $(t,y,x^{q})=(\tau,0,X^{q}_M)$ in the space-times \eqref{ds:pw}. Furthermore, here and in the next two types of embeddings the mapping functions $X^{q}_M$ do not depend on $\tau$. If embedded into plane wave space-times then explicit evaluation of the invariants $R_{||}$ and $R_{//}$ yields
\beq \label{r:pw}
R_{||}=\frac{R_0}{\sqrt{2}}|\gamma^{qr}A_{qr}|^{-\frac{1}{2}}~~,~~R_{//}=R_0|u^{q}u^{r}A_{qr}|^{-\frac{1}{2}}~~.
\eeq
For the purpose of analysing the regime of validity of these geometries it is useful to compute the induced Ricci tensor of the class of metrics \eqref{ds:type1}. This is given by
\beq \label{r:type1}
\mathcal{R}=\mathcal{R}_{\tilde{\mathbb{E}}}-2\frac{\Delta_{\tilde{\mathbb{E}}} R_0}{R_0}~~,
\eeq
where $\mathcal{R}_{\tilde{\mathbb{E}}}$ is the Ricci scalar of the spatial $p$-dimensional metric and $\Delta_{\tilde{\mathbb{E}}}$ is the Laplace operator on that $p$-dimensional space.

\item \textbf{Type II:} this class of $(p+1)$-dimensional worldvolume geometries have induced metric
\beq \label{ds:type2}
\textbf{ds}^2=-R_0^2(X^q_M)d\tau^2\!+\!2(1-R_0^2(X^q_M))d\tau dz\!+\!(2-R_0^2(X^q_M))dz^2\!+\!d\mathbb{E}_{(p-1)}^2(X^q_M)
\eeq
and describe a wave with non-planar wave front whose geometry is described by the induced 
$(p-1)$-dimensional metric $d\mathbb{E}_{(p-1)}^2(X^q_M)$. 
This class of embeddings is obtained only in plane wave space-times \eqref{ds:pw} by choosing the embedding coordinates $(t,y,x^{q})=(\tau,z,X^{q}_M)$ and are non-compact along the $z$-direction. In this case, explicit computation of the invariants $R_{||},R_{//}$ leads to $R_{||}=R_{//}=0$. The induced Ricci scalar for the metrics \eqref{ds:type2} is simply
\beq \label{r:type2}
\mathcal{R}=\mathcal{R}_{\mathbb{E}}~~,
\eeq
where $\mathcal{R}_{\mathbb{E}}$ is the Ricci scalar of the $(p-1)$-dimensional spatial metric $d\mathbb{E}_{(p-1)}^2(X^q_M)$.

\item \textbf{Type III:} this last class of $(p+1)$-dimensional worldvolume geometries have induced metric
\beq \label{ds:type3}
\textbf{ds}^2=-R_0^2(\rho,X^q_M)d\tau^2+H_0^{-2}(\rho,X^q_M)d\rho^2+\rho^2d\Omega_{(p-1)}^2(X^q_M)~~,
\eeq
where the minimal embedding $X^q_M$ is defined on the unit $(p-1)$-sphere $\mathbb{S}^{(p-1)}$. This class of embeddings can be obtained by choosing the mapping functions $(t,r,x^{q})=(\tau,\rho,X^{q}_M)$ in de Sitter space-times \eqref{ds:ds1} where $H_0=R_0$ but it can also be obtained in flat space-time \eqref{ds:flat}, by writing the metric on $\mathbb{E}^{(D-1)}$ as $d\mathbb{E}^{2}_{(D-1)}(x^{q})=dr^2+r^2d\Omega^{2}_{(D-2)}(x^{q})$, where $R_0=H_0=1$, and choosing $(t,r,x^{q})=(\tau,\rho,X^{q}_M)$ or similarly, in plane wave space-times \eqref{ds:pw}, where $H_0=1$.

\end{itemize}

\subsubsection*{Limiting surfaces and validity of embedded geometries}
The classes of embedded geometries presented above may be static or stationary and characterised by a worldvolume Killing vector field of the form \eqref{eq:kvf}. Evaluating explicitly the modulus $\textbf{k}$ we find
\beq
\textbf{k}^2=R_0^2-(\Omega^{a})^2R_{a}^2~~.
\eeq
Therefore, generically these space-times, where the surfaces are embedded, are characterised by limiting surfaces described by the equation $R_0^2-(\Omega^{a})^2R_{a}^2=0$. If the worldvolume geometry is static $\Omega^{a}=0$ (hence $\textbf{k}^2=R_0^2$) then limiting surfaces can also be present as long as $R_0^2=0$ at some point on $\mathcal{W}_{p+1}$.

It is important to study the regime of validity of the blackfold approach \eqref{eq:req} for the geometries we consider and, in particular,
the behaviour near the boundary $\textbf{k}=0$. In general, the six invariants
presented in \eqref{eq:req} must be evaluated explicitly for each
configuration. However, certain universal features exist. First of all, 
since all the
geometries we consider turn out to be minimal (not only in $\mathbb{E}^{(D-1)}$ or
$\mathbb{E}^{(D-2)}$ but also in Lorentzian space-time $\mathbb{L}^{(D)}$),
and hence satisfy \eqref{eq:min},
of the six invariants in \eqref{eq:req} only five need to be evaluated.  
Secondly, note that from
\eqref{eq:pressure} since $r_0\propto \textbf{k}$, then according to
\eqref{eq:req} none of the five relevant scalars divided by $\textbf{k}$
should vanish over the geometry, or in other other words, the intrinsic or
extrinsic curvature scales should not diverge faster than $\textbf{k}^{-1}$
over $\mathcal{W}_{p+1}$. This leads us to the following conclusions:

\begin{itemize}

\item Since all the embeddings presented above have boundaries then the analysis around \eqref{eq:T} holds. The fact that the invariant $|\textbf{k}^{-1}\nabla_a\nabla^{a}\textbf{k}|$ diverges too quickly as $\textbf{k}\to0$ signals a break down of the approximation near the boundary. For that reason we consider these blackfold configurations valid up to a distance $\epsilon$ from the boundary.

\item For static embeddings of \textbf{Type I}, which have
the plane wave space-time \eqref{ds:pw} as the ambient
space-time, using \eqref{r:type1}, one has that
$|\mathcal{R}|^{-\frac{1}{2}}\textbf{k}^{-1}\propto\textbf{k}^2$.
Therefore, since this invariant vanishes too quickly near the boundary then
the requirements \eqref{eq:req} cannot be satisfied. If the ambient
space-time was flat or de Sitter space-time this would not
constitute a problem. In particular, in the latter case, we find that
$|\mathcal{R}|^{-\frac{1}{2}}\propto\textbf{k}$. By contrast, embeddings
of \textbf{Type II}, according to \eqref{r:type2}, do not suffer from
a divergence at the boundary and it is only required that $\textbf{k}^2
\mathcal{R}$ does not diverge anywhere over $\mathcal{W}_{p+1}$.

\item If $X^{q}_M$ parametrises a $(p-1)$-dimensional sphere
in embeddings of \textbf{Type III}, then such embeddings lie
within the regime of validity. However, if $X^{q}_M$ does not
parametrize a $(p-1)$-dimensional sphere then the spatial metric
$H_0^{-2}(\rho,X^q_M)d\rho^2+\rho^2d\Omega_{(p-1)}^2(X^q_M)$ will
suffer from a conical singularity at $\rho=0$ in the case of flat and
plane wave space-times ($H_0=1$) and hence $\mathcal{R}\to\infty$
as $\rho\to0$. However, potential singularities at $r=0$ can be
shielded behind a black hole horizon. Therefore we need $r_m\ne0$
in \eqref{ds:ds2}. This screening effect is also present if we send
$L\to\infty$, i.e.\  for the asymptotically flat Schwarzschild-Tangherlini black hole.

\end{itemize}

\subsection{Theorems for minimal surfaces} \label{sec:theorems}
In this section we prove several results for minimal surfaces embedded into ambient space-times in the manner
described in the previous section. We further analyse what conditions these geometries need to satisfy in order
to solve the blackfold equations \eqref{eq:bfeom}, \eqref{eq:bfeom1}. 
To set the stage, note that in
general, in order to satisfy the blackfold equation
$K^{i}=nu^{a}u^{b}{K_{ab}}^{i}$, it is neither necessary nor sufficient
for the embedding to define a minimal surface ($K^i=0$) in the Lorentzian embedding
space $\mathbb{L}^{(D)}$, and our discussion below will reflect this dichotomy.
Nevertheless, all the explicit geometries that we will construct later on will
involve minimal Lorentzian surfaces.


\subsubsection*{Embeddings of \textbf{Type I}}
We begin by reviewing a result of \cite{Emparan:2009vd}, namely,
\begin{theorem} \label{theo:bfstatic}
(from \cite{Emparan:2009vd}) If the embedding is static and of \textbf{Type I}, embedded into flat space-time, then any minimal surface in $\mathbb{E}^{(D-1)}$ is a minimal surface in $\mathbb{L}^{(D)}$ and, furthermore, it solves the blackfold equations \eqref{eq:bfeom}.
\end{theorem}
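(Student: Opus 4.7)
The plan is to reduce the blackfold extrinsic equation $K^{i}=nu^{a}u^{b}{K_{ab}}^{i}$ to the pure minimality condition $K^{i}=0$ by showing that the right-hand side vanishes identically for this class of embeddings, and then to verify that Lorentzian minimality $K^{i}=0$ in $\mathbb{L}^{(D)}$ is equivalent to Euclidean minimality in $\mathbb{E}^{(D-1)}$.

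First, I would fix coordinates. For a Type I embedding in flat space-time \eqref{ds:flat} with mapping functions $(t,x^{q})=(\tau,X^{q}_M)$ and $X^{q}_M$ independent of $\tau$, the induced metric collapses to $\textbf{ds}^{2}=-d\tau^{2}+\gamma^{\mathbb{E}}_{\alpha\beta}d\sigma^{\alpha}d\sigma^{\beta}$ with $R_{0}=1$, so the static Killing vector $\textbf{k}^{a}\partial_{a}=\partial_{\tau}$ has modulus $\textbf{k}=1$ and fluid velocity $u^{a}=\delta^{a}_{\tau}$. Since $u^{\mu}{}_{\tau}=\delta^{\mu}_{t}$ is constant and the Christoffel symbols $\Gamma^{\rho}{}_{\mu\nu}$ vanish in the Cartesian coordinates of \eqref{ds:flat}, the definition \eqref{eq:ext} yields $K_{\tau a}{}^{i}=0$ for all $a$. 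Hence $u^{a}u^{b}K_{ab}{}^{i}=K_{\tau\tau}{}^{i}=0$, and the blackfold equation \eqref{eq:bfeom1} reduces to $K^{i}=0$.

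The second step is to compare the Lorentzian mean curvature $K^{i}=\gamma^{ab}K_{ab}{}^{i}$ with its Euclidean counterpart on the spatial slice. The normal bundle can be chosen so that the normals $n^{i}{}_{\mu}$ have vanishing time component; this is the key structural observation, and it makes the normal frame of $\mathcal{W}_{p+1}\subset\mathbb{L}^{(D)}$ coincide with that of the embedded spatial slice inside $\mathbb{E}^{(D-1)}$. Together with $K_{\tau a}{}^{i}=0$ and the block structure $\gamma^{\tau\tau}=-1$, $\gamma^{\tau\alpha}=0$, $\gamma^{\alpha\beta}=\gamma^{\alpha\beta}_{\mathbb{E}}$, this gives $K^{i}=\gamma^{\alpha\beta}_{\mathbb{E}}K_{\alpha\beta}{}^{i}=K^{i}_{\mathbb{E}}$, where $K^{i}_{\mathbb{E}}$ is the mean curvature of the spatial embedding in $\mathbb{E}^{(D-1)}$. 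Euclidean minimality therefore implies Lorentzian minimality, which combined with the first paragraph also yields the blackfold equation.

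A minor subtlety concerns the boundary condition $\textbf{k}|_{\partial\mathcal{W}_{p+1}}=0$ in \eqref{eq:bfeom1}: since $\textbf{k}=1$ identically, no such boundary can arise and $\mathcal{W}_{p+1}$ must be non-compact, a condition consistent with the classical impossibility of compact minimal surfaces in $\mathbb{E}^{(D-1)}$ and in particular in $\mathbb{R}^{3}$. Overall there is no genuinely hard step here; the only point that warrants care is the identification of the spatial normal frame that allows one to match Lorentzian and Euclidean extrinsic data, and the systematic use of the flatness of the ambient coordinates to kill the Christoffel contribution to $K_{ab}{}^{i}$.
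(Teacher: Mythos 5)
Your proof is correct and follows essentially the same route as the paper's: you use the $\tau$-independence of the mapping functions together with the vanishing of the Christoffel symbols in Cartesian coordinates to conclude $K_{\tau a}{}^{i}=0$, which simultaneously kills the right-hand side $u^{a}u^{b}K_{ab}{}^{i}$ (by staticity) and reduces the Lorentzian mean curvature to the Euclidean one. The extra remarks on the purely spatial normal frame and on the absence of a boundary (since $\textbf{k}\equiv 1$) are consistent with the paper's surrounding discussion and do not change the argument.
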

\begin{proof}
We label the spatial indices of the worldvolume by $\hat{a},\hat{b},..=1,...,p$. If the surface $X^{q}_M$ is
minimal in $\mathbb{E}^{(D-1)}$ then we have that
$\tilde{K}^{i}=\gamma^{\hat{a}\hat{b}}{K_{\hat{a}\hat{b}}}^{i}=0$. Since the embedding is of \textbf{Type I}, the mapping functions $X^{q}_M$ do not depend on $\tau$ and since the ambient space-time is flat we can always choose coordinates such that ${\Gamma^{\rho}}_{\mu\nu}=0$. Therefore, from \eqref{eq:ext} it follows that ${K_{\tau a}}^{i}=0$ and we obtain $K^{i}=\gamma^{ab}{K_{ab}}^{i}=0$. Hence, the surface is minimal in $\mathbb{L}^{(D)}$. Since ${K_{\tau a}}^{i}=0$ and the embedding is static we have that $u^{a}{K_{ab}}^{i}=0$. Therefore both sides of equation \eqref{eq:bfeom} are satisfied. 
\end{proof}
From this it follows that minimal surfaces in $\mathbb{E}^{(D-1)}$, which satisfy the validity requirements
\eqref{eq:req} and \eqref{r:d}, provide geometries for non-compact black hole horizons, because
flat space-time with embedded static geometries has no limiting surfaces. For more general stationary
embeddings it follows 
from theorem~\ref{theo:bfstatic} that
\begin{corollary} \label{cor:type1}
If the embedding is stationary and of \textbf{Type I}, embedded into flat space-time, then any minimal surface in $\mathbb{E}^{(D-1)}$ will satisfy the blackfold equations as long as $u^{\hat a}u^{\hat b}{K_{\hat a\hat b}}^{i}=0$.
\end{corollary}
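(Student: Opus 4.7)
The plan is to piggy-back on the proof of Theorem~\ref{theo:bfstatic} and to isolate precisely what new content enters once stationary (rather than static) embeddings are allowed. The two inputs used in Theorem~\ref{theo:bfstatic} that I want to reuse are: (i) the mapping functions $X^{q}_{M}$ are $\tau$-independent, which is still part of the definition of a \textbf{Type I} embedding; and (ii) the ambient space-time is flat, so one can choose Cartesian coordinates for $\mathbb{L}^{(D)}$ in which ${\Gamma^{\rho}}_{\mu\nu}=0$. Substituting ${u^{\mu}}_{\tau}=\delta^{\mu}_{\tau}$ into \eqref{eq:ext} then gives $K_{\tau a}{}^{i}=0$ without any assumption on the absence of rotation.

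Next I would use the fact that for flat ambient space-time the function $R_{0}$ in the induced metric \eqref{ds:type1} is identically equal to one, so the worldvolume metric splits orthogonally as $\textbf{ds}^{2}=-d\tau^{2}+d\tilde{\mathbb{E}}_{(p)}^{2}(X^{q}_{M})$ and in particular $\gamma^{\tau\tau}=-1$ with $\gamma^{\tau\hat{a}}=0$. Combined with $K_{\tau a}{}^{i}=0$, the trace $K^{i}=\gamma^{ab}K_{ab}{}^{i}$ collapses to the spatial trace $\gamma^{\hat{a}\hat{b}}K_{\hat{a}\hat{b}}{}^{i}$. Because the spatial induced metric on $\mathcal{W}_{p+1}$ coincides with the one that defines minimality of $X^{q}_{M}$ as a submanifold of $\mathbb{E}^{(D-1)}$, this spatial trace is nothing but $\tilde{K}^{i}$, which vanishes by hypothesis. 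Hence the surface is still minimal in $\mathbb{L}^{(D)}$.

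With $K^{i}=0$ established, the blackfold equation \eqref{eq:bfeom1} reduces to $u^{a}u^{b}{K_{ab}}^{i}=0$. Using $u^{a}=\textbf{k}^{-1}(\delta^{a}_{\tau}+\Omega^{b}\delta^{a}_{\phi_{b}})$ from \eqref{eq:kvf}, every contraction involving the $\tau$ component is killed by $K_{\tau a}{}^{i}=0$, so only the purely spatial piece survives and the equation becomes $u^{\hat{a}}u^{\hat{b}}{K_{\hat{a}\hat{b}}}^{i}=0$, which is the stated sufficient condition. The boundary condition $\textbf{k}|_{\partial\mathcal{W}_{p+1}}=0$ is then automatically read off from the limiting surface $R_{0}^{2}-(\Omega^{a})^{2}R_{a}^{2}=0$, which is present whenever the configuration actually rotates, so nothing further needs to be checked for the boundary term in \eqref{eq:bfeom}.

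I do not expect any serious obstacle here: this is essentially a bookkeeping exercise separating the pieces of $u^{a}u^{b}K_{ab}{}^{i}$ that are forced to vanish by $\tau$-independence of the embedding from those that are not. The only subtle point worth spelling out explicitly is that the ``spatial'' extrinsic curvature appearing after tracing with $\gamma^{\hat{a}\hat{b}}$ really is the object whose vanishing defines minimality of $X^{q}_{M}$ in $\mathbb{E}^{(D-1)}$; this is immediate because $R_{0}=1$ prevents any warping between the $\tau$ direction and the spatial slice.
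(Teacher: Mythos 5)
Your proposal is correct and follows essentially the same route as the paper: both arguments rest on the observation that introducing rotation does not change the extrinsic curvature, so ${K_{\tau a}}^{i}=0$ and $K^{i}=0$ carry over from the static case (Theorem~\ref{theo:bfstatic}), and the contraction $u^{a}u^{b}{K_{ab}}^{i}$ therefore collapses to its purely spatial part $u^{\hat a}u^{\hat b}{K_{\hat a\hat b}}^{i}$, whose vanishing is exactly the stated hypothesis. The extra details you supply ($R_0=1$, the orthogonal split of the induced metric, the boundary condition) are consistent with, and merely make explicit, what the paper leaves implicit.
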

\begin{proof}
Stationary minimal surfaces are characterised by a Killing vector field of the form \eqref{eq:kvf} which maps
onto a background Killing vector field in the ambient space-time (we take this as a definition of a stationary
surface in the present context).
The introduction of rotation does not alter the extrinsic curvature tensor of the geometry
\eqref{eq:ext} and therefore such configurations still satisfy $K^{i}=0$ and ${K_{\tau\hat a}}^{i}=0$. Thus, if
$u^{\hat a}u^{\hat b}{K_{\hat a \hat b}}^{i}=0$ both sides of \eqref{eq:bfeom} are separately zero.
\end{proof}
In this case, flat space-time, with embedded stationary geometries, will inherit a limiting surface and the
minimal embedding must be compact, at least in some directions. As we will see in the next section,
amongst all minimal surfaces embedded in $\mathbb{R}^{3}$ we can only achieve this for the plane $\mathbb{R}^{2}$
and the helicoid. 

We now wish to establish corresponding statements that also hold in non-trivial ambient
space-times such as plane wave and de Sitter space-times. We begin with the rather trivial
obervation that any embedding which solves
the blackfold equations in flat space-time also solves the blackfold equations in plane wave space-times
which are flat along the transverse embedding directions:
\begin{theorem} \label{theo:full}
If $X^{q}$ parametrises an embedding surface that solves the blackfold equations in flat space-time then it also solves the blackfold equations in plane wave space-times if $X^{y}=0$ and $A_{rs}=0$ for $r,s$ satisfying $X^{r}\ne0, X^{s}\ne0$.
\end{theorem}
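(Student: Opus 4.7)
The strategy is to show that, under the stated hypotheses, the plane wave metric \eqref{ds:pw} and its Christoffel symbols, when evaluated on the embedded surface, coincide with those of flat Minkowski space on the same embedding. Since both the induced metric and the extrinsic curvature \eqref{eq:ext} -- and hence both sides of the blackfold equation $K^i = n u^a u^b K_{ab}{}^i$ -- depend on the ambient geometry only through these objects, the flat-space solution then transfers directly to the plane wave background.

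First I would pull back \eqref{ds:pw} via the mapping $(t,y,x^q) = (\tau, 0, X^q)$. The assumption $X^y = 0$ removes the $dy$ and $dt\,dy$ terms, while the hypothesis $A_{rs} = 0$ for $r,s$ with $X^r, X^s \ne 0$, combined with the diagonal form $A_{qr} = A_q \delta_{qr}$ adopted in Sec.~\ref{sec:pw}, forces $A|_{\text{surface}} = \sum_q A_q (X^q)^2 = 0$ identically: each direction $q$ with $X^q \ne 0$ is killed by $A_q = 0$, and each direction with $X^q = 0$ is killed trivially. The induced metric therefore agrees with the one obtained by embedding the same mapping functions into flat Minkowski space.

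The remaining and central step is to check that every Christoffel symbol of \eqref{ds:pw} also vanishes on the surface, so that the inhomogeneous term in \eqref{eq:ext} drops out. Each $\Gamma^\rho{}_{\mu\nu}$ is proportional to a first derivative of $A$ (the metric is $t$- and $y$-independent and the transverse part $\delta_{qr}$ is constant), and in the diagonal case $\partial_q A = 2 A_q x^q$ (no sum) vanishes at $x^q = X^q$ in every direction by the same tangent/normal dichotomy. The subtle point to watch is that normal derivatives $\partial_q A$ for $q$ with $X^q = 0$ could a priori have produced a contribution to $\Gamma^q{}_{tt}$, and hence -- through $n^i{}_q$ -- to $K_{\tau\tau}{}^i$; it is precisely the pairing between the hypothesis $A_q = 0$ in the tangent directions and $X^q = 0$ in the remaining directions that prevents this. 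With every Christoffel symbol vanishing on the surface, \eqref{eq:ext} reduces to its flat-space expression, the additional normal direction $i = y$ is trivially satisfied because $u^y{}_a = 0$, and both sides of the blackfold equation retain their flat-space values. The assumption that $X^q$ solves the flat blackfold equations therefore implies that it solves them in the plane wave background as well.
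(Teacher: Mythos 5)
Your proof is correct and follows essentially the same route as the paper's: both arguments reduce to observing that, with $X^y=0$ and the stated vanishing of $A_{rs}$, the wave profile and all Christoffel symbols (each proportional to $\partial_q A = 2A_qx^q$ for diagonal profiles) vanish on the worldvolume, so the induced metric, normals and extrinsic curvature in \eqref{eq:ext} coincide with their flat-space values and the blackfold equations reduce to the flat-space ones. Your version is simply a more explicit unpacking of the paper's one-line proof, including the useful observation that the tangent/normal dichotomy is exactly what kills every first derivative of $A$ on the surface.
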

\begin{proof}
The proof follows easily from the fact that if $X^{y}=0$ and $A_{rs}=0$ for $r,s$ satisfying $X^{r}\ne0, X^{s}\ne0$ then ${K_{\tau a}}^{i}={n^{i}}_{\mu}{\Gamma^{\mu}}_{\tau\hat{a}}=0$ and the blackfold equations reduce to those in flat space-time.
\end{proof}
In fact this theorem shows that all configurations constructed in
\cite{Emparan:2009vd} are also valid constructions in plane wave
space-times. However, we note that for the space-time \eqref{ds:pw}
to still be a vacuum solution with a non-trivial plane wave profile, we
need to require the existence of (at least) two additional directions
$i,j$ where the brane is point-like (sitting at $x^{i}=x^{j}=0$) with
$A_{ii}+A_{jj}=0$. This means that all
configurations in \cite{Emparan:2009vd} can be embedded in plane wave
space-times with $D\ge7$.\footnote{The five-dimensional black rings,
helical rings and helical strings found in \cite{Emparan:2009vd} can
be embedded into plane wave space-times in $D\ge6$.} Also, since for
these configurations the induced geometry is exactly the same as in flat
space-time, the free energy functional \eqref{eq:free} to leading order
is also the same and hence, according to \eqref{eq:thermo}, also their
thermodynamic properties.

The type of solutions expressed in theorem \ref{theo:full} are of interest but they do not give rise to black
hole geometries which exhibit the full non-trivial structure of plane wave space-times. We now wish to consider
more non-trivial embeddings into plane wave space-times (along directions where the components $A_{rs}$ are not
necessarily zero) and also into de Sitter space-times. To that end we now first prove 
a more general statement regarding minimal surfaces which establishes the intuitively obvious fact
that a spatial minimal surface extended geodesically (i.e.\ by an extremal curve) in the time direction
is a Lorentzian minimal surface:
\begin{theorem} \label{theo:type1}
If $X^{q}_M$ parametrises a minimal surface in $\tilde{\mathbb{E}}^{(D-1)}$ or $\mathbb{E}^{(D-2)}$ in an embedding of \textbf{Type I} then $X^{q}_M$ parametrises a minimal surface in $\mathbb{L}^{(D)}$ if and only if the embedding is geodesically extended along the time direction, i.e., ${n^{i}}_{\rho}\nabla_{\dot X}\dot X^{\rho}=0~,~\dot X^{\rho}=\partial_\tau X^{\rho}$.
\end{theorem}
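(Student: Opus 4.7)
The strategy is to split the mean curvature $K^{i}=\gamma^{ab}{K_{ab}}^{i}$ into its time-time and spatial contributions and to show that the latter reproduces the mean curvature computed intrinsically in the relevant spatial slice, while the former is precisely the obstruction to geodesic extension along $\partial_\tau$.

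First, I would use that for a Type I embedding the induced metric \eqref{ds:type1} is block diagonal, so $\gamma^{\tau\hat a}=0$ with $\hat a,\hat b$ spatial worldvolume indices. This immediately reduces $K^{i}=\gamma^{\tau\tau}{K_{\tau\tau}}^{i}+\gamma^{\hat a\hat b}{K_{\hat a\hat b}}^{i}$ to two independent pieces. Next I would verify that for each of the three ambient spacetimes \eqref{ds:flat}, \eqref{ds:pw} with $X^{y}=0$, and \eqref{ds:ds2}, the normals ${n^{i}}_\mu$ to $\mathcal{W}_{p+1}$ can be chosen to lie entirely in the transverse spatial directions $x^{q}$, since $u^{\mu}{}_{\tau}=\delta^{\mu}_{t}$ and the only relevant cross-components $g_{tq}$ and $g_{yq}$ vanish at the embedding locus. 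Together with the fact that the spatial Christoffel symbols $\Gamma^{s}{}_{qr}$ computed from the full Lorentzian metric coincide with those of the induced metric on the spatial slice $\tilde{\mathbb{E}}^{(D-1)}$ or $\mathbb{E}^{(D-2)}$ (because the metric is block diagonal between $\{t,y\}$ and $\{x^{q}\}$, possibly up to a conformal factor in the de Sitter case), one concludes
\begin{equation}
\gamma^{\hat a\hat b}{K_{\hat a\hat b}}^{i}=\tilde K^{i},
\end{equation}
where $\tilde K^{i}$ is the mean extrinsic curvature of $X^{q}_{M}$ viewed as an embedding into the spatial slice. Minimality of $X^{q}_{M}$ in $\tilde{\mathbb{E}}^{(D-1)}$ or $\mathbb{E}^{(D-2)}$ thus kills this term.

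It then remains to analyse the time-time piece. Since $X^{q}_{M}$ is $\tau$-independent, $u^{\mu}{}_{\tau}=\dot X^{\mu}$ is the constant coordinate vector $\delta^{\mu}_{t}$, so $\partial_\tau u^{\mu}{}_{\tau}=0$ and the definition \eqref{eq:ext} gives
\begin{equation}
{K_{\tau\tau}}^{i}={n^{i}}_{\rho}\Gamma^{\rho}{}_{\mu\nu}\dot X^{\mu}\dot X^{\nu}={n^{i}}_{\rho}\nabla_{\dot X}\dot X^{\rho}.
\end{equation}
Combining the two pieces one obtains $K^{i}=\gamma^{\tau\tau}{n^{i}}_{\rho}\nabla_{\dot X}\dot X^{\rho}$ whenever $X^{q}_{M}$ is spatially minimal, which yields the claimed equivalence in both directions: $K^{i}=0$ iff the normal projection of the time-direction acceleration vanishes, i.e.\ iff the extension along $\partial_\tau$ is a geodesic of the ambient space-time modulo tangential drift.

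The one step that requires genuine care rather than bookkeeping is the identification of the spatial block of Christoffel symbols in the de Sitter case: here the spatial metric carries the nontrivial conformal factor $h(\tilde r)^{-1}$, and one must check that the relevant $\Gamma^{s}{}_{qr}$ receive no contribution from the $t$-direction because $g^{ts}=0$ and $g_{tq}=0$ everywhere. Once that observation is in place, the argument is uniform across all three ambient backgrounds listed in Sec.~\ref{sec:cembed}, and the plane wave case follows by the same diagonal-block reasoning applied at $y=0$.
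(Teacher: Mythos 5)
Your proof is correct and follows essentially the same route as the paper's: block-diagonality of the Type I induced metric reduces $K^{i}$ to $\gamma^{\tau\tau}{K_{\tau\tau}}^{i}+\tilde K^{i}$, the spatial piece vanishes by the minimality hypothesis, and ${K_{\tau\tau}}^{i}={n^{i}}_{\rho}\nabla_{\dot X}\dot X^{\rho}$ yields the equivalence. The only difference is that you make explicit the identification $\gamma^{\hat a\hat b}{K_{\hat a\hat b}}^{i}=\tilde K^{i}$ (normals lying in the transverse block, the spatial Christoffel symbols receiving no contribution from the $\{t,y\}$ block), a step the paper treats as immediate.
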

\begin{proof}
First note that if $X^{q}_M$ parametrises a minimal surface in $\tilde{\mathbb{E}}^{(D-1)}$ or $\mathbb{E}^{(D-2)}$ in an embedding of \textbf{Type I}  then we have that $\tilde K^{i}=0$. We therefore only need to show that $\gamma^{\tau\tau}{K_{\tau \tau}}^{i}+\gamma^{\tau\hat{a}}{K_{\tau \hat{a}}}^{i}=0$. However, for embeddings of \textbf{Type I} one has that $\gamma_{\tau \hat a}=\gamma^{\tau \hat a}=0$. Therefore, since that for any of the embeddings presented in the previous section one has that $\partial_\tau X^{t}=1$ and that $\partial_\tau X^{\mu}=0~\text{if}~\mu\ne t$, then we must have
\beq \label{eq:c}
{K_{\tau \tau}}^{i}={n^{i}}_{\rho}\left(\ddot X^{\rho}+{\Gamma^{\rho}}_{\mu\nu}\dot X^{\mu}\dot X^{\nu}\right)={n^{i}}_{\rho}\nabla_{\dot X}\dot X^{\rho}=0~~.
\eeq
\end{proof}
Condition \eqref{eq:c} imposes no restrictions if the ambient space-time is flat since the connection vanishes and $\ddot X^{\rho}=0$ for all embeddings presented in the previous section. However, in the case of plane waves or de Sitter space-times condition \eqref{eq:c} reduces to ${n^{i}}_{\rho}{\Gamma^{\rho}}_{tt}=0$ and we obtain specific constraints:
\begin{itemize}

\item \textbf{Plane-wave space-times:} in this case direct to computation leads to 
\beq \label{eq:cpw}
{n^{i}}_{\rho}{\Gamma^{\rho}}_{tt}=\sum_{\rho=1}^{D-1}{n^{i}}_{\rho}x_{\rho}A_{\rho}=0~~.
\eeq
This constraints greatly the number of possible minimal surfaces in \eqref{ds:pw} for embeddings of \textbf{Type I} and, as we will see, also for embeddings of \textbf{Type II}. As we will show in the next section, amongst the minimal surfaces in $\mathbb{R}^3$, only the plane $\mathbb{R}^2$ and the helicoid solve this equation for specific choices of $A_{qr}$. It is important to note that in transverse directions to the worldvolume $i$ where the worldvolume is point-like and located at $x^{i}=0$ then ${n^{i}}_{\rho}{\Gamma^{\rho}}_{tt}=0$.\footnote{If we allow for non-vanishing off-diagonal components of $A_{qr}$ then formula \eqref{eq:cpw} is modified. If this is the case, then one can show that the off-diagonal components $A_{ai}$, where $a$ labels a longitudinal direction along the surface and $i$ labels a direction where the brane is point-like and located at $x_i=0$, must vanish in order for \eqref{eq:cpw} to have a solution.}

\item \textbf{de Sitter space-times:} in space-times of the form \eqref{ds:ds2} we find the constraint 
\beq \label{eq:cds}
{n^{i}}_{\rho}{\Gamma^{\rho}}_{tt}=\frac{1}{2}\sum_{\rho=1}^{D-1}{n^{i}}_{\rho}h(\tilde r)\partial_\rho f(\tilde r)=0~~.
\eeq
The number of minimal embeddings which satisfy this constraint is even more constrained than in plane wave space-times as there are no parameters to tune, by contrast with the components $A_{qr}$. In this case we find that among the various minimal surfaces in $\mathbb{R}^3$ only the plane is a solution. In transverse directions $i$ to the worldvolume where the worldvolume is point-like and located at $x^{i}=0$ we find $\partial_i f(\tilde r)=(x_i/\tilde r)\partial_{\tilde r} f(\tilde r)=0$ and hence ${n^{i}}_{\rho}{\Gamma^{\rho}}_{tt}=0$ along those directions. Theorem \ref{theo:type1} started with the assumption that $X^{q}_M$ parametrises a minimal surface in $\tilde{\mathbb{E}}^{(D-1)}$. However, since we are interested in using known minimal embeddings in $\mathbb{E}^{(D-1)}$ for black hole horizons, it is important to know which of those will also be minimal in conformally Euclidean spaces $\tilde{\mathbb{E}}^{(D-1)}$ in case we want to find black hole horizons in de Sitter space-times \eqref{ds:ds2}. A simple computation, similar to \eqref{eq:cds}, leads to the requirement
\beq \label{eq:cds2}
\gamma^{\hat a \hat b}{\Gamma^{i}}_{\hat{a}\hat{b}}=0~~,
\eeq
which is solved, for example, for the plane $\mathbb{R}^2$ embedded into $\mathbb{R}^3$. As in the case of ${\Gamma^{i}}_{tt}$ one can also easily show that ${\Gamma^{i}}_{\hat{a}\hat{b}}=0$ for transverse directions where the brane is point-like and located at $x^{i}=0$.

\end{itemize}

Theorem \ref{theo:type1} gives the necessary condition for surfaces to be minimal in $\mathbb{L}^{(D)}$.
However, we would like to know what conditions are required for such surfaces to solve the blackfold equations
\eqref{eq:bfeom1}. Similarly to corollary \ref{cor:type1} it follows that
\begin{corollary} \label{cor:type11}
If $X^{q}_M$ is an embedding of \textbf{Type I} satisfying \eqref{eq:c} then it will also satisfy the blackfold equations as long as $u^{\hat a}u^{\hat b}{K_{\hat a \hat b}}^{i}=0$.
\end{corollary}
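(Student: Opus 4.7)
The plan is to show that both sides of the blackfold equation $K^i = n u^a u^b {K_{ab}}^{i}$ (eq.~\eqref{eq:bfeom1}) vanish separately. For the left-hand side, Theorem~\ref{theo:type1} applied to the spatially minimal embedding $X^q_M$ of Type~I satisfying \eqref{eq:c} immediately gives $K^i=0$, since the embedding is then minimal in the full Lorentzian space $\mathbb{L}^{(D)}$.

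For the right-hand side, I would exploit that $u^a=\textbf{k}^a/\textbf{k}$ is aligned with the stationary Killing vector \eqref{eq:kvf}, so its only nonvanishing components are $u^\tau=1/\textbf{k}$ and $u^{\phi_a}=\Omega^a/\textbf{k}$. The double contraction then decomposes as
\begin{equation*}
u^a u^b {K_{ab}}^{i} = \frac{1}{\textbf{k}^2}\Big[{K_{\tau\tau}}^{i} + 2\Omega^a {K_{\tau\phi_a}}^{i} + \Omega^a\Omega^b {K_{\phi_a\phi_b}}^{i}\Big].
\end{equation*}
The first term vanishes by \eqref{eq:c}, and the last one coincides with $u^{\hat a}u^{\hat b}{K_{\hat a\hat b}}^{i}$ (since $u^{\hat a}$ is supported only on the rotational Killing directions), and so vanishes by hypothesis.

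The remaining task is to verify that the mixed terms ${K_{\tau\phi_a}}^{i}$ also vanish; this is the main obstacle, and the only point at which the non-flatness of the ambient space requires genuine care, because in flat space this follows trivially from ${\Gamma^{\rho}}_{\mu\nu}=0$ as in Theorem~\ref{theo:bfstatic}. I would handle it by invoking the defining property of a stationary embedding, namely that $\partial_{\phi_a}$ is mapped onto an ambient rotational Killing vector $\partial_{\Phi_a}$. With the Type~I identifications $X^t=\tau$, $\partial_\tau X^q_M=0$ and $\partial_{\phi_a}X^\mu=\delta^\mu_{\Phi_a}$, expanding \eqref{eq:ext} reduces the cross component to ${K_{\tau\phi_a}}^{i} = {n^{i}}_{\rho}\,{\Gamma^{\rho}}_{t\Phi_a}$. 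Because the ambient metrics \eqref{ds:flat}, \eqref{ds:pw} and \eqref{ds:ds2} are invariant under both $\partial_t$ and $\partial_{\Phi_a}$, and have vanishing $g_{t\Phi_a}$ component in the standard polar parameterisation along the rotational plane, each of the three derivative pieces entering ${\Gamma^{\rho}}_{t\Phi_a}$ is identically zero. Collecting all three contributions yields $u^a u^b {K_{ab}}^{i}=0=K^i$, so the blackfold equation is satisfied trivially.
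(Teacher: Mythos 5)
Your proof is correct and follows the same basic strategy as the paper's, which is simply the statement that the argument of Corollary~\ref{cor:type1} carries over: both sides of $K^{i}=n\,u^{a}u^{b}{K_{ab}}^{i}$ vanish separately, the left by Theorem~\ref{theo:type1} and the right by \eqref{eq:c} together with the hypothesis $u^{\hat a}u^{\hat b}{K_{\hat a\hat b}}^{i}=0$. Where you go beyond the paper is in recognising that the flat-space justification of ${K_{\tau\hat a}}^{i}=0$ (namely ${\Gamma^{\rho}}_{\mu\nu}=0$) does not survive in plane wave or de Sitter backgrounds — indeed ${K_{\tau\hat a}}^{i}\neq 0$ for a generic spatial direction $\hat a$ there (e.g.\ the radial direction of a disc in a plane wave picks up a contribution $\propto {n^{i}}_{y}\,\partial_{q}A\,\partial_{\hat a}X^{q}_M$) — and that only the components ${K_{\tau\phi_a}}^{i}$ along the rotational isometry directions actually enter $u^{a}u^{b}{K_{ab}}^{i}$ (the others are harmless since $\gamma^{\tau\hat a}=0$ also keeps them out of the trace). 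One small imprecision: $\partial_{\phi_a}X^{\mu}=\delta^{\mu}_{\Phi_a}$ is not literally true for embeddings like the helicoid, where $\partial_{\phi}$ maps to a combination of an ambient rotation and a translation; the clean statement is that $\partial_{\phi_a}X^{\mu}=\xi^{\mu}$ is a background Killing vector commuting with $\partial_t$, so ${K_{\tau\phi_a}}^{i}={n^{i}}_{\mu}\xi^{\nu}{\Gamma^{\mu}}_{t\nu}\propto {n^{i}}_{\mu}\,\xi^{q}\partial_{q}A$ (or $\xi^{q}\partial_{q}f$), which vanishes precisely because the wave profile (resp.\ $f(\tilde r)$) is invariant under $\xi$. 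With that cosmetic fix your argument is complete and, if anything, more careful than the paper's.
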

The proof of corollary \ref{cor:type11} is essentially the same as that given for corollary \ref{cor:type1}. However, if $X^{q}_M$ does not satisfy condition \eqref{eq:c}, and hence is not minimal in $\mathbb{L}^{(D)}$, but is minimal in $\mathbb{E}^{(D-1)}$ or $\mathbb{E}^{(D-2)}$ then we have
\begin{corollary} \label{cor:type111}
If $X^{q}_M$ is an embedding of \textbf{Type I} and is minimal in $\mathbb{E}^{(D-1)}$ or $\mathbb{E}^{(D-2)}$ then it will satisfy the blackfold equations if 
\beq \label{eq:bftype1}
\frac{(\textbf{k}^2+nR_0^2)}{R_0^2}{K_{\tau\tau}}^{i}+n\Omega^a\Omega^b{K_{\phi_a\phi_b}}^{i}-\textbf{k}^2\gamma^{\hat a \hat b}{\Gamma^{i}}_{\hat a \hat b}=0~~.
\eeq
\end{corollary}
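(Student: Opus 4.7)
My plan is to derive \eqref{eq:bftype1} by expanding both sides of the leading-order blackfold equation $K^i = n u^a u^b {K_{ab}}^i$ in the block structure of the Type I ansatz and applying the Euclidean minimality hypothesis. Because $\gamma_{\tau\hat a}=0$ for \eqref{ds:type1}, the inverse induced metric splits into $\gamma^{\tau\tau}=-R_0^{-2}$ and a non-degenerate spatial block $\gamma^{\hat a\hat b}$, so the left-hand side reads $K^i = -R_0^{-2}{K_{\tau\tau}}^i + \gamma^{\hat a\hat b}{K_{\hat a\hat b}}^i$. Writing $u^a=\textbf{k}^{-1}\textbf{k}^a$ with $\textbf{k}^a=\delta^a_\tau+\Omega^a\delta^a_{\phi_a}$, the right-hand side contraction expands into ${K_{\tau\tau}}^i$, ${K_{\phi_a\phi_b}}^i$ and cross pieces ${K_{\tau\phi_a}}^i$.

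For the spatial contraction I would insert \eqref{eq:ext}. Since for Type I the tangent vectors $u^\mu_{\hat a}$ have only $x^q$ components, the extrinsic curvature decomposes as ${K_{\hat a\hat b}}^i = n^i_\mu\partial_{\hat a}u^\mu_{\hat b} + {\Gamma^i}_{\hat a\hat b}$ with ${\Gamma^i}_{\hat a\hat b}:=n^i_\rho{\Gamma^\rho}_{qs}u^q_{\hat a}u^s_{\hat b}$, matching the shorthand used in the statement. In flat Cartesian coordinates on $\mathbb{E}^{(D-1)}$ or $\mathbb{E}^{(D-2)}$ the Euclidean connection vanishes identically, so the Euclidean mean curvature reduces to $\gamma_E^{\hat a\hat b}n^i_\mu\partial_{\hat a}u^\mu_{\hat b}$. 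Since the spatial block of the Lorentzian induced metric coincides with the Euclidean induced metric for the ambient space-times covered by the hypothesis (flat space-time and plane waves both have flat spatial sections), $\gamma^{\hat a\hat b}=\gamma_E^{\hat a\hat b}$, and the assumption $\tilde K^i=0$ collapses the non-Christoffel piece to zero, leaving $\gamma^{\hat a\hat b}{K_{\hat a\hat b}}^i=\gamma^{\hat a\hat b}{\Gamma^i}_{\hat a\hat b}$.

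For the cross term I would note that the intrinsic derivative piece of ${K_{\tau\phi_a}}^i$ vanishes because $X^\mu$ is $\tau$-independent, reducing it to $n^i_\rho{\Gamma^\rho}_{t\nu}u^\nu_{\phi_a}$; the orthogonality condition $n^i_\mu u^\mu_\tau=n^i_t=0$ then eliminates the $\rho=t$ contribution. In flat space all surviving Christoffels vanish; in the diagonal de Sitter form \eqref{ds:ds2} only $\rho=t$ contributes to ${\Gamma^\rho}_{tq}$; and in plane waves the requirement that $\partial_\tau+\Omega^a\partial_{\phi_a}$ lift to a genuine background Killing vector forces the pair of transverse directions in which the rotation takes place to share a common eigenvalue of $A_{qr}$, which annihilates the remaining ${\Gamma^y}_{tq}u^q_{\phi_a}$ piece. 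Substituting $\gamma^{\hat a\hat b}{K_{\hat a\hat b}}^i=\gamma^{\hat a\hat b}{\Gamma^i}_{\hat a\hat b}$ and $u^a u^b{K_{ab}}^i=\textbf{k}^{-2}({K_{\tau\tau}}^i+\Omega^a\Omega^b{K_{\phi_a\phi_b}}^i)$ into the blackfold equation and multiplying through by $\textbf{k}^2$ collects the ${K_{\tau\tau}}^i$ coefficients into $(\textbf{k}^2+nR_0^2)/R_0^2$ and rearranges directly into \eqref{eq:bftype1}. The only genuinely delicate step is the cross-term argument in plane wave backgrounds; the rest is block algebra entirely analogous to the proof of corollary~\ref{cor:type1}.
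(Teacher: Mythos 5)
Your derivation is correct and is essentially the computation the authors compress into the one-line justification ``\eqref{eq:bftype1} follows simply from \eqref{eq:bfeom}, the induced metric \eqref{ds:type1} and \eqref{eq:cds2}'': block-decompose $K^{i}=n\,u^{a}u^{b}{K_{ab}}^{i}$ using $\gamma^{\tau\tau}=-R_0^{-2}$ and $\gamma^{\tau\hat a}=0$, use Euclidean minimality to reduce the spatial trace to the Christoffel contraction, argue that the cross terms $\Omega^{a}{K_{\tau\phi_a}}^{i}$ drop out, and multiply through by $\textbf{k}^{2}$. Your case-by-case elimination of the cross term --- in particular the observation that the background Killing-vector condition ($A_1=A_2$ in the rotation plane, $A_3=0$ along the translation direction) annihilates the surviving ${\Gamma^{y}}_{tq}u^{q}{}_{\phi_a}$ contraction with the $y$-normal --- is a genuine addition, since the paper never spells this step out even though \eqref{eq:bftype1} silently assumes it.

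The one place your argument as written falls short is the de Sitter case, which is covered by the hypothesis (``minimal in $\mathbb{E}^{(D-1)}$'') and is in fact the only case in which the third term of \eqref{eq:bftype1} is non-vacuous: in flat and plane wave backgrounds the spatial ambient Christoffels vanish identically in Cartesian coordinates, so $\gamma^{\hat a\hat b}{\Gamma^{i}}_{\hat a\hat b}=0$ there. For de Sitter in the form \eqref{ds:ds2} the spatial sections are only conformally Euclidean, $d\tilde{\mathbb{E}}^{2}_{(D-1)}=h^{-1}d\mathbb{E}^{2}_{(D-1)}$, so your step ``$\gamma^{\hat a\hat b}=\gamma_E^{\hat a\hat b}$'' fails and the spatial Christoffels survive. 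The cancellation of the non-Christoffel piece nevertheless goes through: under the conformal rescaling one has $\gamma^{\hat a\hat b}=h\,\gamma_E^{\hat a\hat b}$ and $n^{i}{}_{\mu}=h^{-1/2}n^{E,i}{}_{\mu}$, so $\gamma^{\hat a\hat b}n^{i}{}_{\mu}\partial_{\hat a}u^{\mu}{}_{\hat b}=h^{1/2}\tilde K^{i}_{E}=0$ by the Euclidean minimality hypothesis, leaving precisely the residual $\gamma^{\hat a\hat b}{\Gamma^{i}}_{\hat a\hat b}$ of the conformally flat spatial metric --- which is the content of \eqref{eq:cds2} that the corollary is designed to accommodate. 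With that one supplement your proof covers all three backgrounds and reproduces \eqref{eq:bftype1} exactly.
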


Eq.~\eqref{eq:bftype1} follows simply from \eqref{eq:bfeom}, the induced metric \eqref{ds:type1} and \eqref{eq:cds2}. In particular, the last term in \eqref{eq:bftype1} vanishes in flat and plane wave space-times and for minimal surfaces in $\tilde{\mathbb{E}}^{(D-1)}$. This exhausts our study of embeddings of \textbf{Type I}.


\subsubsection*{Embeddings of \textbf{Type II}}

We now turn our attention to embeddings of \textbf{Type II}. The geometric properties of the embeddings \eqref{ds:type2} lead to the simple conclusion
\begin{theorem}
If the embedding is of \textbf{Type II} and $X^{q}_M$ parametrises a minimal surface in $\mathbb{E}^{(D-2)}$ then $X^{q}_M$ is also a minimal surface in $\mathbb{L}^{(D)}$.
\end{theorem}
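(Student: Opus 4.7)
My plan is to work in the null coordinates $(u,v)$ introduced in \eqref{eq:uv}, in which the plane wave metric \eqref{ds:pw0} takes the form $ds^2=2dudv-2A(x^q)du^2+d\mathbb{E}^2_{(D-2)}$. I would then reparametrise the worldvolume by setting $\tilde u=(z+\tau)/\sqrt{2}$, $\tilde v=(z-\tau)/\sqrt{2}$, so that the Type II embedding becomes simply $u=\tilde u$, $v=\tilde v$, $x^q=X^q_M(\sigma^{\hat a})$, where $\hat a$ labels the $(p-1)$ Euclidean worldvolume directions. The induced metric then inherits the full $(u,v)$-block of the background, so in particular $\gamma_{\tilde u\tilde u}=-2A(X^q_M)$, $\gamma_{\tilde u\tilde v}=1$, $\gamma_{\tilde v\tilde v}=0$, and $\gamma_{\hat a\hat b}$ is the Euclidean induced metric of $X^q_M$ (all cross-terms vanishing). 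Inverting the $(u,v)$ block one finds the crucial identity $\gamma^{\tilde v\tilde v}=2A$, $\gamma^{\tilde u\tilde v}=1$, and $\gamma^{\tilde u\tilde u}=0$, inherited directly from $g^{uu}=0$.

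Next I would compute all Christoffel symbols of the plane wave in null coordinates. Using $u$-independence, $g_{vv}=g_{uq}=g_{vq}=0$ and $g_{qr}=\delta_{qr}$, the only non-vanishing components involving indices in $\{u,v,q\}$ are $\Gamma^q{}_{uu}=\partial_q A$ and $\Gamma^v{}_{uq}=-\partial_q A$. In particular $\Gamma^\rho{}_{uv}=\Gamma^\rho{}_{vv}=\Gamma^\rho{}_{vq}=\Gamma^\rho{}_{qr}=0$.

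I would then evaluate the extrinsic curvature \eqref{eq:ext}. Since $u^\mu{}_{\tilde u}=\delta^\mu_u$ and $u^\mu{}_{\tilde v}=\delta^\mu_v$ are constant, the $\partial_a u^\mu{}_b$ pieces contribute only through the Euclidean tangents $u^\mu{}_{\hat a}=\partial_{\hat a}X^q\delta^\mu_q$. Because the worldvolume fills the full $(u,v)$-plane at each point of the Euclidean surface, the normals $n^i{}_\mu$ can be chosen purely along the transverse Euclidean directions, so $n^i{}_u=n^i{}_v=0$. Combining these facts with the Christoffel computation shows that the only non-vanishing components of ${K_{ab}}^i$ are ${K_{\tilde u\tilde u}}^i=n^i{}_q\partial_q A$ and ${K_{\hat a\hat b}}^i=\tilde K_{\hat a\hat b}{}^i$, the latter being exactly the Euclidean extrinsic curvature of $X^q_M$ in $\mathbb{E}^{(D-2)}$.

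The conclusion follows immediately: $K^i=\gamma^{ab}{K_{ab}}^i=\gamma^{\tilde u\tilde u}{K_{\tilde u\tilde u}}^i+\gamma^{\hat a\hat b}\tilde K_{\hat a\hat b}{}^i$. The first term vanishes because $\gamma^{\tilde u\tilde u}=0$, and the second vanishes by the assumption that $X^q_M$ is minimal in $\mathbb{E}^{(D-2)}$. Hence $K^i=0$ in $\mathbb{L}^{(D)}$. There is no real obstacle here beyond bookkeeping; the conceptual point is that Type II embeddings simply drag the spatial minimal surface along both null directions of the plane wave, and the only Christoffel-induced contribution to the mean curvature lives on the $\tilde u\tilde u$ component, which is annihilated by the inverse induced metric precisely because $\partial_u$ is null. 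This also explains, in retrospect, why $R_{||}$ and $R_{//}$ were found to vanish identically for Type II embeddings in \eqref{r:pw}.
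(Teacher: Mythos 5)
Your proof is correct and is essentially the paper's own argument recast in the null coordinates $(u,v)$ from the outset: the paper computes ${K_{\tau\tau}}^{i}={K_{\tau z}}^{i}={K_{zz}}^{i}$ and then invokes $\gamma^{\tau\tau}+2\gamma^{\tau z}+\gamma^{zz}\propto\gamma^{uu}=0$ (justified in its footnote by exactly the coordinate change you perform), which is the same mechanism as your $\gamma^{\tilde u\tilde u}=0$ annihilating the single nonvanishing component ${K_{\tilde u\tilde u}}^{i}={n^{i}}_{q}\partial_q A$ of the $(u,v)$-block. Your intermediate steps (Christoffel symbols, purely transverse normals, block-diagonal induced metric) all check out, so this is the same proof presented in a slightly more transparent frame.
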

\begin{proof}
If $X^{q}_M$ parametrises a minimal surface in $\mathbb{E}^{(D-2)}$ then $\tilde K^{i}=0$ and we only need to show that $\gamma^{\tau\tau}{K_{\tau\tau}}^{i}+2\gamma^{\tau z}{K_{\tau z}}^{i}+\gamma^{zz}{K_{zz}}^{i}=0$. Direct computation shows that ${K_{\tau\tau}}^{i}={K_{\tau z}}^{i}={K_{zz}}^{i}$ where ${K_{\tau\tau}}^{i}$ is given by \eqref{eq:cpw}. Therefore we must show that $(\gamma^{\tau\tau}+2\gamma^{\tau z}+\gamma^{zz}){K_{\tau\tau}}^{i}=0$. However, for embeddings of \textbf{Type II} we have that $\gamma^{\tau\tau}+2\gamma^{\tau z}+\gamma^{zz}=0$.\footnote{This is easily seen when changing to $(u,v)$ coordinates by performing the inverse transformation of \eqref{eq:uv}. Then one finds that $\gamma^{\tau\tau}+2\gamma^{\tau z}+\gamma^{zz}\propto \gamma^{uu}=0$.} Therefore we obtain $K^{i}=0$.
\end{proof}
This shows that embeddings of \textbf{Type II} are always minimal embeddings. Therefore we obtain a variant of corollary \ref{cor:type11}, namely, 
\begin{corollary}  \label{cor:type2c}
If the embedding is of \textbf{Type II} then it satisfies the blackfold equations if $u^{a}u^{b}{K_{ab}}^{i}=0$, i.e.,
\beq \label{eq:c3}
{K_{\tau\tau}}^{i}+\Omega^a\Omega^b{K_{\phi_a\phi_b}}^{i}=0~~.
\eeq
\end{corollary}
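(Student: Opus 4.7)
The plan is to reduce the full extrinsic blackfold equation to the compact condition in the corollary by combining the immediately preceding theorem with the explicit structure of Type II embeddings and of the worldvolume Killing vector.

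First I would invoke the preceding theorem, which asserts $K^i = 0$ in $\mathbb{L}^{(D)}$ for any Type II embedding built from a spatial minimal surface $X^q_M \subset \mathbb{E}^{(D-2)}$. Substituting this into the extrinsic blackfold equation \eqref{eq:bfeom1} kills the left-hand side, so all non-trivial content collapses to the single condition $u^a u^b K_{ab}^i = 0$.

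Next I would expand this contraction using $u^a = \textbf{k}^a/\textbf{k}$ with the worldvolume Killing vector \eqref{eq:kvf}, $\textbf{k}^a \partial_a = \partial_\tau + \Omega^a \partial_{\phi_a}$. Note that no rotation is admissible along the non-compact wave direction $z$, so $\partial_z$ does not appear in $\textbf{k}^a$. Multiplying through by $\textbf{k}^2$, the condition $u^a u^b K_{ab}^i = 0$ becomes
\beq
K_{\tau\tau}^{\,i} + 2\Omega^a K_{\tau\phi_a}^{\,i} + \Omega^a \Omega^b K_{\phi_a \phi_b}^{\,i} = 0 .
\eeq
The stated form \eqref{eq:c3} then follows provided the cross terms ${K_{\tau\phi_a}}^{i}$ vanish identically, and this is the only computation that needs doing.

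The main task is therefore to verify ${K_{\tau\phi_a}}^{i} = 0$ directly from \eqref{eq:ext}. The first contribution $n^{i}{}_\mu \partial_\tau u^{\mu}{}_{\phi_a}$ vanishes because the mapping functions $X^q_M$ parametrising the spatial minimal surface are $\tau$-independent. The remaining Christoffel piece $n^{i}{}_\rho \Gamma^{\rho}{}_{t\nu} u^{\nu}{}_{\phi_a}$ involves only spatial indices $\nu$ tangent to $X^q_M$. Because the plane wave metric \eqref{ds:pw} satisfies $g_{t q} = g_{y q} = 0$, the normal covector $n^{i}{}_\rho$ has only transverse ($x^q$) components, so only connection coefficients of the form $\Gamma^{q'}{}_{t q}$ can enter; these vanish because $g_{q' q}$ is constant and $g_{t q'} = 0$. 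I expect this routine but slightly fiddly bookkeeping with the normal components to be the only potential pitfall, rather than a conceptual obstacle.
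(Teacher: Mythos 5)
Your proposal is correct and follows essentially the same route as the paper: the preceding theorem gives $K^{i}=0$ for \textbf{Type II} embeddings, so the extrinsic blackfold equation \eqref{eq:bfeom1} collapses to $u^{a}u^{b}{K_{ab}}^{i}=0$, which is then expanded along $\textbf{k}^{a}=\partial_\tau+\Omega^{a}\partial_{\phi_a}$. The paper states the corollary without proof as a variant of corollary \ref{cor:type11}, and your explicit verification that the cross terms ${K_{\tau\phi_a}}^{i}$ vanish (from the $\tau$-independence of $X^{q}_M$, the purely transverse normal covector, and $\Gamma^{q'}{}_{tq}=0$ in the metric \eqref{ds:pw}) correctly supplies the one detail left implicit in passing from $u^{a}u^{b}{K_{ab}}^{i}=0$ to \eqref{eq:c3}.
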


This concludes the analysis of embeddings of \textbf{Type II}.


\subsubsection*{Embeddings of \textbf{Type III}}

Finally we turn our attention to embeddings of \textbf{Type III}. We will now show that
\begin{theorem} \label{theo:type3}
If $X^{q}_M$ parametrises a static minimal surface in $\mathbb{S}^{(p-1)}$ then embeddings of \textbf{Type III} are minimal surfaces in $\mathbb{L}^{(D)}$ and, furthermore, solve the blackfold equations.
\end{theorem}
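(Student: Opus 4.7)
The plan is to decompose the second fundamental form block-by-block using the product structure of the Type III ansatz. First I would observe that because $\mathcal{W}_{p+1}$ wraps both the time and radial directions and coincides in the spherical factor with the minimal surface $X^q_M \subset \mathbb{S}^{(p-1)}$, the codimension of $\mathcal{W}_{p+1}$ in $\mathbb{L}^{(D)}$ equals the codimension of $X^q_M$ in $\mathbb{S}^{(p-1)}$; hence a complete set of normals $n^i$ can be chosen tangent to the ambient sphere and perpendicular to $X^q_M$ there, with $n^i_t = n^i_\rho = 0$.

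With these normals in hand, I would then compute the four blocks of ${K_{ab}}^i$ separately. The time-time block reduces to $n^i_\mu {\Gamma^\mu}_{tt}$; inspection of the metrics \eqref{ds:flat}, \eqref{ds:pw} and \eqref{ds:ds1} shows that ${\Gamma^\mu}_{tt}$ only has components along directions that are tangent to $\mathcal{W}_{p+1}$ (radial in de Sitter, vanishing in flat space, and along radial/longitudinal directions in the relevant plane-wave configurations), so $n^i_\mu {\Gamma^\mu}_{tt} = 0$. The radial-radial and mixed radial-angular blocks vanish for the same reason together with the orthogonality of $n^i$ to $\partial_\alpha X^q_M$, using in particular ${\Gamma^\beta}_{r\alpha} \propto \delta^\beta_\alpha/\rho$ to see that $n^i_\beta \partial_\alpha X^\beta_M = 0$ kills the cross-term. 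The angular-angular block is the place where minimality of $X^q_M$ enters: using the warped-product form of the induced metric, ${K_{\alpha\beta}}^i$ reduces to the second fundamental form of $X^q_M$ inside the round unit sphere (the tangential contribution from ${\Gamma^r}_{\alpha\beta} \propto -\rho f\, g_{\alpha\beta}$ is annihilated by $n^i_r = 0$), and tracing with $\gamma^{\alpha\beta} = \rho^{-2} \tilde\gamma^{\alpha\beta}$ produces exactly the mean curvature $\tilde K^i$ of $X^q_M$ in $\mathbb{S}^{(p-1)}$, which vanishes by hypothesis. Summing the four blocks yields $K^i = 0$, establishing that the embedding is minimal in $\mathbb{L}^{(D)}$.

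Finally, for the remaining blackfold equation in \eqref{eq:bfeom1}, staticity fixes $u^a \propto \partial_\tau$, so $u^a u^b {K_{ab}}^i = R_0^{-2} {K_{\tau\tau}}^i = 0$ by the first block calculation. Both sides of $K^i = n u^a u^b {K_{ab}}^i$ therefore vanish separately, and the boundary condition $\textbf{k}|_{\partial \mathcal{W}_{p+1}} = 0$ is realised automatically at the locus $R_0 = 0$ (the cosmological horizon in de Sitter, or the analogous limiting surface in plane waves with negative eigenvalues). The main technical hurdle is the bookkeeping inside the angular-angular block: one has to verify that the $\rho^2$ factors in the warped-product induced metric and its inverse conspire so that the trace produces the mean curvature of $X^q_M$ in the unit sphere, rather than in the sphere of radius $\rho$; this is a careful but routine application of the Gauss formula with a warping factor, and once it is done the rest of the argument is essentially orthogonality algebra.
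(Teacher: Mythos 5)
Your proof is correct and follows essentially the same route as the paper: a block-by-block evaluation of ${K_{ab}}^{i}$ exploiting the fact that the relevant normals have vanishing $t$ and $\rho$ components, with minimality of $X^{q}_M$ entering only through the angular trace and staticity killing $u^{a}u^{b}{K_{ab}}^{i}$. The only (cosmetic) differences are that you dispose of the mixed block ${K_{\rho\hat a}}^{i}$ by orthogonality of $n^{i}$ to $\partial_{\hat a}X^{q}_M$, whereas the paper simply notes it is projected out of the trace by $\gamma^{\rho\hat a}=0$, and the paper first treats the round-sphere case separately in conformally flat coordinates, which your argument subsumes.
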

\begin{proof}
We begin by showing that if $X^{q}_M$ parametrises a $(p-1)$-dimensional sphere then embeddings of \textbf{Type
III} are miminal. We note that these embeddings can be obtained by first introducing conformally spatially flat
coordinates as in \eqref{ds:ds2}, setting $X^{q}_M=0$ for $q=p+1,...,D-1$, switching back to the original
coordinates \eqref{ds:ds1} and choosing the remaining functions $X^{q}_M$ for $q=1,..,p$ to parametrise the
$(p-1)$-dimensional sphere. Therefore, in all transverse directions $i=p+2,...,D-1$ these embeddings are
point-like and located at $X^{i}_M=0$. Therefore, by the arguments given below \eqref{eq:cds} we have that
${K_{ab}}^{i}=0$. Embeddings of \textbf{Type III} where $X^{q}_M$ parametrises a $(p-1)$-dimensional sphere are
in fact just embeddings of $\mathbb{R}^{(p)}$ into $\mathbb{R}^{(D-1)}$ and hence are of course minimal, as in
the case of the plane $\mathbb{R}^{2}$ into $\mathbb{R}^{3}$. The blackfold equations \eqref{eq:bfeom} are
still satisfied if we introduce rotation since for this embedding the extrinsic curvature is identically 
zero, ${K_{ab}}^{i}=0$. 

Given this, it is now easy to show that if $X^{q}_M$ parametrises a static minimal surface on the unit $(p-1)$-dimensional sphere then the embedding is still minimal and it solves the blackfold equations. If $X^{q}_M$ parametrises a minimal surface then some components of the extrinsic curvature tensor will be non-vanishing along the directions where the geometry is not point-like. Labelling the coordinates on the $(p-1)$-dimensional sphere as $\hat{a},\hat{b}...=1,...,p-1$ then since $X^{q}_M$ is minimal one has that $\gamma^{\hat a \hat b}{K_{\hat a \hat b}}^{i}=0$. Therefore we only need to check what happens to the components ${K_{\tau\tau}}^{i},{K_{\tau \hat a}}^{i},{K_{\tau \rho}}^{i},{K_{\rho\rho}}^{i},{K_{\rho\hat{a}}}^{i}$. By looking at the Christoffel symbols one sees that they all vanish except for ${K_{\rho\hat{a}}}^{i}={n^{i}}_{\lambda}{\Gamma^{\lambda}}_{\rho \hat a}$. However, due to the form of the embedding \eqref{ds:type3} one has that $\gamma^{\rho\hat{a}}=0$, therefore we obtain $K^{i}=0$. Since the geometry is static then $u^{a}u^{b}{K_{ab}}^{i}=0$ and hence the blackfold equations \eqref{eq:bfeom} are satisfied.
\end{proof}
We note that this theorem holds for all space-times of the form \eqref{ds:ds1} including the limits $L\to
\infty$ and $r_m\to 0$. However, as explained at the end of Sec.~\ref{sec:cembed}, if $r_m=0$ then these
solutions suffer from a conical singularity and do not fulfill the validity requirements \eqref{eq:req}. If we consider stationary embeddings instead, then corollary \ref{cor:type11} holds for embeddings of \textbf{Type III}.


\subsection{Classes of solutions}\label{sec:sclasses}
In this section we find different classes of stationary minimal surface solutions in the ambient space-times
described in Sec.~\ref{sec:cembed}. In order to find stationary minimal surface solutions it is necessary to
know which minimal surfaces preserve at least one $U(1)$ family of isometries of the ambient space-time. This
is a difficult problem in general but for minimal surfaces embedded into $\mathbb{R}^{3}$ we will show
that\footnote{Theorem \ref{theo:u1} has actually been proven e.g.\ in \cite{Nitsche:2011} in a different way. However we have decided to present the reader with its proof, which will be useful for the next sections in this paper.}
\begin{theorem} \label{theo:u1}
If $X^{q}_M$ parametrises a minimal surface in $\mathbb{R}^{3}$ which preserves one $U(1)$ family of isometries
of the ambient space-time then it is either the plane, the helicoid, the catenoid or a member of a
one-parameter family of surfaces interpolating between the helicoid and the catenoid (``Scherk's second
surface'').
\end{theorem}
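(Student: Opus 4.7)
The plan is to classify the one-parameter isometry subgroups of $\mathbb{R}^{3}$ up to conjugation, parametrise the surfaces invariant under each such subgroup, and then impose the minimal surface condition $K^{i}=0$ so as to reduce the problem to an ODE whose solutions yield the claimed surfaces. A one-parameter family of isometries of $\mathbb{R}^{3}$ is generated by a Killing vector field, which is a linear combination of translations and rotations. After conjugation by a rigid motion, this generator can be brought into one of three canonical forms: (i) a pure translation $\partial_{z}$; (ii) a pure rotation $\partial_{\theta}$ about a fixed axis; or (iii) a screw motion $\partial_{\theta}+c\,\partial_{z}$ with $c\ne 0$, combining rotation and translation along the common axis.

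In case (i) the invariant surfaces are cylinders over a planar curve $\gamma\subset\mathbb{R}^{2}$; the mean extrinsic curvature of such a cylinder is proportional to the signed curvature of $\gamma$, so minimality forces $\gamma$ to be a straight line and the surface is a plane. In case (ii) the invariant surfaces are surfaces of revolution, which can be written as $z=z(\rho)$ with $\rho$ the distance from the axis. Extremising the area $\int \rho\sqrt{1+z'(\rho)^{2}}\, d\rho$ produces the first integral $\rho z'/\sqrt{1+z'^{2}}=k$; for $k=0$ this gives $z=\mathrm{const}$ (a plane perpendicular to the axis), while for $k\ne 0$ it integrates to the catenary $z(\rho)=k\cosh^{-1}(\rho/k)+z_{0}$, whose revolution is the catenoid.

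Case (iii) is the main one. A helicoidal surface with screw symmetry can be parametrised as
\begin{equation}
X(r,\theta)=(r\cos\theta,\; r\sin\theta,\; c\theta+h(r))~~,
\end{equation}
whose induced area element is $\sqrt{|\gamma|}\, dr\, d\theta = \sqrt{r^{2}(1+h'(r)^{2})+c^{2}}\, dr\, d\theta$. Since the integrand is independent of $\theta$ and of $h$, the Euler-Lagrange equation integrates once to
\begin{equation}
\frac{r^{2}\,h'(r)}{\sqrt{r^{2}(1+h'(r)^{2})+c^{2}}}=k~~,
\end{equation}
which reduces the minimal surface equation to quadrature and yields a one-parameter family of solutions; this family is precisely Scherk's second surface. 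Its two extreme limits are $k=0$, for which $h'\equiv 0$ and the surface is the standard helicoid of pitch $c$ (recovering the parametrisation \eqref{e:helicoid}), and $c=0$, in which case the ODE reduces to that of case (ii) and the solutions are the catenoids. Taken together with case (i), this exhausts the four families asserted in the theorem.

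The main obstacle is in case (iii): one must verify that reducing the first integral to quadrature genuinely produces a single one-parameter family modulo the ambient isometries that commute with the screw motion, and that the helicoid and catenoid sit at the two endpoints of this family rather than as isolated degenerations. This is tractable once the first integral is in hand, but requires some care in tracking the integration constants and the boundary behaviour of $h(r)$ at $r=0$ and as $r\to\infty$, so as to exclude spurious branches and identify the Scherk family uniquely.
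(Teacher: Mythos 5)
Your proof is correct in substance but follows a genuinely different route from the paper's. You classify the one-parameter isometry subgroups of $\mathbb{R}^{3}$ up to conjugation (translation, rotation, screw motion), write the invariant surfaces in each case as graphs, and exploit the cyclic variable in the reduced area functional to obtain a first integral; the Scherk family then appears as the quadrature of that first integral in the screw case. The paper instead keeps a single unified parametrisation \eqref{e:iso} adapted to the Killing vector \eqref{eq:killing}, passes to isothermal coordinates compatible with the isometry, and uses the fact that a conformally parametrised surface is minimal iff its coordinate functions are harmonic; this turns the problem into the linear ODEs $y''-y'=0$, $z''-z'=0$, $k''=0$, whose general solution is the interpolating family \eqref{sol:iso} with the helicoid and catenoid at $\tilde\lambda=1,0$. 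Your approach is more elementary (no appeal to isothermal coordinates or harmonicity) and makes the group-theoretic origin of the three cases transparent; the paper's approach gets the whole one-parameter family in closed form in one stroke and avoids case analysis. Two points in your argument deserve explicit justification rather than the passing remark you give: (a) extremising the area within the invariant ansatz is equivalent to $K=0$ only because the mean curvature of an invariant surface is constant along orbits, so that invariant normal variations already detect it (a symmetric-criticality step you should state); and (b) the graph assumptions $z=z(\rho)$ and $z-c\theta=h(r)$ fail at the waist $r=|k|$ where $h'\to\infty$ --- the full surface is two graph branches glued along an orbit there, exactly as for the catenoid, and one must check (e.g.\ by parametrising the generating curve by arclength) that no further branches arise. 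Neither issue changes the conclusion, and the paper's isothermal-coordinate route sidesteps both automatically.
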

\begin{proof}
A two-dimensional minimal surface can at most preserve one $U(1)$ family of isometries of $\mathbb{R}^{3}$. If $(\rho,\phi)$ are a set of coordinates on the surface and $\phi$ labels the coordinate associated with the isometry of the worldvolume geometry then the induced metric can be written as
\beq \label{ds:iso}
\textbf{ds}^2=f(\rho)d\rho^2+2g(\rho)d\rho d\phi+h(\rho)d\phi^2~~,
\eeq
for some functions $f(\rho),g(\rho),h(\rho)$. Note that since the worldvolume preserves one family of
isometries of the background, the metric coefficients cannot depend on $\phi$. The metric \eqref{ds:iso} has a Killing vector field $\Omega \chi^{a}\partial_a=\Omega\partial_\phi$, where $\Omega$ is the boost velocity of the embedding. If the embedding \eqref{ds:iso} preserves at least a $U(1)$ symmetry of the background then this Killing vector field must map to a Killing vector field of the ambient space-time, i.e., $k^{\mu}=\Omega\partial_a X^{\mu}\chi^{a}$.  Coordinates on the surface can always be chosen such that the preserved $U(1)$ symmetry lies in the $(x_1,x_2)$ plane. Therefore we must have
\beq \label{eq:killing}
k^{\mu}\partial_\mu=\alpha\left(x_1\partial_{x_2}-x_2\partial_{x_1}\right)+\beta \partial_{x_3}~~,
\eeq
for some constants $\alpha$ and $\beta$.\footnote{Note that for this to be a Killing vector field in plane wave
space-times \eqref{ds:pw} we must have $A_1=A_2$ and $A_3=0$ as well as $A_{x_q x_r}=0$ for $q\ne r$ and
$q,r=1,2,3$. Note also that 
it is possible to consider additonally translations in the $x_1$ and $x_2$ directions. 
However, these do not affect the 
results in any significant way since they can always be absorbed by shifting $X^{1}$ and $X^{2}$ by a constant.}. Using that $k^{\mu}=\Omega\partial_a X^{\mu}\chi^{a}$ we find that
\beq \label{eq:kill}
\Omega \partial_\phi X^{1}=-\alpha X_2~~,~~\Omega \partial_\phi X^2=\alpha X_1~~,~~\Omega \partial_\phi X^{3}=\beta~~.
\eeq
From \eqref{ds:iso} we also have that
\beq
\sum_{\mu=1}^{3}\left(\partial_\phi X^{\mu}\right)^{2}=h(\rho)~~.
\eeq
Introducing \eqref{eq:kill} into the above equation we find
\beq
(X^1)^2+(X^2)^2=\frac{\Omega^2 h(\rho)-\beta^2}{\alpha^2}~~.
\eeq
We see that this is the equation for a circle in $X^{1},X^{2}$, therefore we are free to introduce coordinates such that
\beq \label{e:iso}
\begin{split}
&X^{1}(\rho,\phi)=\tilde \lambda y(\rho)\sin(a\phi)+\sqrt{1-\tilde\lambda^2}z(\rho)\cos(a\phi)~~,\\
&X^{2}(\rho,\phi)=-\tilde \lambda y(\rho)\cos(a\phi)+\sqrt{1-\tilde\lambda^2} z(\rho)\sin(a\phi)~~,\\
&X^{3}(\rho,\phi)=\tilde \lambda a \phi + \sqrt{1-\tilde\lambda^2}k(\rho)~~,
\end{split}
\eeq
for some constants $a,\tilde \lambda$ and some functions $y(\rho),z(\rho),k(\rho)$. From here we see that $\alpha=a\Omega$ and $\beta=\tilde \lambda a\Omega$ and that 
\beq
\begin{split}
&f(\rho)=\tilde \lambda^2y'(\rho)^2+(1-\tilde \lambda^2)\left(z'(\rho)^2+k'(\rho)^2\right)~~,\\
&g(\rho)=-a\tilde \lambda\sqrt{1-\tilde \lambda^2} \left(z(\rho)y'(\rho)- k'(\rho)-y(\rho)z'(\rho)\right)~~,\\
&h(\rho)=a^2\left(\tilde \lambda^2(1+y^2(\rho))+(1-\tilde \lambda^2)z^2(\rho)\right)~~,
\end{split}
\eeq
where the prime denotes a derivative with respect to $\rho$. It is always possible to introduce isothermal
coordinates $(\tilde \rho, \tilde \phi)$ such that the induced metric is conformally flat, i.e.\ 
$\gamma_{\tilde{\rho}\tilde{\rho}} = \gamma_{\tilde{\phi}\tilde{\phi}}, \gamma_{\tilde{\rho}\tilde{\phi}}=0$
(see e.g \cite{OperaBook, MeeksReview}). In the case at hand, this can be done in a way compatible with
the manifest isometry associated with the original $\phi$ direction, i.e.\ in such a way that 
$f(\tilde \rho)=h(\tilde \rho)$ and $g(\tilde \rho)=0$. Indeed, this can be accomplished by 
performing the transformation
$\rho=w_1(\tilde \rho)$ and $\phi\to\tilde \phi+w_2
(\tilde \rho)$ for some functions $w_1(\tilde \rho)$ and $w_2(\tilde \rho)$. Then the embedding \eqref{e:iso} takes the same form but
with modified functions $\tilde y(\tilde \rho),\tilde z(\tilde \rho),
\tilde k(\tilde \rho)$. Therefore, dropping the tildes, we can always
choose functions $y(\rho),z(\rho),k(\rho)$ such that
\beq \label{helcat:c}
\begin{split}
&\tilde \lambda^2\left(y'(\rho)^2-y(\rho)^2-1\right)+(1-\tilde\lambda^2)\left(z'(\rho)^2-z(\rho)^2+k'(\rho)^2\right)=0~~,\\
&z(\rho)y'(\rho)-y(\rho)z'(\rho)-k'(\rho)=0~~,
\end{split}
\eeq 
where we have rescaled $\phi\to\phi/a$ for simplicity. We do not need to solve this explicitly, instead we note that if a surface, embedded in $\mathbb{R}^{3}$, is written in isothermal coordinates then it is minimal if $X^i(\rho,\phi)$ for $i=1,2,3$ is harmonic \cite{OperaBook, MeeksReview}, i.e., if $\partial^2_\rho X^{i}(\rho,\phi)+\partial^2_\phi X^{i}(\rho,\phi)=0$. This means that, assuming $y(\rho),z(\rho),k(\rho)$ to satisfy \eqref{helcat:c}, then from \eqref{e:iso} we must have
\beq \label{sol:helcat}
\begin{split}
&\tilde \lambda \sin\phi \left(y''(\rho)-y'(\rho)\right)+\sqrt{1-\tilde\lambda^2}\cos\phi\left(z''(\rho)-z'(\rho)\right)=0~~,\\
&k''(\rho)=0~~.
\end{split}
\eeq
The first condition was obtained from $X^{1}(\rho,\phi)$ and is equivalent to the one obtained form $X^{2}(\rho,\phi)$. The second condition was obtained from $X^{3}(\rho,\phi)$ and is solved if $k(\rho)=a_k\rho+b_k$ for some constants $a_k,b_k$. Without loss of generality we can set $a_k=1$ and $b_k=0$. Since the first condition in \eqref{sol:helcat} must be solved for all $\phi$ then we must have that $y''(\rho)-y'(\rho)=0$ and $z''(\rho)-z'(\rho)=0$. This leads to the requirement that $y(\rho),z(\rho)$ must be of the form
\beq \label{e:gensol}
y(\rho)=a_ye^{\rho}+b_ye^{-\rho}~~,~~z(\rho)=a_ze^{\rho}+b_ze^{-\rho}~~,
\eeq
for some constants $a_y,b_y,a_z,b_z$. By using the freedom to translate $\rho$ by a constant $d$ such that $\rho\to\rho+d$ we can set $a_z=b_z$. Introducing this into \eqref{helcat:c} allows to find expressions for $a_y$ and $b_y$ in terms of $a_z,\lambda$. There is only one solution which is valid for all $\lambda$, namely, $a_y=a_z=b_z=-b_y=1/2$. By rescaling $\rho\to\rho/c$ and $X^{3}(\rho,\phi)\to cX^{3}(\rho,\phi)$ we bring the solution \eqref{e:gensol} to a more familiar form
\beq \label{sol:iso}
y(\rho)=c \sinh\left(\frac{\rho}{c}\right)~~,~~z(\rho)=c\cosh\left(\frac{\rho}{c}\right)~~,
\eeq
which is unique up to reparametrizations of the coordinate $\rho$. In fact, this configuration is a family of
minimal surfaces that interpolates between the helicoid ($\tilde \lambda=1$) and the catenoid ($\tilde
\lambda=0$) and is known as Scherk's second surface \cite{Sauvigny2010}. We shall refer to these interpolating
surfaces simply as Scherk surfaces in this paper.\footnote{This family of surfaces was also called 
\textit{Helicatenoids} in \cite{Ogawa1992}.} We will analyse in detail this general solution in Sec.~\ref{sec:catenoid}. The metric \eqref{ds:iso}, using \eqref{sol:iso} and after rescaling back $\phi\to a \phi$, is diagonal and takes the form
\beq
f(\rho)=\frac{h(\rho)}{a^2 c^2}=\cosh^2\left(\frac{\rho}{c}\right)~~,~~g(\rho)=0~~,
\eeq
with extrinsic curvature components\footnote{Note that we have omitted the transverse index $i$ from ${K_{ab}}^{i}$ since the surface is of codimension one.}
\beq \label{ext:full}
{K_{\rho\rho}}=-\frac{\sqrt{1-\tilde\lambda^2}}{c}~~,~~{K_{\rho\phi}}=-a\tilde\lambda~~,~~{K_{\phi\phi}}=a^2c\sqrt{1-\tilde\lambda^2}~~.
\eeq
The solution \eqref{sol:iso} can be seen as a combination of two different cases:
\begin{itemize}
\item \textbf{The catenoid $\tilde \lambda=0$}: in this case, the minimal surface equation \eqref{eq:min} yields 
\beq
\frac{z''(\rho)}{1+z'(\rho)^2}-\frac{1}{z(\rho)}=0~~,
\eeq
which has a unique solution, namely \eqref{sol:iso}. The extrinsic curvature components take the form
\beq \label{ext:catenoid}
{K_{\rho\rho}}=-\frac{1}{c}~~,~~{K_{\rho\phi}}=0~~,~~{K_{\phi\phi}}=a^2 c~~.
\eeq
\item \textbf{The helicoid $\tilde \lambda=1$}: in this case, by redefining $ac=\lambda$ and hence making the embedding \eqref{e:iso} into the same form as that of \eqref{e:helicoid}, the only non-vanishing extrinsic curvature component is
\beq \label{ext:helicoid}
{K_{\rho\phi}}=-\frac{a\lambda y'(\rho)}{\sqrt{(\lambda^2+a^2y^2(\rho))}}~~.
\eeq
Since the metric \eqref{ds:iso} has no component $\gamma_{\rho \phi}$ then the minimal surface equation \eqref{eq:min} is automatically satisfied for these embeddings independently of the form of $y(\rho)$. Therefore we are free to choose $y(\rho)=\rho$ as in \eqref{e:helicoid} or as that given in \eqref{sol:iso}. Hence we recover the helicoid \eqref{e:helicoid}, which if $\lambda=0$ reduces to the plane.

\end{itemize}
This completes the proof.
\end{proof}


\subsubsection*{Solutions for flat space-time}
According to corollary \ref{cor:type1}, stationary minimal surfaces of \textbf{Type I} must satisfy $u^{\hat a}u^{\hat b}{K_{\hat a \hat b}}^{i}=0$. In $\mathbb{R}^{3}$, we have seen that there are only four possibilities of stationary minimal surfaces in which case, according to \eqref{eq:stress}, one has that $u^{\tau}=\textbf{k}^{-1}$ and $u^{\phi}=\Omega \textbf{k}^{-1}$. Hence we must satisfy $u^{\phi}u^{\phi}{K_{\phi\phi}}^{i}=0$ which implies that we must have ${K_{\phi\phi}}^{i}=0$. From \eqref{ext:full} we arrive at the following conclusion
\begin{corollary} \label{cor:type1sol}
The only two stationary minimal surfaces of \textbf{Type I}, where $X^{q}_M$ parametrises a minimal surface embedded in $\mathbb{R}^{3}$, that solve the blackfold equations in flat space-time are the plane and the helicoid.
\end{corollary}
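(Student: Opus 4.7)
The plan is to combine the classification of Theorem~\ref{theo:u1} with the selection rule of Corollary~\ref{cor:type1} and then read off which members of the classified list pass the selection rule. Because stationarity of a Type I embedding requires the worldvolume Killing vector $\textbf{k}^{a}\partial_{a}=\partial_\tau+\Omega\partial_\phi$ to descend from a background Killing vector (as reviewed around \eqref{eq:kvf}), the spatial minimal surface $X^{q}_M$ must preserve at least one $U(1)$ family of isometries of $\mathbb{R}^{3}$. Theorem~\ref{theo:u1} then reduces the candidate list to exactly four possibilities: the plane, the helicoid, the catenoid, and the one-parameter family of Scherk surfaces interpolating between them.

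Next, I would apply Corollary~\ref{cor:type1}: since the ambient space is flat, any such stationary embedding solves the blackfold equations provided $u^{\hat a}u^{\hat b}{K_{\hat a \hat b}}^{i}=0$. The fluid velocity is aligned with the worldvolume Killing vector, so its only nonzero components are $u^{\tau}=\textbf{k}^{-1}$ and $u^{\phi}=\Omega\,\textbf{k}^{-1}$. The temporal index gives no contribution on the spatial extrinsic curvature, and the cross term $u^{\tau}u^{\phi}{K_{\tau\phi}}^{i}$ vanishes because the spatial embedding does not depend on $\tau$. Hence the selection rule collapses to the single scalar condition ${K_{\phi\phi}}^{i}=0$.

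Finally, I would substitute the extrinsic curvature components obtained in the proof of Theorem~\ref{theo:u1}, namely
\beq
{K_{\phi\phi}}=a^{2}c\sqrt{1-\tilde\lambda^{2}}~~,
\eeq
which holds for the entire Scherk family \eqref{sol:iso}. This vanishes if and only if $\tilde\lambda=1$, which is precisely the helicoid branch of \eqref{e:iso}, with the plane recovered as the $\lambda=0$ sub-case. Both the catenoid ($\tilde\lambda=0$) and every genuinely interpolating Scherk surface ($0<\tilde\lambda<1$) are excluded because ${K_{\phi\phi}}\neq 0$ there, ruling them out as rotating blackfold horizons in flat space-time. This establishes the corollary.

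There is no real obstacle in the argument itself: the heavy lifting is done by Theorem~\ref{theo:u1}, and the rest is a one-line check against the explicit extrinsic curvature. The only point requiring a small amount of care is justifying that no additional cross terms in $u^{a}u^{b}{K_{ab}}^{i}$ contribute, which follows from the fact that the spatial mapping functions $X^{q}_M$ are $\tau$-independent in a Type I embedding and that the background connection vanishes in Cartesian coordinates on $\mathbb{R}^{(D-1)}$.
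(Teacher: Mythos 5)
Your proposal is correct and follows essentially the same route as the paper: restrict to the four candidates via Theorem~\ref{theo:u1}, reduce the condition $u^{\hat a}u^{\hat b}{K_{\hat a \hat b}}^{i}=0$ of Corollary~\ref{cor:type1} to ${K_{\phi\phi}}^{i}=0$ using $u^{\phi}=\Omega\,\textbf{k}^{-1}$, and then read off from \eqref{ext:full} that ${K_{\phi\phi}}=a^{2}c\sqrt{1-\tilde\lambda^{2}}$ vanishes only on the helicoid branch $\tilde\lambda=1$ (with the plane as the $\lambda=0$ sub-case). The extra care you take in checking that the temporal and cross terms of $u^{a}u^{b}{K_{ab}}^{i}$ drop out is consistent with, and slightly more explicit than, the paper's presentation.
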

It is a difficult problem to make equivalent statements in $\mathbb{R}^{(D-1)}$, however, we will show in Sec.~\ref{sec:examples} and in App.~\ref{sec:higherhelicoid} that higher-dimensional generalisations of the plane and the helicoid, respectively, also solve the blackfold equations. 


\subsubsection*{Solutions for plane wave space-times}
In asymptotic plane wave space-times \eqref{ds:pw} solutions of the blackfold equations can be of \textbf{Type I} or \textbf{Type II}. If they are of \textbf{Type I} and $X^{q}_M$ parametrises a minimal surface in $\mathbb{L}^{(D)}$ then according to corollary \ref{cor:type11}, it must satisfy $u^{\hat a}u^{\hat b}{K_{\hat a \hat b}}^{i}=0$, otherwise it must satisfy \eqref{eq:bftype1}. In this case we can show the following
\begin{theorem} \label{theo:solpw}
The only two stationary minimal surfaces of \textbf{Type I}, where $X^{q}_M$ parametrises a minimal surface embedded in $\mathbb{R}^{3}$, that solve the blackfold equations in plane wave space-times with diagonal $A_{qr}$ are the plane and the helicoid.
\end{theorem}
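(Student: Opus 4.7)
The plan is to combine Theorem~\ref{theo:u1} with the specialisation of Corollary~\ref{cor:type111} to plane-wave backgrounds, and then to test the four resulting candidate surfaces one by one.

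Since a stationary \textbf{Type~I} embedding has a worldvolume Killing vector $\partial_\tau+\Omega\partial_\phi$ that must push forward to an isometry of the background, the spatial profile $X^q_M$ is $U(1)$-symmetric in the ambient $\mathbb{R}^3$. Theorem~\ref{theo:u1} then restricts $X^q_M$ to the plane, the helicoid, the catenoid, or a Scherk surface. For the plane-wave metric \eqref{ds:pw} with diagonal $A_{qr}$ the spatial connection coefficients vanish, so $\gamma^{\hat a\hat b}\Gamma^{i}{}_{\hat a\hat b}=0$ and \eqref{eq:bftype1} reduces, in each transverse direction $i$, to
\[
\frac{\textbf{k}^2+nR_0^2}{R_0^2}\,{K_{\tau\tau}}^i+n\,\Omega^2\,{K_{\phi\phi}}^i=0,
\]
with ${K_{\tau\tau}}^i=\sum_{q}n^i_q X^q_M A_q$ from \eqref{eq:cpw} and ${K_{\phi\phi}}^i$ inherited from flat space via \eqref{ext:full}.

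For the plane all flat-space extrinsic components vanish identically, so the blackfold equation collapses to ${K_{\tau\tau}}^i=0$; imposing in addition that the rotational generator lift to a background Killing vector forces $A_1=A_2$, whereupon $n^i_q X^q_M A_q$ vanishes in every normal direction. For the helicoid the only non-vanishing flat-space extrinsic component is ${K_{\rho\phi}}$, while ${K_{\phi\phi}}^i=0$; here, requiring that the lifted Killing vector $a\Omega(-X^2\partial_{x^1}+X^1\partial_{x^2})+\Omega\lambda\,\partial_{x^3}$ preserve the background forces both $A_1=A_2$ and $A_3=0$, and again ${K_{\tau\tau}}^i=0$ follows. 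Hence both surfaces solve the blackfold equations.

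The remaining candidates, the catenoid ($\tilde\lambda=0$) and the Scherk surfaces ($0<\tilde\lambda<1$), have ${K_{\phi\phi}}^i$ a non-zero constant by \eqref{ext:full}. The plan here is to insert the parametrisation \eqref{sol:iso} with $z(\rho)=c\cosh(\rho/c)$ into the reduced blackfold equation, multiply through by $R_0^2$, and expand in the linearly independent family $\{1,\,\rho^2,\,\cosh^2(\rho/c),\,\rho\tanh(\rho/c),\,\rho\sinh(\rho/c)\cosh(\rho/c),\,\rho^3\tanh(\rho/c)\}$. Collecting coefficients yields a finite polynomial system in $A_1,A_3,\Omega$ whose only consistent solution forces $\Omega\,{K_{\phi\phi}}^i=0$, contradicting the stationary rotating assumption. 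The main obstacle is precisely this last step: organising the expansion in the correct basis of transcendental functions of $\rho$ and verifying that no combination of $A_q$ (subject to the tracelessness condition of vacuum plane waves) permits a non-trivial rotating solution. The additional normal directions $i\geq 4$, in which the brane is point-like, contribute no new constraint since ${K_{ab}}^i=0$ there by the remark below \eqref{eq:cpw}, so the analysis reduces to the single non-trivial normal direction within the $\mathbb{R}^3$ ambient of $X^q_M$.
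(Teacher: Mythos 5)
Your positive direction (the plane and the helicoid solve the equations) is sound and essentially matches the paper: with $A_1=A_2$, $A_3=0$ both ${K_{\tau\tau}}^{i}$ and ${K_{\phi\phi}}^{i}$ vanish for these two embeddings, so \eqref{eq:bftype1} holds trivially. However, the step you yourself flag as ``the main obstacle'' --- excluding the catenoid and the Scherk family --- is precisely the content of the theorem, and you have only sketched a plan for it rather than carried it out. Moreover, the plan is organised in an unnecessarily heavy way. In the paper's proof the exclusion is almost immediate: the requirement that \eqref{eq:killing} be a background Killing vector forces $A_2=A_1$, the term proportional to $\rho\tanh(\rho/c)$ (coming from $A_3 X^3$ in ${K_{\tau\tau}}$) forces $A_3=0$, and after these two eliminations the residual equation \eqref{eq:nocat} contains only a constant piece and a $\cosh^2(\rho/c)$ piece whose coefficients demand $A_1>0$ and $A_1<0$ simultaneously. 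No expansion over a six-element basis of transcendental functions is needed, and the contradiction is a sign clash rather than the conclusion $\Omega\,{K_{\phi\phi}}^{i}=0$ that you anticipate. As written, your proposal asserts the outcome of a computation it does not perform, so the ``only'' half of the theorem is not established.

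A second, more structural difference: you restrict the candidate set to the four surfaces of Theorem~\ref{theo:u1} at the outset, on the grounds that stationarity forces a $U(1)$-symmetric spatial profile. That is legitimate only when $\Omega\neq 0$. The paper's proof instead opens with a Monge-parametrisation analysis of ${K_{\tau\tau}}=0$ for an \emph{arbitrary} minimal graph $f(u,v)$, solving the first-order PDE $-A_1 u f_u - A_2 v f_v + A_3 f=0$ by characteristics and then intersecting with the minimal surface equation \eqref{eq:minm}; this shows that among \emph{all} minimal surfaces in $\mathbb{R}^3$ (Enneper, Henneberg, etc.\ included, not just the $U(1)$-symmetric ones) only the plane and the helicoid can be Lorentzian-minimal in the plane-wave background. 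That branch is what covers static configurations, for which your $U(1)$ argument gives no restriction. If you intend the theorem to include $\Omega=0$, you need either this PDE analysis or an equivalent argument; if you restrict to genuinely rotating configurations, you should say so explicitly.
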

\begin{proof}
First we consider the case where $X^{q}_M$ also parametrises a minimal surface in $\mathbb{L}^{(D)}$ and hence must satisfy \eqref{eq:cpw}. Using the Monge parametrisation of Sec.~\ref{sec:minimal} we can write \eqref{eq:c} as\footnote{Note that we are assuming the embedding to be point-like in all other transverse directions.}
\beq \label{eq:minpw}
{K_{\tau\tau}}=\frac{1}{\sqrt{1+f_u^2+f_v^2}}\left(-A_{1}uf_u-A_{2}vf_v+A_3f(u,v)\right)=0~~.
\eeq
We split the solutions of this equation into two sub cases:
\begin{itemize}
\item \textbf{The plane:} the simplest solution of \eqref{eq:minpw} consists of choosing $f(u,v)=0$, which also trivially solves the minimal surface equation \eqref{eq:minm}. This describes the $\mathbb{R}^2$ plane sitting at $x_3=0$, which can be seen by introducing polar coordinates $(u,v)=(r\cos\theta,r\sin\theta)$.\footnote{If one considers off-diagonal components of $A_{rs}$ it is possible to obtain an arbitrary $\mathbb{R}^2$ plane embedded into $\mathbb{R}^3$, and not necessarily sitting at $x_3=0$. This is described by an equation of the form $a f_u + b f_v + c=0$.}

\item \textbf{The helicoid:} the general solution to equation \eqref{eq:minpw} requires $f(u,v)$ to be of the form
\beq
f(u,v)=u^{\frac{A_3}{A_1}}f(u^{-\frac{A_2}{A_1}}v)~~.
\eeq 
Introducing this into the minimal surface equation \eqref{eq:minm} requires to set $A_1=A_2~,~A_3=0$ and solving the equation
\beq
2uvf'\left(\frac{v}{u}\right)+(u^2+v^2)f''\left(\frac{v}{u}\right)=0~~,
\eeq
where the prime represents a derivative with respect to $v/u$. This has a unique solution
\beq
f(u,v)=\alpha \arcsin\left(\frac{v}{u\sqrt{1+\frac{v^2}{u^2}}}\right)~~,
\eeq
for some constant $\alpha$, up to reparametrizations of $f(u,v)$. By introducing polar coordinates $(u,v)=(r\cos(a\theta),r\sin(a\theta))$ and defining $\lambda=\alpha a$ this gives the parametrisation of the helicoid \eqref{e:helicoid}.
\end{itemize}
Since the helicoid has extrinsic curvature \eqref{ext:helicoid} (which includes the case of the plane when $\lambda=0$) then they satisfy corollary \ref{cor:type11} and also \eqref{eq:bftype1}. Hence, they are solutions of the blackfold equations in these space-times.\footnote{We have considered off-diagonal terms in $A_{qr}$, in which case, more solutions to $K_{\tau\tau}^{i}=0$ can be found analytically but they do not satisfy \eqref{eq:minm}. We have also tried to solve it for classical minimal surfaces such as Enneper surface, Scherk first surface, Henneberg surface and Bour's surface but these do not solve $K_{\tau\tau}^{i}=0$. }

If $X^{q}_M$ does not parametrize a minimal surface in $\mathbb{L}^{(D)}$ then it must satisfy \eqref{eq:bftype1}. The plane and the helicoid trivially satisfy this equation since adding rotation does not affect the result due to the form of the extrinsic curvature \eqref{ext:helicoid}. Therefore we are only left with the catenoid and Scherk surfaces as the last possibilities. Focusing first on the catenoid, since we are in plane wave space-times we have that ${\Gamma^{i}}_{\hat a \hat b}=0$. Using the parametrisation \eqref{e:iso} with $\tilde \lambda=0$, $a=1$ and $z(\rho)=c\cosh(\rho/c)$, which highlights the $U(1)$ symmetry, the equation of motion \eqref{eq:bftype1} reduces to
\beq \label{eq:nocat}
c^2A_1 \cos^2\left(\frac{\rho}{c}\right)\left((n-1)A_1+\Omega^2(n+1)\right)+\left(n\Omega^2-(n+1)A_1\right)=0~~,
\eeq
where we were forced to set $A_3=0$ otherwise a term proportional to $\rho\tan(\rho/c)$ would appear and also $A_2=A_1$ otherwise the Killing vector field \eqref{eq:killing} would not be a Killing vector field of \eqref{ds:pw}. We have also used that $\textbf{k}^2=R_0^2-c^2\Omega^2\cos^2(\rho/c)$ and that $R_0^2=1+A_1c^2\cosh^2(\rho/c)$. From \eqref{eq:nocat} we see that the first set of terms requires $A_1<0$ and the second set requires $A_1>0$. Therefore the catenoid does not solve \eqref{eq:bftype1}. For Scherk surfaces, this result also holds since according to \eqref{ext:full} the component ${K_{\phi\phi}}$ of the extrinsic curvature of the embedding only changes by a multiplicative factor of $\sqrt{1-\tilde\lambda^2}$ and the same happens to the component ${K_{\tau\tau}}$.

\end{proof}
In higher dimensions one can show that embeddings of $\mathbb{R}^{(p)}$ into $\mathbb{R}^{(D-1)}$ or $\mathbb{R}^{(D-2)}$ and higher-dimensional helicoids also solve \eqref{eq:cpw}. For \textbf{Type II} embeddings into plane wave space-times we can show the following
\begin{theorem} \label{theo:solpw2}
The only stationary minimal surfaces of \textbf{Type II}, where $X^{q}_M$ parametrises a minimal surface embedded in $\mathbb{R}^{3}$, that solve the blackfold equations in plane wave space-times with diagonal $A_{qr}$ are the plane, the helicoid, the catenoid and Scherk's second surface.
\end{theorem}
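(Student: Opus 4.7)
The plan is to combine the immediately preceding theorem (which says that any Type II embedding whose spatial profile $X^q_M$ is minimal in $\mathbb{E}^{(D-2)}$ is automatically minimal in $\mathbb{L}^{(D)}$) with theorem \ref{theo:u1} (which restricts the stationary $U(1)$-preserving minimal surfaces in $\mathbb{R}^3$ to the plane, helicoid, catenoid and Scherk surfaces). The former removes the extra constraint \eqref{eq:cpw} that was forced on us in the Type I setting, while the latter reduces the proof to checking four explicit candidates against the single algebraic condition \eqref{eq:c3} provided by corollary \ref{cor:type2c}.

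First I would dispense with the two cases where the extrinsic curvature along the rotational direction vanishes, namely the plane and the helicoid. According to \eqref{ext:helicoid} (with the plane obtained at $\lambda=0$) one has $K_{\phi\phi}^{i}=0$, so \eqref{eq:c3} collapses to $K_{\tau\tau}^{i}=0$. Using the Monge form of \eqref{eq:cpw} with the same analysis as in the proof of theorem \ref{theo:solpw}, this forces $A_1=A_2$ (and additionally $A_3=0$ whenever the helicoid has non-zero pitch), together with the tracelessness condition distributed over the remaining point-like directions. Thus both embeddings survive.

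Next I would verify the non-trivial cases, the catenoid and Scherk's second surface. The key observation is that in contrast with the Type I equation \eqref{eq:bftype1}, which carries the pressure-dependent prefactor $(\textbf{k}^2+nR_0^2)/R_0^2$ responsible for the sign obstruction in \eqref{eq:nocat}, the Type II condition \eqref{eq:c3} is a simple sum $K_{\tau\tau}^i+\Omega^{a}\Omega^{b}K_{\phi_a\phi_b}^{i}=0$ with no such relative factors. Using the Scherk parametrisation \eqref{e:iso} with $y(\rho)=c\sinh(\rho/c), z(\rho)=c\cosh(\rho/c)$ and the extrinsic curvature components \eqref{ext:full}, direct computation of $n^i_\rho\Gamma^{\rho}_{tt}=\sum_\rho n^i_\rho x_\rho A_\rho$ under the $U(1)$-compatibility requirements ($A_1=A_2$, plus $A_3=0$ for non-zero $\tilde\lambda$-mixing) shows that the two contributions enter with opposite signs and with matching $\rho$-dependence once the common factor $\sqrt{1-\tilde\lambda^2}$ is extracted. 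The equation then reduces to an algebraic relation of the schematic form $A_1=\Omega^2$ (with the overall scale of $\sqrt{1-\tilde\lambda^2}$ dividing out), which admits genuine solutions for every $\tilde\lambda\in[0,1)$; the plane-wave tracelessness $\mathrm{Tr}(A_{qr})=0$ is absorbed into the remaining transverse eigenvalues in the directions where the brane is point-like.

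The main obstacle will be carrying out the explicit extrinsic-curvature computation for the full Scherk family in a way that makes the cancellation of $\sqrt{1-\tilde\lambda^2}$ transparent; a brute-force evaluation using \eqref{e:iso} is messy, and one should instead exploit the isothermal coordinates, in which $\gamma^{\rho\phi}=0$ and the non-vanishing extrinsic components \eqref{ext:full} factor uniformly through the $\tilde\lambda$-dependence. Apart from this bookkeeping issue, the argument is purely a case-by-case verification once theorem \ref{theo:u1} has pinned down the four candidate surfaces and the preceding theorem has rendered the Lorentzian minimality automatic.
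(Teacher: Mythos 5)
Your proposal is correct and follows essentially the same route as the paper: reduce to the single condition \eqref{eq:c3} via corollary \ref{cor:type2c} (Lorentzian minimality being automatic for \textbf{Type II}), restrict the candidates to the four surfaces via theorem \ref{theo:u1}, dispose of the plane and helicoid through ${K_{\phi\phi}}=0$, and balance ${K_{\tau\tau}}$ against $\Omega^2{K_{\phi\phi}}$ for the catenoid to obtain $A_1=A_2=a^2\Omega^2$ with $A_3=0$, extending to Scherk surfaces by the uniform $\sqrt{1-\tilde\lambda^2}$ scaling of both components. The "bookkeeping obstacle" you flag is resolved in the paper exactly as you anticipate, by working in the isothermal form \eqref{e:iso} with the components \eqref{ext:full}.
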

\begin{proof}
From the last theorem it follows that the plane and the helicoid trivially satisfy Eq.~\eqref{eq:c3}. For the catenoid, using \eqref{e:iso} with $\lambda=0$ and $z(\rho)=c\cosh(\rho/c)$, as well as \eqref{ext:catenoid}, Eq.~\eqref{eq:c3} reduces to
\beq \label{sol:cat2}
-cA_1\cos^2\theta-cA_2\sin^2\theta+A_3\rho\tanh\left(\frac{\rho}{c}\right)+ca^2\Omega^2=0~~,
\eeq
Since the Killing vector field \eqref{eq:killing} must be a Killing vector field of \eqref{ds:pw} then we must set $A_2=A_1$ and since we have a linear term proportional to $\rho\tanh(\rho/c)$ we must set $A_3=0$. Therefore we obtain a solution if $A_1=a^2\Omega^2$ and $A_1>0$. Again, ${K_{\tau\tau}}$ and ${K_{\phi\phi}}$ only change by a multiplicative factor of $\sqrt{1-\tilde\lambda^2}$, so this result is also valid for Scherk surfaces.
\end{proof}
In Sec.~\ref{sec:examples} and in Apps.~\ref{sec:higherhelicoid}-\ref{sec:highercatenoid} we will show that these results also hold for higher-dimensional planes, helicoids and catenoids. 


\subsubsection*{Solutions for de Sitter space-time}
 In asymptotically de Sitter space-times the number of solutions of \textbf{Type I} is more constrained than in plane wave space-times. In this case one can show the following:
\begin{theorem} \label{theo:solpw}
The only stationary minimal surface of \textbf{Type I}, where $X^{q}_M$ parametrises a minimal surface embedded in $\mathbb{R}^{3}$, that solve the blackfold equations in de Sitter space-times is the plane.
\end{theorem}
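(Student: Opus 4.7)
The plan is to exhaust the candidate surfaces. By Theorem \ref{theo:u1}, any minimal surface in $\mathbb{R}^{3}$ admitting a $U(1)$ family of isometries (which is required for the embedding to be stationary) is one of the plane, the helicoid, the catenoid, or a member of the one-parameter Scherk family interpolating between the last two. I would then rule out all but the plane by checking, case by case, whether the blackfold equation \eqref{eq:bfeom1} (in the form of Corollaries \ref{cor:type11} or \ref{cor:type111}) can be satisfied in a background of the form \eqref{ds:ds1}--\eqref{ds:ds2}.

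The easy half is the plane itself. Placing it at $x^{3}=0$ in the conformally flat chart \eqref{ds:ds2}, the normal direction is one in which the surface is point-like, so by the remarks below \eqref{eq:cds} and \eqref{eq:cds2} both constraints are trivially satisfied; Theorem \ref{theo:type1} then makes the embedding minimal in $\mathbb{L}^{(D)}$, and since $K^{i}_{\hat a\hat b}\equiv 0$, Corollary \ref{cor:type11} ensures the blackfold equations hold for arbitrary $\Omega$. For the helicoid, catenoid and Scherk surfaces, I would first show that none of them is minimal in $\mathbb{L}^{(D)}$. Writing $\partial_{\rho}f(\tilde r)=(x_{\rho}/\tilde r)\partial_{\tilde r}f(\tilde r)$ and using $\partial_{\tilde r}f\neq 0$, the constraint \eqref{eq:cds} reduces to $\sum_{\rho}n^{i}_{\rho}x_{\rho}=0$; direct computation from \eqref{e:helicoid} gives $n\cdot X\propto a\rho\lambda\theta$, while \eqref{e:iso}--\eqref{sol:iso} gives $n\cdot X\propto -z(\rho)^{2}+z(\rho)z'(\rho)\rho$ for the catenoid and a mixture of these two structures for the general Scherk surface, and none of these vanishes identically.

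I would then apply Corollary \ref{cor:type111} to these three surfaces. Substituting the extrinsic curvature components \eqref{ext:catenoid}, \eqref{ext:helicoid} and \eqref{ext:full}, together with $R_{0}^{2}=f(\tilde r)$, $\textbf{k}^{2}=R_{0}^{2}-\Omega^{2}R_{\phi}^{2}$ and the relevant Christoffel symbols of \eqref{ds:ds2}, into \eqref{eq:bftype1}, the left-hand side becomes a sum of terms whose $\rho$-profiles are dictated independently by the geometry of the surface (through the hyperbolic functions in \eqref{sol:iso}) and by the de Sitter potentials $f$ and $h$. In contrast to the plane wave analyses of Theorems \ref{theo:solpw} and \ref{theo:solpw2}, where solutions existed only after tuning background parameters like $A_{1}=a^{2}\Omega^{2}$, here the de Sitter background has no free parameters that could balance these contributions, so the equation cannot hold identically in $\rho$ for any value of $\Omega$, and the three candidates are excluded.

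The main obstacle will be the explicit verification of this last step for the full Scherk family, because \eqref{ext:full} has all three components nonzero and both the induced metric and $\textbf{k}$ acquire nontrivial $\rho$-dependence through the hyperbolic profiles $z(\rho)=c\cosh(\rho/c)$ and $y(\rho)=c\sinh(\rho/c)$. However, since \eqref{ext:full} differs from the catenoid components \eqref{ext:catenoid} only by the overall factor $\sqrt{1-\tilde\lambda^{2}}$ together with an off-diagonal $K_{\rho\phi}\propto\tilde\lambda$ whose contribution to the $u^{\hat a}u^{\hat b}K_{\hat a\hat b}^{i}$ combination drops out by the same mechanism as in Theorems \ref{theo:solpw}-\ref{theo:solpw2}, the analysis effectively reduces to the catenoid case, and the linear independence of the $\rho$-profiles gives the required contradiction.
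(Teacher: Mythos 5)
Your proposal is correct and follows essentially the same route as the paper: the paper likewise reduces the candidates to the plane, helicoid, catenoid and Scherk family, uses the constraint \eqref{eq:cds} (written there in Monge form, where the de Sitter case is the plane-wave equation \eqref{eq:minpw} with all "eigenvalues" forced equal, so only $f(u,v)=0$ survives) to isolate the plane, and then asserts that explicit evaluation of \eqref{eq:bftype1} excludes the remaining three. Your direct computation of $n\cdot X$ for each candidate is just a different packaging of the same minimality check, and your final "no tunable background parameters" argument sits at the same level of detail as the paper's unexhibited explicit evaluation.
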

\begin{proof}
If $X^{q}_M$ also parametrises a minimal surface in $\mathbb{L}^{(D)}$ then it must satisfy \eqref{eq:cds}. Using the Monge parametrisation we can write \eqref{eq:cds} explicitly as
\beq
{K_{\tau\tau}}=\frac{1}{2}\frac{h(\tilde r)\partial_{\tilde r}f(\tilde r)}{\sqrt{1+f_u^2+f_v^2}} \frac{1}{\tilde r}\left(-f_u x_1-f_v x_2+f(u,v) x_3\right)~~.
\eeq
This case is very similar to \eqref{eq:minpw}, the difference being that the length scale associated with each of the coordinates $x_{q}$ is the same (in the case of pure de Sitter $r_m=0$ these are just equal to $L$). Therefore following the same analysis as for plane wave space-times, only the plane $\mathbb{R}^{2}$, described by $f(u,v)=0$, is a solution. Since $f(u,v)=0$ then ${\Gamma^{i}}_{\hat a \hat b}=0$ and hence \eqref{eq:bftype1} is satisfied. Explicit evaluation of \eqref{eq:bftype1} for the helicoid, catenoid and Scherk surfaces shows that \eqref{eq:bftype1} cannot be satisfied for these configurations.
\end{proof}
It is trivial to show that configurations consisting of $\mathbb{R}^{(p)}$ embedded into $\mathbb{R}^{(D-1)}$ satisfy \eqref{eq:bftype1}, as shown in theorem \ref{theo:type3}. For embeddings of \textbf{Type III}, as shown in theorem \ref{theo:type3}, all minimal surfaces on the unit sphere provide solutions to the blackfold equations.

\section{Minimal surfaces and black hole horizons} \label{sec:examples}
In this section we explicitly construct the blackfold solutions within the classes presented in Sec.~\ref{sec:sclasses}. We study their limiting surfaces, thermodynamic properties and their validity within the blackfold approximation. These configurations consist of planes, helicoids, catenoids and Scherk surfaces as well as of minimal surfaces on the unit sphere such as the Clifford torus. We deal with higher dimensional versions of helicoids and catenoids in App.~\ref{sec:higherhelicoid} and App.~\ref{sec:highercatenoid}.

 \subsection{Black discs and helicoids in flat space-time}\label{sec:helicoid}
As mentioned in Sec.~\ref{sec:minimal}, the embedding of the helicoid \eqref{e:helicoid} includes the $\mathbb{R}^{2}$ plane as a special case when $\lambda=0$. In order to understand better the case of the helicoid we first review the case of the $\mathbb{R}^{2}$ plane first studied in \cite{Emparan:2009vd} and we also analyse its regime of validity according to the prescription of Sec.~\ref{sec:validity}, which was not done in \cite{Emparan:2009vd} . 

\subsubsection{Black discs}\label{sec:blackdisc}
From corollary \ref{cor:type1sol}, the plane is a \textbf{Type I} embedding that solves the blackfold equations in flat space-time. The mapping functions are chosen such that
\beq \label{map:plane}
t=\tau~~,~~X^{1}(\rho,\phi)=\rho\cos\phi~~,~~X^{2}(\rho,\phi)=\rho\sin\phi~~,~~X^{i}=0~,~i=3,...,D-1~~,
\eeq 
where $\rho\ge0$ and $0\le \phi\le2\pi$ and hence the induced metric \eqref{ds:type1} takes the form of \eqref{ds:plane}
\beq\label{ds:plane1}
\textbf{ds}^2=-d\tau^2+d\rho^2+\rho^2d\phi^2~~.
\eeq 
As explained in Sec.~\ref{sec:cembed}, introducing rotation creates a limiting surface in the space-time which can make the geometry compact. Therefore we add rotation to the plane such that the geometry is characterised by a Killing vector field of the form \eqref{eq:kill},
\beq
\textbf{k}^{a}\partial_a=\partial_\tau+\Omega\partial_\phi~~,~~\textbf{k}^2=1-\Omega^2\rho^2~~.
\eeq 
From the form of $\textbf{k}$ we see that a limiting surface appears at $\textbf{k}=0$. According to \eqref{eq:bfeom1} this satisfies the boundary condition \eqref{eq:bfeom1} and hence the geometry has a circular boundary at $\rho_+=\Omega^{-1}$, rendering the plane $\mathbb{R}^{2}$ compact. The geometry is thus that of a disc $\mathbb{D}$ of radius $\Omega^{-1}$ with an $(n+1)$-sphere of radius $r_0(\rho)$ fibered over it due to the transverse sphere in the metric \eqref{ds:blackp}. Therefore these black hole horizons (spatial sections) have topology $\mathbb{S}^{(D-2)}$. The size of the transverse sphere $r_0(\rho)$ is simply given by \eqref{eq:pressure}, i.e.,
\beq
r_0(\rho)=\frac{n}{4\pi T}\sqrt{1-\Omega^2\rho^2}~~.
\eeq
It is clear from this expression that $r_0(\rho)$ is maximal at the centre of the disc when $\rho=0$ and shrinks to zero at the boundary. The geometry is depicted in Fig.~\ref{fig:disc}.
\begin{figure}[h!] 
\centering
  \includegraphics[width=0.4\linewidth]{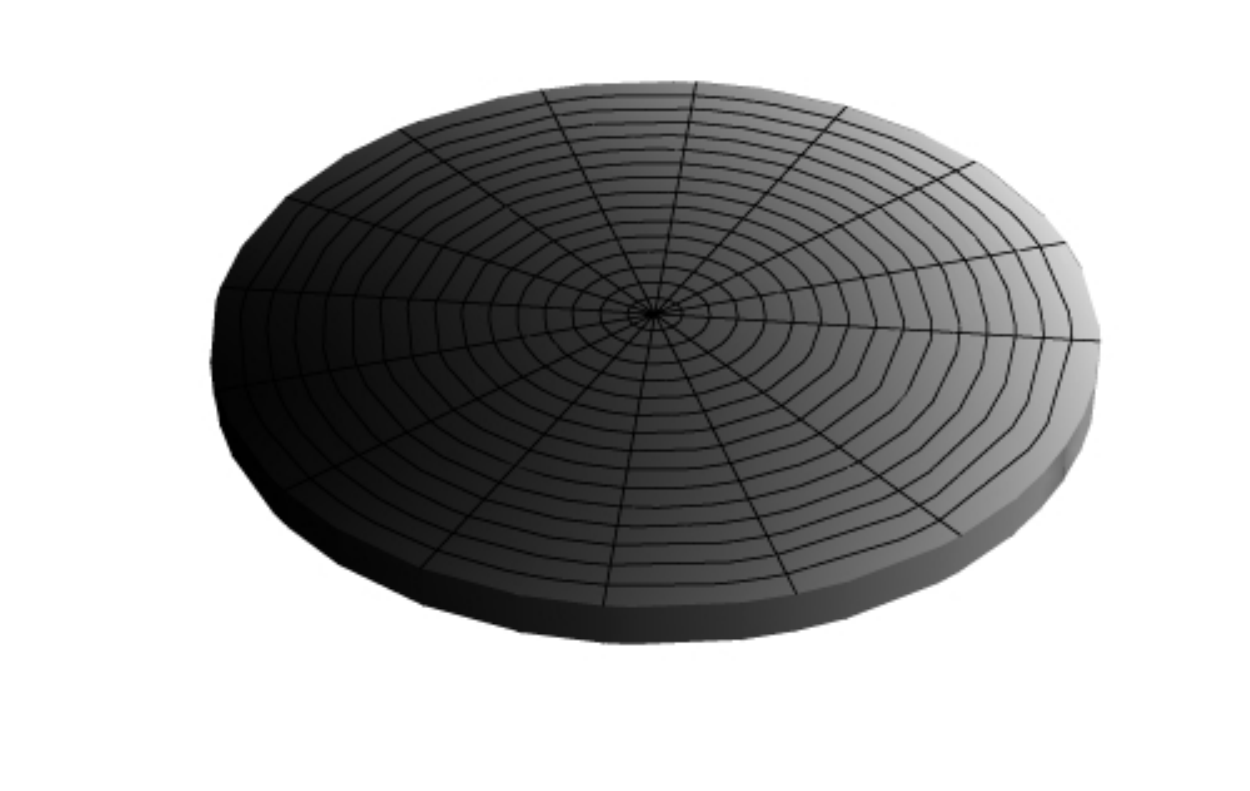}
  \begin{picture}(0,0)(0,0)
\end{picture} 
\caption{Embedding of the rotating black disc in $\mathbb{R}^{3}$ with $\Omega=1$. } \label{fig:disc}
\end{figure}
This configuration describes the ultraspinning regime of singly-spinning Myers-Perry black holes \cite{Emparan:2009vd} and exists in $D\ge6$.

\subsubsection*{Validity analysis}
As this disc geometry, embedded into flat space-time, constitutes the simplest example of a blackfold geometry we will apply the validity analysis of Sec.~\ref{sec:validity} in order to exhibit its usefulness. For this geometry all intrinsic and extrinsic curvature invariants vanish, since it is Ricci-flat and trivially embedded in flat space-time. Therefore, of all the invariants described in \eqref{eq:req}, the only non-vanishing one is the invariant associated with variations in the local temperature (or thickness) $|\textbf{k}^{-1}\nabla_a\nabla^{a}\textbf{k}|^{-\frac{1}{2}}$. Explicitly, this leads to the requirement,
\beq \label{req:mpfull}
r_0\ll \frac{1-\Omega^2\rho^2}{\Omega \sqrt{2-\Omega^2\rho^2}}~~.
\eeq
Since, from \eqref{eq:pressure}, we have that $r_0\propto\textbf{k}$ then this implies that near the axis of rotation $\rho=0$ we must have 
\beq \label{req:mp}
r_+\Omega\ll1~~,~~r_+=\frac{n}{4\pi T}~~.
\eeq
According to the identification with the thermodynamics of Myers-Perry black holes in \cite{Emparan:2009vd},
the angular velocity is given by $\Omega=b^{-1}$ where $b$ is the rotation parameter of the singly-spinning
Myers-Perry black hole. Therefore one should require $r_+\ll b$, which is the original assumption when taking
the ultraspinning limit of Myers-Perry black holes and focusing only on the axis of rotation \cite{Emparan:2003sy}. At any other point on the worldvolume the requirement \eqref{req:mpfull} reproduces the result of App. B of \cite{Armas:2011uf} where the ultraspinning limit was taken at an arbitrary point on the disc and not only at the axis of rotation. Near the boundary $\textbf{k}=0$ the requirement \eqref{req:mpfull} cannot be satisfied for any finite value of $\Omega$. Therefore we introduce $\epsilon\ll1$ and consider the approximation valid in the interval $0\le\rho\le\rho_+-\epsilon$ while assuming the existence of a well defined boundary expansion.

This analysis shows that validity requirements based on the second order corrected free energy
are well based. The requirement \eqref{req:mp} can also be recast as $r_+\ll\rho_+$, exhibiting the need for
two widely separated horizon length scales. This provides a nice illustration of the fact that 
a classification of second order invariants is required in order to assess the validity of blackfold configurations to leading order. For most of the configurations in the core of this paper we will simply state the results obtained from a detailed analysis of the invariants \eqref{eq:req}, which is presented in App.~\ref{sec:valanal}.

\subsubsection*{Free energy}
Despite the fact that, according to the analysis above, the approximation is expected to break down around $\rho=\rho_+-\epsilon$ we can determine its thermodynamic properties exactly to leading order in $\epsilon$. The leading order free energy, using \eqref{eq:free}, is given by
\beq
\begin{split}
\mathcal{F}=\frac{\Omega_{(n+1)}}{16\pi G}r_+^{n}\int_{0}^{2\pi}d\phi&\int_{0}^{\Omega^{-1}-\epsilon}d\rho~\rho\left(1-\Omega^2\rho^2\right)^{\frac{n}{2}}~~\\
&=\frac{\Omega_{(n+1)}}{8 G}r_+^{n}\frac{1+\Omega  \epsilon  (\Omega  \epsilon -2) (\Omega  \epsilon  (2-\Omega  \epsilon ))^{\frac{n}{2}}}{(n+2) \Omega
   ^2}~~.
\end{split}
\eeq	
Since that $\epsilon\Omega\ll1$, the above expression for the free energy reduces to
\beq \label{free:mpfull}
\mathcal{F}=\frac{\Omega_{(n+1)}}{8 G}r_+^{n}\left(\frac{1}{(n+2)\Omega^2}+\mathcal{O}(\epsilon^{\frac{n+2}{2}})\right)~~.
\eeq
From here we see that, according to the identification given in \cite{Emparan:2009vd}, the free energy for these configurations matches, to leading order in $\epsilon$, the free energy of ultraspinning Myers-Perry black holes. We note that, the analysis of App. B of \cite{Armas:2011uf} shows that, even though the blackfold approximation is expected to break down near the boundary for ultraspinning Myers-Perry black holes, the metric all the way to the boundary is still that of a locally flat brane \eqref{ds:blackp}. This provides an example in which the assumption of the existence of a smooth limit of the blackfold description when $r_0\to0$ gives rise to the correct description of the gravitational object. 

The fact that the free energy \eqref{free:mpfull} gives rise to the correct thermodynamic properties of the configuration, to leading order in $\epsilon$, is generic for all configurations with boundaries that we consider. The reason for this is due to the fact that the free energy \eqref{eq:free} to leading order approaches zero near the boundary and hence contributions of the integrand near the boundary are highly suppressed.\footnote{Higher-order contributions in a derivative expansion are also suppressed if $n>2$. In the cases $n=1,2$ backreaction and self-force corrections are expected to be dominant with respect to derivative corrections \cite{Armas:2011uf} and therefore the effective free energy to second order should not in general be trusted.} For this reason, in all the examples that follow, we perform integrations all the way to the boundary points but one should bear in mind that such results are only valid to leading order in a boundary expansion.

\subsubsection{Black helicoids} \label{sec:flathelicoids}
Helicoid geometries are embeddings of \textbf{Type I} in flat space-time that also solve the blackfold
equations according to corollary \ref{cor:type1sol}. Explicitly, this embedding is described by 
\beq \label{emb:helic}
t=\tau~~,~~X^{1}(\rho,\phi)=\rho\cos(a\phi)~~,~~X^{2}(\rho,\phi)=\rho\sin(a\phi)~~,~~X^{3}(\rho,\phi)=\lambda \phi~~,
\eeq
and $X^{i}=0~,~i=4,...,D-1$, where the coordinates lie within the range $-\infty<\rho,\phi<\infty$. The only physically relevant parameter in this embedding is the pitch $\lambda/a$, since, if $\lambda\ne0$, the coordinate $\phi$ can always be rescaled such that $a$ can be set to 1.  However, since we are interested in taking the limit $\lambda\to0$ we keep both parameters. Without loss of generality, we take $\lambda\ge0$ and $a>0$. The induced metric \eqref{ds:type1} takes the form
\beq \label{ds:helicoid}
\textbf{ds}^2=-d\tau^2+d\rho^2+(\lambda^2+a^2\rho^2)d\phi^2~~.
\eeq
As in the case of the plane, we boost the helicoid along the $\phi$ direction with boost velocity $\Omega$ such that
\beq \label{kill:helicoid}
\textbf{k}^{a}\partial_a=\partial_\tau+\Omega\partial_\phi~~,~~\textbf{k}^2=1-\Omega^2(\lambda^2+a^2\rho^2)~~.
\eeq
According to \eqref{eq:killing} this corresponds to a Killing vector field in the ambient space-time of the form
\beq
k^{\mu}\partial_\mu=\partial_t+a\Omega\left(x_1\partial_{x_{2}}-x_2\partial_{x_1}\right) +\lambda\Omega \partial_{x_3}~~,
\eeq
that is, the helicoid geometry is rotating in the $(x_1,x_2)$ plane with angular velocity $a\Omega$ and it is boosted along the $x_3$ direction with boost velocity $\lambda\Omega$.
From Eq.~\eqref{kill:helicoid}, we see that a limiting surface, constraining the coordinate $\rho$, appears at $\textbf{k}=0$ when
\beq \label{eq:reqh}
\rho_{\pm}=\pm\frac{\sqrt{1-\Omega^2\lambda^2}}{a\Omega}~~,
\eeq
which implies that we must have $\Omega^2\lambda^2<1$. This limiting surface makes the helicoidal geometry compact in the $\rho$ direction but leaves the $\phi$ direction unconstrained. Therefore these geometries are non-compact in the $\phi$ direction. The black hole horizons they give rise to have topology $\mathbb{R}\times \mathbb{S}^{(D-3)}$ in $D\ge6$, hence they have the topology of a black string. We therefore refer to these geometries as \emph{helicoidal black strings}, which can be thought of as the membrane generalisation of the helical strings found in \cite{Emparan:2009vd}). The fact that these geometries have string topology suggests that they can be bent into a helicoidal ring, in the same way that helical strings can be bent into helical rings \cite{Emparan:2009vd}. In a related publication \cite{Armas:2015nea}, we show that this is indeed the case.\footnote{We thank Roberto Emparan for suggesting this possibility to us.} The size of the transverse sphere $r_0(\rho)$ is given by
\beq
r_0(\rho)=\frac{n}{4\pi T}\sqrt{1-\Omega^2(\lambda^2+a^2\rho^2)}~~,
\eeq
and is again maximal at the origin $\rho=0$ and vanishes at the boundaries $\rho_\pm$. If $\Omega=0$ then the geometry is static and becomes non-compact also in the $\rho$ direction. This geometry is depicted in Fig.~\ref{fig:helicoid}
\begin{figure}[h!] 
\centering
  \includegraphics[width=0.6\linewidth]{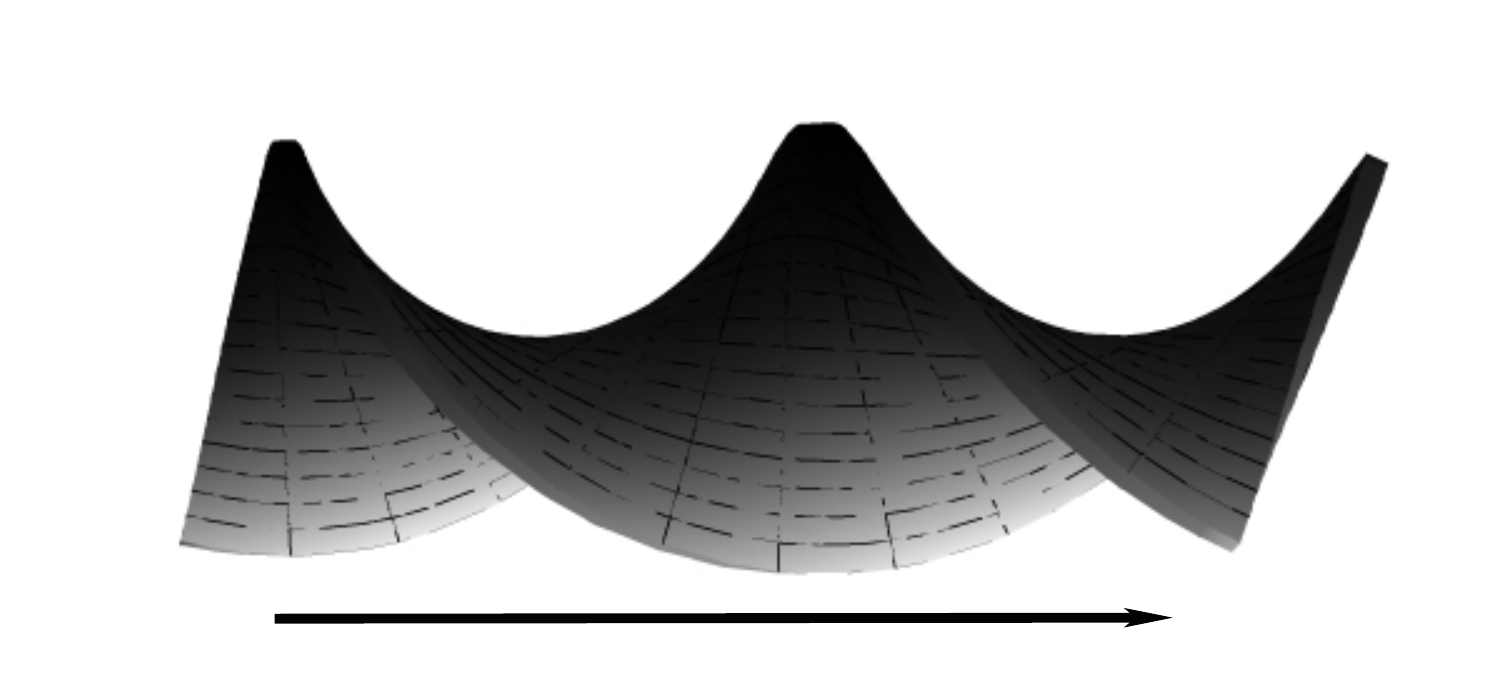}
  \begin{picture}(0,0)(0,0)
\put(-80,0){ $ \phi $}
\end{picture} 
\caption{Embedding of the rotating black helicoid in $\mathbb{R}^{3}$ with $\lambda=a=\Omega=1$, depicted in the interval $-3\le\phi\le3$. } \label{fig:helicoid}
\end{figure}

\subsubsection*{The free energy and the Myers-Perry limit}

The free energy of these configurations can be obtained by evaluating \eqref{eq:free} to leading order, yielding
 \beq \label{eq:freehelicoidflat}
 \begin{split}
 \mathcal{F}&=\frac{\Omega_{(n+1)}}{16\pi G}r_+^{n}\int d\phi\int_{\rho_-}^{\rho_+} d\rho\sqrt{\lambda^2+a^2\rho^2}\left(1-\Omega^2(\lambda^2+a^2\rho^2)\right)^{\frac{n}{2}}\\
&= \frac{\Omega_{(n+1)}}{16\sqrt{\pi } G}\frac{r_+^{n}}{a\Omega}\int d\phi \lambda  \Gamma \left(1+\frac{n}{2}\right) \left(1-\lambda ^2 \Omega
   ^2\right)^{\frac{n+1}{2}} \,
   _2\tilde{F}_1\left(-\frac{1}{2},\frac{1}{2};\frac{n+3}{2};1-\frac{1}{\lambda ^2 \Omega
   ^2}\right)~~.
 \end{split}
 \eeq
The free energy is positive for all $n$ and, since the geometry is non-compact in the $\phi$ direction, is infinite. Hence it is only physically relevant to speak about the free energy density, i.e., the free energy \eqref{eq:freehelicoidflat} modulo the integration over $\phi$. The remaining thermodynamic properties can be easily obtained from Eqs.~\eqref{eq:freer}-\eqref{eq:thermo} and we leave a more detailed analysis of these to a later publication \cite{Armas:2015nea}. We note, however, that these geometries have a non-trivial tension \eqref{eq:tension} as expected, since they are non-compact in the $\phi$-direction.

As mentioned in Sec.~\ref{sec:minimal} the embedding of the helicoid \eqref{emb:helic} reduces to that of the plane when $\lambda\to0$, however the coordinate range of $\rho$ lies in between $\rho_-<\rho<\rho_+$ instead of $0\le\rho\le\rho_+$. Therefore, in this limit, one is covering the disc twice. In order to avoid this double covering, we rescale the free energy \eqref{eq:freehelicoidflat} such that $ \mathcal{F}\to(1/2) \mathcal{F}$ when taking the limit $\lambda\to0$. More precisely, we take the limit $\lambda\to0$ while keeping $a$ fixed and make the $\phi$-coordinate periodic with period $2\pi/a$.  Integrating the free energy \eqref{eq:freehelicoidflat} in the interval $0\le\phi\le2\pi/a$ and rescaling $ \mathcal{F}\to(1/2) \mathcal{F}$ leads to the result for the disc \eqref{free:mpfull} to leading order in $\epsilon$, once finally setting $a=1$. 

The existence of this non-trivial agreement with the geometry and thermodynamics of the disc in the limit $\lambda\to0$ suggests that the family of singly-spinning Myers-Perry black holes and the family of black helicoids are connected, at least in the ultraspinning limit, in which the topology changes according to $\mathbb{R}\times \mathbb{S}^{D-3}\to\mathbb{S}^{D-2}$. These geometries, according to the analysis of App.~\ref{sec:valanal}, are valid in the regime
\beq
r_0\ll \lambda/a~~,~~r_+\ll 1/(a\Omega)~~,~~ r_+\ll\rho_+~~.
\eeq 
Near the boundary, the requirements \eqref{eq:req} are not possible to satisfy, therefore we assume that the blackfold description is valid in the interval $\rho_-+\epsilon\le\rho\le\rho_+-\epsilon$. As this configuration has a limit in which the disc, describing the Myers-Perry black hole is recovered, one expects that the blackfold description of the black helicoids also has a smooth limit when $r_0\to0$.

\subsubsection*{More families of helicoid geometries}
The configuration presented above reduces to that of a singly-spinning Myers-Perry black hole in the limit $\lambda\to0$ and hence one may wonder how to construct other helicoid geometries that capture, in a certain limit, Myers-Perry black holes with several ultraspins. There are in fact at least two ways in which this can be done with an increasing degree of generality:

\begin{itemize}

\item Myers-Perry black holes with several ultraspins can be obtained from a helicoid geometry simply by splitting $\mathbb{R}^{(D-1)}$ into a series of one $\mathbb{R}^{3}$ subspace, where the helicoid with pitch $\lambda/a$ is embedded, and several $\mathbb{R}^{2}$ planes. These geometries have topology $\mathbb{R}\times \mathbb{S}^{(D-3)}$ and reduce to Myers-Perry black holes with several ultraspins in the limit $\lambda\to0$.

\item Alternatively, Myers-Perry black holes can also be obtained by splitting $\mathbb{R}^{(D-1)}$ into a series of $l$ $\mathbb{R}^{3}$ subspaces and embedding a helicoid with pitch $\lambda_a/a_a$ in each of those subspaces. These black holes have topology $\mathbb{R}^{(2l-1)}\times\mathbb{S}^{(D-2l-1)}$ and in the limit $\lambda_a\to0~,~\forall a$ reduce to to Myers-Perry black holes with several ultraspins. In the limit in which we take $\lambda_a\to0$ but keep $\lambda_1\ne0$ this geometry reduces to the previous example.
\end{itemize}
Both of these examples trivially solve the blackfold equations in flat space-time since products of Euclidean minimal surfaces are still Euclidean minimal surfaces. We note that higher-dimensional helicoids ($p$-branes with helicoidal shape), which will be studied in App.~\ref{sec:higherhelicoid}, do not describe these geometries, as in the limit $\lambda\to0$ we recover a minimal cone geometry instead of a $p$-ball.


\subsection{Black discs and $p$-balls in plane wave space-times} \label{sec:discpw}
In this section we construct the analogue black disc configuration of the previous section in plane wave
space-times and their higher-dimensional versions. This will highlight the differences between inherent 
(various kinds of horizons) and non-inherent (induced by rotation) limiting surfaces. Black discs and $p$-balls can be of \textbf{Type I} or \textbf{Type II} and we will analyse both of them.

\subsubsection{Black discs of Type I} \label{sec:disc1}
In this case we have an embedding of the form \eqref{ds:type1} in the ambient space-time \eqref{ds:pw}, which is a solution to the blackfold equations according to theorem \eqref{theo:solpw}. Since these solutions will be rotating then for \eqref{eq:killing} to be a Killing vector field of the background we must choose $A_2=A_1$. This geometry is obtained by choosing the mapping functions
\beq
t=\tau~~,~~y=0~~,~~X^{1}(\rho,\phi)=\rho\cos\phi~~,~~X^{2}(\rho,\phi)=\rho\sin\phi~~,
\eeq
and $X^{i}=0~,~i=3,...,D-2$, leading to the induced wordvolume metric
\beq \label{ds:disc1}
\textbf{ds}^2=-R_0^2d\tau^2+d\rho^2+\rho^2d\phi^2~~,~~R_0^2=1+A_1\rho^2~~.
\eeq
We introduce worldvolume rotation by considering the Killing vector field
\beq \label{rot:p1}
\textbf{k}^{a}\partial_a=\partial_\tau+\Omega\partial_\phi~~,~~\textbf{k}^2=1+(A_1-\Omega^2)\rho^2~~.
\eeq
From this expression we see that a limiting surface is present in the space-time if $A_1-\Omega^2<0$. If $A_1=\Omega^2$ then there is no limiting surface and the disc is non-compact. If $A_1-\Omega^2<0$ then the disc is cut at $\rho_+=\sqrt{(\Omega^2-A_1)}^{-1}$. Hence we must have that $\Omega^2>A_1$. It is worth mentioning that if $A_1<0$ there exists a compact static solution with $\Omega=0$. 

The free energy for these configurations is obtained by integrating the general free energy \eqref{eq:free} to leading order such that 
\beq \label{eq:freeplane1}
 \begin{split}
 \mathcal{F}&=\frac{\Omega_{(n+1)}}{16\pi G}r_+^{n}\int_{0}^{2\pi} d\phi\int_{0}^{\rho_+} d\rho R_0\rho\left(1+(A_1-\Omega^2)\rho^2\right)^{\frac{n}{2}}\\
&= \frac{\Omega_{(n+1)}}{8 G}r_+^{n}\frac{\, _ 2F_ 1\left(-\frac{1}{2},1;\frac{n}{2}+2;\, \frac{A_1}{A_1-\Omega ^2}\right)}{(n+2) \left|\Omega ^2-A_1 \right|}~~,
 \end{split}
 \eeq
where $r_+=n/(4\pi T)$. These configurations connect to the singly spinning Myers-Perry black hole in flat space-time analysed in the previous section when sending $A_1\to0$. Clearly, when $\Omega^2=A_1$ the free energy diverges as the disc becomes non-compact. This geometry, valid in the regime $r_+\ll\sqrt{A_1}^{-1}$ and $r_+\ll\rho_+$ according to App.~\ref{sec:valanal}, has topology $\mathbb{S}^{(D-2)}$ where the size of the transverse sphere is given by
\beq \label{thick:p1}
r_0=r_+\sqrt{1+(A_1-\Omega^2)\rho^2}~~,
\eeq
and hence varies from a maximum value at $\rho=0$ and shrinks to zero at the boundary $\rho_+$. The
generalisation of Myers-Perry black holes in plane wave space-times is not known analytically. The geometries
constructed here should capture the ultraspining regime of such black holes. Note that since we want the plane wave space-time \eqref{ds:pw} to be a vacuum solution we need to require the existence of at least one extra transverse direction $i=3$ where the brane is point-like and located at $x_3=0$ such that $2A_{11}+A_{33}=0$. Therefore these solutions exist in vacuum for $D\ge6$.

\subsubsection*{Thermodynamics}
The thermodynamic properties of these black discs can be obtained from \eqref{eq:freeplane1} using \eqref{eq:thermo}. The mass and angular momentum read
\beq
\begin{split}
M=\frac{\Omega_{(n+1)}r_+^n}{8 G}\Big(&\frac{2 \Omega ^2 \, _2F_1\left(-\frac{1}{2},2;\frac{n}{2}+2;\frac{A_1}{A_1-\Omega^2}\right)}{(n+2) \left(A_1-\Omega ^2\right)^2}\\
& -\frac{(n+1) \left(A_1-\Omega ^2\right) \,_2F_1\left(-\frac{1}{2},1;\frac{n+4}{2};\frac{A_1}{A_1-\Omega ^2}\right)}{(n+2) \left(A_1-\Omega ^2\right)^2}\Big),
 \end{split}
\eeq
\beq
J=\frac{\Omega_{(n+1)}}{4 G}r_+^{n}\Omega \frac{\, _ 2F_ 1\left(-\frac{1}{2},2;\frac{n}{2}+2;\, \frac{A_1}{A_1-\Omega ^2}\right)}{(n+2) \left(\Omega ^2-A_1 \right)^2}~~.
\eeq
These quantities reduce to those obtained in \cite{Emparan:2009vd} for Myers-Perry black holes when $A_1=0$. This blackfold solution does not satisfy the Smarr-relation for flat space-time and hence it has a non-trivial tension given by
\beq
\begin{split}
\!\!\!\!\boldsymbol{\mathcal{T}}=-\frac{\Omega_{(n+1)}r_+^n}{4 G}\frac{ \Omega ^2 \, _2F_1\left(-\frac{1}{2},2;\frac{n}{2}+2;\frac{A_1}{A_1-\Omega^2}\right)+\left(A_1-\Omega ^2\right) \,_2F_1\left(-\frac{1}{2},1;\frac{n+4}{2};\frac{A_1}{A_1-\Omega ^2}\right)}{(n+2) \left(A_1-\Omega ^2\right)^2},
 \end{split}
\eeq
which vanishes in the limit $A_1\to0$.


\subsubsection{Black discs of Type II} \label{sec:disc2}
Black discs of \textbf{Type II} have embeddings of the form \eqref{ds:type2} and are solutions of the blackfold equations according to theorem \eqref{theo:solpw2}. Again, since the geometries we are interested in can be rotating, we need to choose $A_2=A_1$ such that \eqref{eq:killing} is a Killing vector field of the ambient space-time \eqref{ds:pw}. The mapping functions for this geometry are given by
\beq
t=\tau~~,~~y=z~~,~~X^{1}(\rho,\phi)=\rho\cos\phi~~,~~X^{2}(\rho,\phi)=\rho\sin\phi~~,
\eeq
and $X^{i}=0~,~i=3,...,D-2$. This leads to a worldvolume geometry which is itself a plane wave,
\beq \label{ds:disc22}
\textbf{ds}^2=-R_0^2d\tau^2+2(1-R_0^2)d\tau dz+(2-R_0^2)dz^2+d\rho^2+\rho^2d\phi^2~~,~~R_0^2=1+A_1\rho^2~~.
\eeq
We recall that, as mentioned in Sec.~\ref{sec:cembed}, all \textbf{Type II} embeddings are non-compact in the $z$-direction. Rotation is introduced exactly as in \eqref{rot:p1} and hence the discussion of limiting surfaces and boundaries is the same. 

The free energy for these configurations takes a more simple form 
\beq \label{eq:freeplane2}
 \begin{split}
 \mathcal{F}&=\frac{\Omega_{(n+1)}}{16\pi G}r_+^{n}\int dz\int_{0}^{2\pi} d\phi\int_{0}^{\rho_+} d\rho ~\rho\left(1+(A_1-\Omega^2)\rho^2\right)^{\frac{n}{2}}\\
&= \frac{\Omega_{(n+1)}}{8 G}r_+^{n}\int dz\frac{1}{(n+2)|\Omega^2-A_1|}~~,
 \end{split}
 \eeq
where $r_+=n/(4\pi T)$. These configurations connect to the singly spinning Myers-Perry black hole in flat space-time, modulo the integration over $z$, analysed in the previous section when sending $A_1\to0$. If one includes the $z$ direction then the limit $A_1\to0$ gives rise to a Myers-Perry string. We note that this free energy and also its thermodynamic properties are exactly the same, again modulo the integration over $z$, as those for black discs in (Anti)-de Sitter space-times studied in \cite{Armas:2010hz}. This becomes evident if we one identifies $A_1=L^{-2}$ where $L$ is the (Anti)-de Sitter radius.

This geometry, valid in regime $r_+\ll\sqrt{A_1}^{-1}$ and $r_+\ll\rho_+$ according to App.~\ref{sec:valanal}, has topology $\mathbb{R}\times \mathbb{S}^{(D-3)}$ where the size of the transverse sphere is given by \eqref{thick:p1} and hence behaves in the same way as for discs of \textbf{Type I}. These geometries provide evidence for the existence of yet another generalisation of Myers-Perry black holes in plane wave space-times for $D\ge7$, which is not known analytically.

\subsubsection{Black $p$-balls of Type II}
Black $p$-balls are just embeddings of $\mathbb{R}^{(p)}$ into $\mathbb{R}^{(D-1)}$ or $\mathbb{R}^{(D-2)}$ with spherical boundary conditions and hence trivially solve the blackfold equations (see theorem \ref{theo:type3}). If we consider rotation these geometries describe Myers-Perry-type of black holes with several ultraspins where $p$ must be an even number. If the embedding is of \textbf{Type II} then its thermodynamic properties are very similar to the case of rotating black $p$-balls in (Anti)-de Sitter space-times studied in \cite{Armas:2010hz}. We will however focus on static geometries which are valid for all $p$. For simplicity, we take the configuration to be of \textbf{Type II}.

In this case we embed a $p$-ball into the $(D-2)$ Euclidean metric of \eqref{ds:pw} while choosing $X^{i}=0~,~i=p+1,...,D-2$. The induced metric on the worldvolume, of the general form \eqref{ds:type2}, is given by
\beq
\textbf{ds}^2=-R_0^2d\tau^2+2(1-R_0^2)d\tau dz+(2-R_0^2)dz^2+d\rho^2+\rho^2d\Omega_{(p-1)}^2~~,
\eeq
where $R_0^2=1+A_1\rho^2$. We have chosen for simplicity $A_{x_{\hat{a}} x_{\hat{a}}}=A_{x_1x_1}=A_1~,~\hat a=1,...,p$. Since we want the geometry to be compact we assume that $A_1<0$ and hence a limiting surface appears at the boundary $\rho_+=\sqrt{A_1}^{-1}$. These geometries have topology $\mathbb{R}\times \mathbb{S}^{(D-3)}$. The transverse size of the horizon varies according to \eqref{thick:p1}.

The free energy takes the simple form
\beq 
 \begin{split}
 \mathcal{F}&=\frac{\Omega_{(n+1)}}{16\pi G}r_+^{n}\int dz\int d\Omega_{(p-1)}\int_{0}^{\rho_+} d\rho ~\rho^{p-1}\left(1+A_1\rho^2\right)^{\frac{n}{2}}\\
&= \frac{\Omega_{(n+1)}}{16 \pi G}\Omega_{(p-1)}r_+^{n}\int dz\frac{A_1^{-p/2} \Gamma \left(\frac{n}{2}+1\right) \Gamma
   \left(\frac{p}{2}\right)}{2 \Gamma \left(\frac{1}{2} (n+p+2)\right)}~~,
 \end{split}
 \eeq
where $r_+=n/(4\pi T)$ and agrees with the static limit of the disc ($p=2$) when setting $\Omega=0$ in \eqref{eq:freeplane2}. These are the analogous configurations to those of \eqref{eq:dsbde} in plane wave space-times. For even $p$ they describe static rotating black holes in plane wave space-times analogous to those in de Sitter space-time \cite{Armas:2010hz} with equal free energy, modulo the integration over $z$, provided we set $A_1=L^2$. For odd $p$ these do not arise as a limit of Myers-Perry-type black holes and hence hint to the existence of a new family of rotating black holes. These geometries also have a non-vanishing tension given by
\beq
\boldsymbol{\mathcal{T}}=-\frac{\Omega_{(n+1)}}{16 \pi G}\Omega_{(p-1)}r_+^{n}\int dz\frac{n A_1^{-p/2} \Gamma \left(\frac{n}{2}+1\right) \Gamma
   \left(\frac{p}{2}\right)}{ \Gamma \left(\frac{n+p}{2}\right)}~~.
\eeq
The validity analysis follows the same footsteps as for black discs of \textbf{Type II} as in App.~\ref{sec:valanal}. In the end we must just require that $r_+\ll \sqrt{A_1}^{-1}$ and
\beq
r_+\ll \frac{\textbf{k}}{\sqrt{A_1}\sqrt{2+A_1\rho^2}}~~,
\eeq
which leads to $r_+\ll\rho_+$ near $\rho=0$ and requires the introduction of a cut-off $\epsilon$ near the boundary.

 
 
 \subsection{Black helicoids in plane wave space-times} \label{sec:helicoidpw}
 In this section we construct the analogous configurations of the black helicoids of Sec.~\ref{sec:helicoid}. These configurations according to theorem \ref{theo:solpw} and \ref{theo:solpw2} can be of \textbf{Type I} or \textbf{Type II} and connect to the black disc geometries of the previous section in an appropriate limit. They require that $A_2=A_1$ in order to be valid solutions of the blackfold equations. We will also show that as in the case of the discs, inherent limiting surfaces allow for static helicoid configurations.

 \subsubsection{Black helicoids of Type I} \label{sec:helicoid1}
 Helicoids of \textbf{Type I} are embeddings of the general form \eqref{ds:type1} into plane wave space-times \eqref{ds:pw}. They are described by the mapping functions
\beq
t=\tau~~,~~y=0~~,~~X^{1}(\rho,\phi)=\rho\cos(a\phi)~~,~~X^{2}(\rho,\phi)=\rho\sin(a\phi)~~,~~ X^{3}(\rho,\phi)=\lambda\phi~,
\eeq
and $X^{i}=0~,~i=4,...,D-1$, where $a$ is a constant which we take to be positive without loss of generality. Similarly we take $\lambda\ge0$. The ratio $\lambda/a$ is the pitch of the helicoid.
The induced worldvolume metric takes the form
\beq
\textbf{ds}^2=-R_0^2d\tau^2+d\rho^2+(\lambda^2+a^2\rho^2)d\phi^2~~,~~R_0^2=1+A_1\rho^2~~.
\eeq
This geometry reduces to the case of the disc \eqref{ds:disc1} when $a=1$ and $\lambda=0$. The helicoid is boosted along the $\phi$ direction with boost velocity $\Omega$ such that
\beq
\textbf{k}^{a}\partial_a=\partial_\tau+\Omega\partial_\phi~~,~~\textbf{k}^2=1+A_1\rho^2-\Omega^2(\lambda^2+a^2\rho^2)~~.
\eeq
From the expression for $\textbf{k}$ we see that for the solution to be valid at $\rho=0$ we need that $\Omega^2\lambda^2<1$.\footnote{In the strict limit $\Omega\lambda=1$ all thermodynamic quantities vanish so we do not consider it.} Furthermore, a limiting surface exists whenever $A_1\rho^2-\Omega^2(\lambda^2+a^2\rho^2)<0$ in which case the helicoid is bounded in the $\rho$ direction and has boundaries at 
\beq
\rho_\pm=\pm\sqrt{\frac{\Omega^2\lambda^2-1}{A_1-a^2\Omega^2}}~~,
\eeq
where one must require that $A_1-a^2\Omega^2<0$. Note in particular that if $A_1=a^2\Omega^2$ there is no limiting surface. The limiting surface constrains the $\rho$ direction but not the $\phi$ direction, therefore these geometries are non-compact along $\phi$. As for the case of discs of \textbf{Type I} a static solution $\Omega=0$ exists provided $A_1<0$. These black hole horizons have topology $\mathbb{R}\times \mathbb{S}^{(D-3)}$ where the size of the transverse sphere $r_0(\rho)$ is given by
\beq
r_0(\rho)=\frac{n}{4\pi T}\sqrt{1+A_1\rho^2-\Omega^2(\lambda^2+a^2\rho^2)}~~,
\eeq
and attains its maximum value at the origin $\rho=0$ and vanishes at the boundaries $\rho_\pm$. These geometries, according to App.~\ref{sec:valanal}, exist in plane wave backgrounds in vacuum for $D\ge6$ in the regime 
\beq
r_0\ll \sqrt{A_1+a^2/\lambda^2}^{-1}~~,~~ r_+\ll\sqrt{A_1-a^2\Omega^2}^{-1}~~,~~r_+\ll\rho_+~~.
\eeq

\subsubsection*{Free energy}
The free energy can be obtained from \eqref{eq:free} to leading order and reads
\beq \label{free:helic1}
 \begin{split}
 \mathcal{F}&=\frac{\Omega_{(n+1)}}{16\pi G}r_+^{n}\int d\phi\int_{\rho_-}^{\rho_+} d\rho ~R_0\sqrt{\lambda^2+a^2\Omega^2}\left(1+A_1\rho^2-\Omega^2(\lambda^2+a^2\rho^2)\right)^{\frac{n}{2}}~~,
 \end{split}
 \eeq
where $r_+=n/(4\pi T)$. It is not possible to integrate this expression and obtain a closed analytical form. However, it can be easily be done numerically. In the high pitch limit $\lambda\gg1$, for example, we can obtain an approximate expression up to order $\mathcal{O}(\lambda^{-1})$,
\beq
\mathcal{F}=\frac{\Omega_{(n+1)}}{16\pi G}r_+^{n}\int d\phi\frac{\sqrt{\pi } \lambda  \Gamma \left(\frac{n}{2}+1\right) \left(1-\lambda ^2 \Omega
   ^2\right)^{\frac{n+1}{2}} \,
   _2\tilde{F}_1\left(-\frac{1}{2},\frac{1}{2};\frac{n+3}{2};\frac{A_1 \left(\lambda ^2 \Omega
   ^2-1\right)}{a^2 \Omega ^2}\right)}{a \Omega }~~.
\eeq
In the limit $\lambda\to0$ the free energy \eqref{free:helic1} reduces to the free energy of the disc by simultaneously rescaling $\mathcal{F}\to(1/2)\mathcal{F}$ due to the double covering of the coordinate $\rho$, integrating $\phi$ over the interval $0\le\phi\le2\pi/a$ and setting $a=1$. Furthermore, in the limit $A_1\to0$ it reduces to the free energy of the helicoid in flat space-time \eqref{eq:freehelicoidflat}. The remaining thermodynamic properties can be obtained using \eqref{eq:thermo}.

\subsubsection{Black helicoids of Type II} \label{sec:helicoid2}
We now turn our attention to black helicoid geometries of \textbf{Type II}. These are described by an embedding geometry of the form \eqref{ds:type2}. The embedding map is given by
\beq
t=\tau~~,~~y=z~~,~~X^{1}(\rho,\phi)=\rho\cos(a\phi)~~,~~X^{2}(\rho,\phi)=\rho\sin(a\phi)~~,~~ X^{3}(\rho,\phi)=\lambda\phi~,
\eeq
and $X^{i}=0~,~i=4,...,D-2$. The geometry is therefore non-compact in the $z$ direction. The induced metric on
the worldvolume takes the form of a non-planar-fronted wave
\beq
\textbf{ds}^2=-R_0^2d\tau^2+2(1-R_0^2)d\tau dz+(2-R_0^2)dz^2+d\rho^2+(\lambda^2+a^2\rho^2)d\phi^2~~,
\eeq
where $R_0^2=1+A_1\rho^2$. The limiting surface and its boundaries are the same as for the helicoids of \textbf{Type I} studied above. These black holes have horizon topology $\mathbb{R}^2\times \mathbb{S}^{(D-4)}$ and are valid solutions in plane wave backgrounds in vacuum for $D\ge7$ in the regime  $r_0\ll \lambda/a$,  $r_+\ll \sqrt{A_1(1-a^2\Omega^2)}^{-1}$ and $r_+\ll\rho_+$ according to App.~\ref{sec:valanal}.

The free energy for these configurations is
\beq \label{eq:freehelc2}
 \begin{split}
 \mathcal{F}&=\frac{\Omega_{(n+1)}}{16\pi G}r_+^{n}\int dz\int d\phi \int_{\rho_-}^{\rho_+} d\rho ~\sqrt{\lambda^2+a^2\rho^2}\left(1+A_1\rho^2-\Omega^2(\lambda^2+a^2\rho^2)\right)^{\frac{n}{2}}\\
&= \frac{\Omega_{(n+1)}}{16\pi G}r_+^{n}\int dz\int d\phi \frac{\sqrt{\pi } \lambda  \Gamma \left(\frac{n}{2}+1\right) \left(1-\lambda ^2
   \Omega ^2\right)^{\frac{n+1}{2}} \,
   _2\tilde{F}_1\left(-\frac{1}{2},\frac{1}{2};\frac{n+3}{2};\frac{a^2
   \left(1-\lambda ^2 \Omega ^2\right)}{\lambda ^2 \left(A_1-a^2 \Omega^2
   \right)}\right)}{\sqrt{|a^2 \Omega ^2-A_1|}}~~,
 \end{split}
 \eeq
and reduces to the free energy of the disc of \textbf{Type II} \eqref{eq:freeplane2} when $\lambda\to0$, after rescaling $\mathcal{F}\to(1/2)\mathcal{F}$, integrating $\phi$ over the interval $0\le\phi\le2\pi/a$ and setting $a=1$. In the static case $\Omega=0$ we need $A_1<0$ for the geometry to be compact. In this case the free energy reduces to
\beq
\mathcal{F}=\frac{\Omega_{(n+1)}}{16\pi G}r_+^{n}\int dz\int d\phi \frac{\sqrt{\pi } \lambda  \Gamma \left(\frac{n+2}{2}\right) \,
   _2F_1\left(-\frac{1}{2},\frac{1}{2};\frac{n+3}{2};\frac{a^2}{A_1 \lambda
   ^2}\right)}{\sqrt{|A_1|} \Gamma \left(\frac{n+3}{2}\right)}~~.
\eeq
Their thermodynamic properties can be obtained using \eqref{eq:thermo} and lead to very cumbersome expressions.
 
\subsection{Black catenoids and black Scherk surfaces in plane wave space-times} \label{sec:catenoid}
In this section we construct black catenoids and black Scherk surfaces in asymptotically plane wave space-times. These embeddings are \textbf{Type II} embeddings and they solve the blackfold equations, according to theorem \ref{theo:solpw2}, for a specific equilibrium condition between the angular velocity and the plane wave matrix $A_{qr}$ components. We begin by studying the catenoids and then move on to the slightly more complicated case of Scherk surfaces. 

\subsubsection{Black catenoids of Type II}
The catenoid is the only non-trivial minimal surface of revolution in $\mathbb{R}^{3}$ (the trivial one being the plane). Its embedding can be parametrised by
\beq
X^{1}(\rho,\phi)=c\cosh\left(\frac{\rho}{c}\right)\sin(a\phi)~~,~~X^{2}(\rho,\phi)=c\cosh\left(\frac{\rho}{c}\right)\cos(a\phi)~~,~~X^{3}(\rho,\phi)=\rho~~,
\eeq
and $t=\tau~,~y=z$ as well as $X^{i}=0~,~i=4,...,D-2$. The constants $a$ and $c$ can be chosen to be positive, without loss of generality. The coordinates range between the intervals $\rho\ge0$ and $0\le\phi\le2\pi/a$.
The induced metric, of the general form \eqref{ds:type2}, reads
\beq \label{ds:catenoid}
\textbf{ds}^2=-R_0^2d\tau^2+2(1-R_0^2)d\tau dz+(2-R_0^2)dz^2+\cosh^2\left(\frac{\rho}{c}\right)\left(d\rho^2+a^2c^2d\phi^2\right)~~,
\eeq
where $R_0^2=1+c^2A_{1}\cosh^2(\rho/c)$. Therefore these geometries are non-compact in the $z$ direction. We have also made  the choice $A_{x_1x_1}=A_{x_2x_2}=A_{1}$ and $A_{x_3x_3}=0$ since we need the catenoid to rotate and hence we require that \eqref{eq:killing} is a Killing vector field of the ambient space-time. The catenoid is rotating with angular velocity $\Omega$ such that
\beq
\textbf{k}^{a}\partial_a=\partial_\tau+\Omega\partial_\phi~~,~~\textbf{k}^2=1+c^2(A_1-a^2\Omega^2)\cosh^2\left(\frac{\rho}{c}\right)~~.
\eeq
This corresponds to a background Killing vector field of the form \eqref{eq:killing} rotating with angular
velocity $a\Omega$ in the $(x_1,x_2)$ plane and not boosted in the $x_3$ direction. In general we see that if
$A_1-a^2\Omega^2<0$ a limiting surface appears constraining the coordinate $\rho$. However the solution to the
blackfold equations \eqref{sol:cat2} requires that $A_1=a^2\Omega^2$ and hence that $\textbf{k}=1$. Therefore
we see that these catenoid geometries are rotating but are non-compact in the $\rho$ direction. These geometries give rise to black hole horizon topologies of the form $\mathbb{R}^3\times\mathbb{S}^{(D-5)}$ and exist in vacuum for $D\ge7$. These are depicted in  Fig.~\ref{fig:catenoid}.
\begin{figure}[h!] 
\centering
  \includegraphics[width=0.35\linewidth]{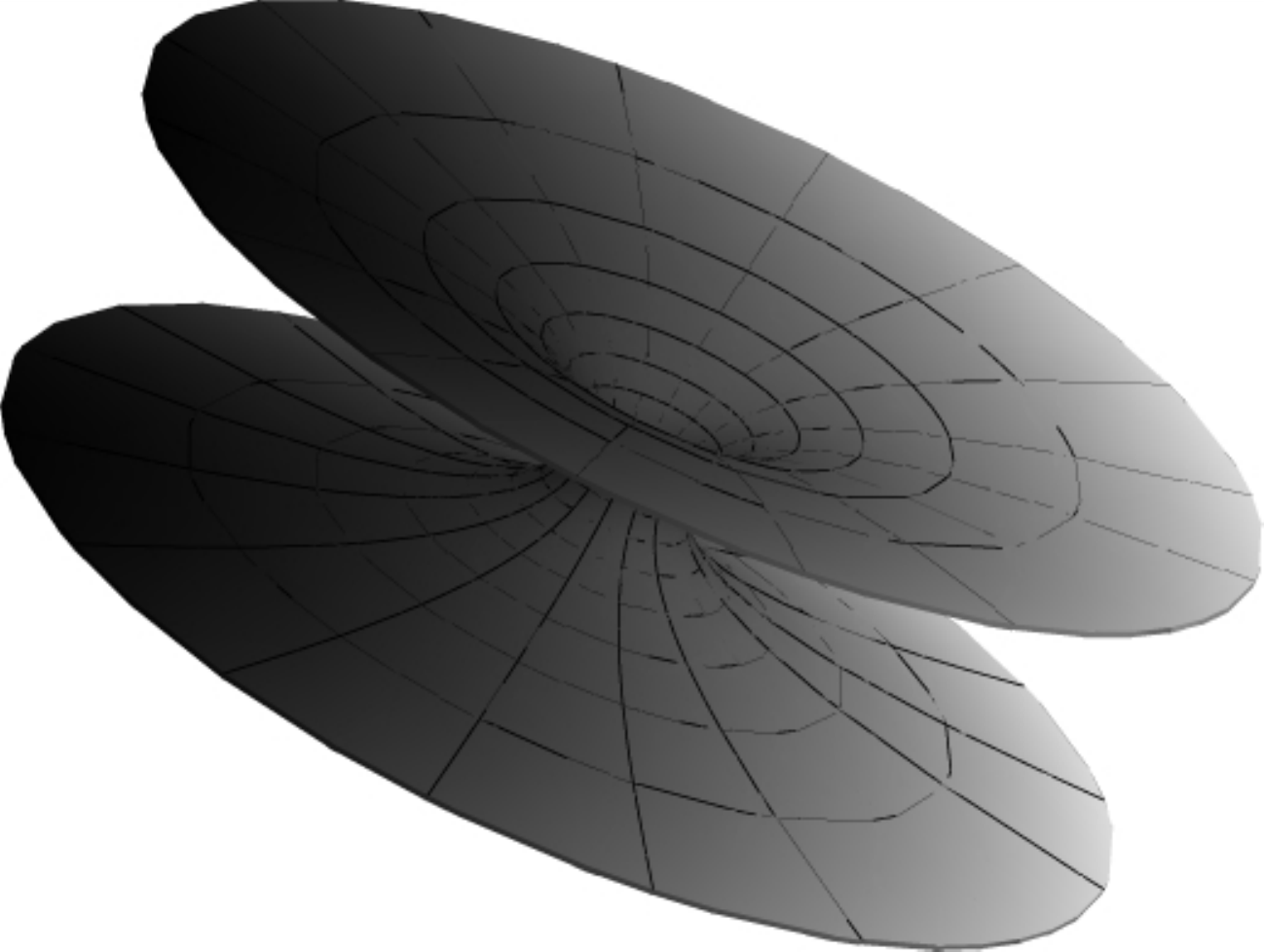}
\caption{Embedding of the rotating black catenoid in $\mathbb{R}^{3}$ with $a=c=1$, depicted in the interval $-3\le\rho\le3$. } \label{fig:catenoid}
\end{figure}

\subsubsection*{Thermodynamics and validity analysis}
The free energy of these configurations takes the following form
\beq \label{free:cat2}
\begin{split}
\mathcal{F}&=c\thinspace a\frac{\Omega_{(n+1)}}{16\pi G}\int dz \int d\rho \int_{0}^{\frac{2\pi}{a}} d\phi \cosh^{2}\left(\frac{\rho}{c}\right) r_+^{n} \left(1+c^2(A_1-a^2\Omega^2)\cosh^2\left(\frac{\rho}{c}\right)\right)^{\frac{n}{2}}\\
&=c\frac{\Omega_{(n+1)}}{8 G}\int dz \int d\rho \cosh^{2}\left(\frac{\rho}{c}\right)  r_+^{n}~~,
\end{split}
\eeq
where $r_+=\frac{n}{4\pi T}$. Since the configuration is non-compact, it is only meaningful to talk about the free energy density, i.e., the free energy modulo the integrations over $z$ and $\rho$. Since the catenoid is rotating, it has an angular momentum given by
\beq\label{free:j2}
J=c^3a^2n\Omega \frac{\Omega_{(n+1)}}{8 G}\int dz \int d\rho \cosh^{4}\left(\frac{\rho}{c}\right)  r_+^{n}~~.
\eeq

We now turn our attention to the validity of these configurations. Since this is a \textbf{Type II} embedding we need to check the invariants $\mathcal{R}$ and $u^{a}u^{b}\mathcal{R}_{ab}$. Using \eqref{r:type2} we find the requirements 
\beq 
r_0\ll\frac{c}{\sqrt{2}}\cosh^2\left(\frac{\rho}{c}\right)~~,~~r_0\ll\textbf{k}\frac{\cosh\left(\frac{\rho}{c}\right)}{\sqrt{A_1\cosh\left(\frac{2\rho}{c}\right)-a^2\Omega^2}}~~.
\eeq
The first invariants take its minimum value when $\rho=0$ while the second takes its minimum value when $\rho\to\infty$. Therefore we only have to require $r_0\ll c$ and $r_+\ll\sqrt{A_1}^{-1}$.
\subsubsection{Black Scherk surfaces of Type II}
In this section we study rotating black Scherk surfaces which unify the black catenoid, the black helicoid and the black disc of \textbf{Type II}. This family of solutions has the geometry of the associate family of the helicoid and the catenoid. The form of its embedding was already given in \eqref{e:iso}, explicitly, we have that
\beq
\begin{split}
&X^{1}(\rho,\phi)=\tilde \lambda c\sinh\left(\frac{\rho}{c}\right)\sin(a\phi)+\sqrt{1-\tilde \lambda^2}c\cosh\left(\frac{\rho}{c}\right)\cos(a\phi)~~,\\
&X^{2}(\rho,\phi)=-\tilde \lambda c\sinh\left(\frac{\rho}{c}\right)\cos(a\phi)+\sqrt{1-\tilde \lambda^2} c\cosh\left(\frac{\rho}{c}\right)\sin(a\phi)~~,\\
&X^{3}(\rho,\phi)=\tilde \lambda a c \phi + \sqrt{1-\tilde \lambda^2} \rho~~,
\end{split}
\eeq
and $t=\tau~,~y=z$ as well as $X^{i}=0~,~i=4,...,D-2$. If we set $\tilde\lambda=0$ we recover the catenoid geometry studied in the previous section, while if we set $\tilde\lambda=1$, redefine $ac=\lambda$ in $X^{3}(\rho,\phi)$ and introduce a new $\tilde\rho$ coordinate such that $c\sinh(\rho/c)=\tilde \rho$ we recover the helicoid in the form used  in Sec.~\ref{sec:helicoidpw}. The induced metric on the worldvolume takes exactly the same form as for the catenoid \eqref{ds:catenoid} since the spatial part is $\tilde\lambda$-independent. However, $R_0$ is given instead by
\beq
R_0^2=1+\frac{A_1}{2} c^2 \left(1-2\tilde\lambda^2+ \cosh \left(\frac{2 \rho}{c}\right)\right)~~.
\eeq
We have made the choice $A_{x_1x_1}=A_{x_2x_2}=A_{1}$ and $A_{x_3x_3}=0$ so that the ambient space-time has a one-family group of isometries associated with rotations in the $(x_1,x_2)$ plane. The Scherk surface is boosted along the $\phi$ direction with boost velocity $\Omega$ such that
\beq \label{k:helicat}
\textbf{k}^{a}\partial_a=\partial_\tau+\Omega\partial_\phi~~,~~\textbf{k}^2=R_0^2-\Omega^2a^2c^2\cosh^2\left(\frac{\rho}{c}\right)~~,
\eeq
with corresponds to a Killing vector field of the form \eqref{eq:killing} with angular velocity $a\Omega$ in the $(x_1,x_2)$ plane and with boost velocity $\tilde \lambda a\Omega$ along the $x_3$ direction. In general there is a limiting surface in the space-time, however, as shown in theorem \ref{theo:solpw2} the solution to the equations of motion requires $A_1=a^2\Omega^2$ and hence one finds that the limiting surface is removed since $\textbf{k}$ takes the constant value
\beq
\textbf{k}^2=1-a^2c^2\tilde\lambda^2\Omega^2~~.
\eeq
In the case of the catenoid of the previous section ($\tilde\lambda=0$) we recover the result $\textbf{k}=1$. We also see that in order to have a valid configuration we need to require $a^2c^2\tilde\lambda^2\Omega^2<1$. In the case of the helicoid, where $\tilde\lambda=1$ and $\lambda=ac$, we recover the result $\Omega^2\lambda^2<1$ obtained in Sec.~\ref{sec:helicoidpw}. These configurations give rise to black hole horizon topologies of the form $\mathbb{R}^3\times\mathbb{S}^{(D-5)}$ for all $\tilde\lambda$ in $D\ge7$. However, the geometry of the horizon varies greatly with $\tilde\lambda$ from that of a helicoid to that of the catenoid as depicted in Fig.~\ref{fig:helicat}.
\begin{figure}[h!] 
\centering
  \includegraphics[width=1\linewidth]{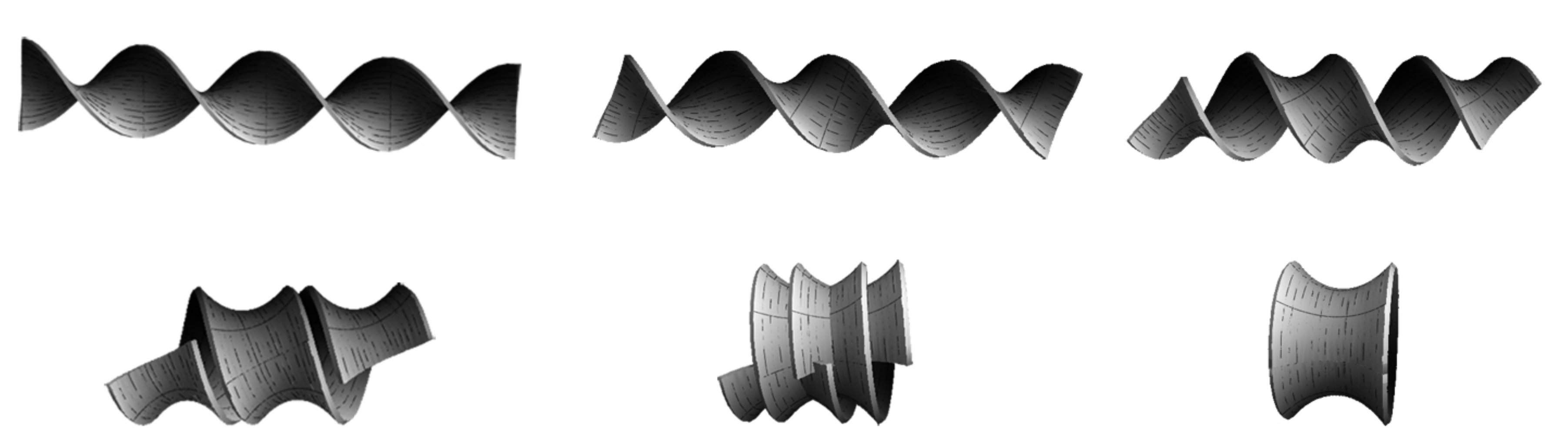}
  \begin{picture}(0,0)(0,0)
\put(-160,78){ $ (a) $}
\put(-0,78){ $ (b) $}
\put(145,78){ $ (c) $}
\put(-160,0){ $ (d) $}
\put(-0,0){ $ (e) $}
\put(145,0){ $ (f) $}
\end{picture} 
\caption{Embedding of the rotating black Scherk surface in $\mathbb{R}^{3}$ with $a=c=1$, depicted in the interval $-1\le\rho\le1$ and $-2\pi\le\phi\le2\pi$. The images show the deformation of the horizon geometry as the parameter $\tilde \lambda$ is changed. Image (a) corresponds to the helicoid $\tilde \lambda=1$, (b) to $\tilde \lambda=0.7$, (c) to $\tilde \lambda=0.5$, (d) to $\tilde \lambda=0.3$, (e) to $\tilde\lambda=0.1$ and (f) to the catenoid $\tilde\lambda=0$.} \label{fig:helicat}
\end{figure}

Because the induced metric for these geometries takes the same form as for the black catenoids \eqref{ds:catenoid}, the free energy \eqref{eq:free} to leading order also takes the same form and hence all its thermodynamic properties are the same as those presented in \eqref{free:cat2} and \eqref{free:j2}.

\subsubsection*{Validity analysis}
The analysis of the validity of these configurations is very similar to the case of the catenoids. In particular because its a \textbf{Type II} embedding then we have that the worldvolume Ricci scalar is just given by the Ricci scalar of the spatial part of the metric due to \eqref{r:type2}. Since the spatial part of the metric does not differ from \eqref{ds:catenoid} we obtain again the requirement $r_0\ll c$. From the invariant $u^{a}u^{b}\mathcal{R}_{ab}$ we simply get that $r_+\ll \sqrt{A_1}^{-1}$ while the invariant $\textbf{k}^{-1}\nabla_a\nabla^a\textbf{k}$ vanishes since $\textbf{k}$ is constant along the worldvolume.

From the point of view of the validity requirements in \eqref{eq:req} these configurations are valid blackfold solutions leading to regular horizons. However, the geometry of Scherk's second surface has self-intersections for any value of $\tilde\lambda$ which is neither the helicoid ($\tilde\lambda=1$) nor the catenoid ($\tilde\lambda=0$) (see e.g. \cite{Gray:2006}). This is visible in the image (e) of Fig.~\ref{fig:helicat}. Therefore in the strict mathematical sense, these geometries are not \emph{embedded submanifolds}. 

As explained in Sec.~\ref{sec:validity}, the set of requirements \eqref{eq:req} were obtained assuming that curvature corrections were dominant over backreaction and self-force effects. However, in the case of geometries with self-intersections, self-force effects, near the location where the horizon meets itself, are expected to dominate over curvature corrections. In particular, the existence of self-intersections means that \eqref{r:d} is not satisfied. It is still an open question of whether or not blackfold worldvolumes with self-intersections can give rise to regular black hole solutions. A more in-depth analysis of these cases, perhaps using methods similar in spirit to those of \cite{Emparan:2011ve} or by explicitly constructing the perturbative solution as in \cite{Emparan:2007wm}, would be required in order to assess its validity.


\subsection{Black $p+2$-balls and minimal surfaces in $\mathbb{S}^{(p+1)}$ in de Sitter space-times} \label{sec:desitter}
In Sec.~\ref{sec:theorems} we showed that minimal surfaces on $\mathbb{S}^{(p+1)}$ solve the blackfold equations, which led to theorem \ref{theo:type3}. The purpose of this section is to show that minimal surfaces on $\mathbb{S}^{(p+1)}$ can be useful for constructing black hole horizons in de Sitter space-time. Due to the validity issues of embeddings of \textbf{Type III}, discussed in Sec.~\ref{sec:cembed}, these geometries must be embedded in an ambient space-time with a black hole horizon in order to avoid conical singularities at the origin. The starting point of theorem \ref{theo:type3} is the existence of a $p+2$-ball solution in the space-times \eqref{ds:ds1}. These geometries are obtained by choosing $t=\tau$ and embedding a $p+2$-ball into the conformally Euclidean part of the metric \eqref{ds:ds2}, giving rise to a $(p+2)$-dimensional worldvolume geometry of the general form \eqref{ds:type3} which reads
\beq \label{ds:ball}
\textbf{ds}^2=-R_0^2dt^2+R_0^{-2}d\rho^2+\rho^2d\Omega_{(p+1)}^2~~,~~R_0^2=1-\frac{r_m^{n+p+2}}{\rho^{n+p+2}}-\frac{\rho^2}{L^2}~~.
\eeq
This metric is not a de Sitter metric any longer and its Ricci curvature has a singularity at $r=0$. However, if the black hole is present $r_m\ne0$ then this singularity is shielded behind the black hole horizon, which is located at the lowest positive real root of $R_0=0$. The space-time is defined in the coordinate range $\rho_{-}\le \rho \le \rho_+$ where $\rho_\pm$ denote the two positive real roots of $R_0=0$ describing the location of the black hole horizon and the location of the cosmological horizon. That is, the $p+2$-ball is a compact geometry. Its topology is $\mathbb{S}^{1}\times \mathbb{S}^{(D-3)}$ due to the existence of the inner black hole horizon. Since this geometry is static one has that $\textbf{k}=R_0$. Therefore the free energy is then
\beq \label{free:cliff}
\mathcal{F}=\frac{\Omega_{(n+1)}}{16\pi G}r_+^{n}\int d\Omega_{(p)}\int_{\rho_-}^{\rho_+}\rho^{p+1}~\textbf{k}^{n}~~,
\eeq
where $r_+=n/(4\pi T)$. This free energy cannot be integrated to a closed expression for $n+p+2\ge5$ (even in the limit $L\to\infty$) but it can easily be evaluated numerically. These geometries are just the analogue geometries of the $p$-balls constructed in Sec.~\ref{sec:discpw}. Theorem \ref{theo:type3} says that we can place any minimal surface in $\mathbb{S}^{(p+1)}$ and that will be a solution of the blackfold equations. We will now analyse the case of the Clifford torus and then its higher-dimensional version.

\subsubsection{Clifford torus}
We begin with the classical example of the Clifford torus which is a minimal surface in $\mathbb{S}^{3}$.\footnote{The simplest example of a minimal surface in $\mathbb{S}^{3}$ is the equator of the 3-sphere which is itself a 2-sphere. However, this configuration is already included in \eqref{ds:ball} when $p=1$.} In order to embed it we consider the metric on $S^{3}$ written in the form
\beq
d\Omega_{(3)}^2=d\theta^2+\sin^2\theta d\phi_1^2+\cos^2\theta d\phi^2~~.
\eeq
We embed the Clifford torus ($p=2$) by choosing a constant angle $\sin^2\theta=R^2$. The free energy \eqref{free:cliff} becomes
\beq
\mathcal{F}[R]=\frac{\Omega_{(n+1)}}{16\pi G}r_+^{n}(2\pi)^2\int_{\rho_-}^{\rho_+}\rho^{p}\left(1-\frac{r_m^{n+p+1}}{\rho^{n+p+1}}-\frac{\rho^2}{L^2}\right)^{\frac{n}{2}}R\sqrt{1-R^2}~~.
\eeq
Varying this free energy with respect to $R$ leads to the unique solution defining the Clifford torus $R^2=1/2$, for which each circle has equal radius. Therefore the induced metric of the $(p+2)$-dimensional geometry takes the form
\beq \label{ds:cliff}
\textbf{ds}^2=-R_0^2dt^2+R_0^{-2}d\rho^2+\frac{\rho^2}{2}\left(d\phi_1^2+d\phi_2^2\right)~~,~~R_0^2=1-\frac{r_m^{n+p+1}}{\rho^{n+p+1}}-\frac{\rho^2}{L^2}~~.
\eeq
These geometries give rise to black hole horizon topologies of the form $\mathbb{S}^{1}\times\mathbb{T}^{2}\times\mathbb{S}^{(D-5)}$, where $\mathbb{T}^{2}$ is the torus. This geometry is depicted in Fig.~\ref{fig:clifford}.
\begin{figure}[h!] 
\centering
  \includegraphics[width=0.4\linewidth]{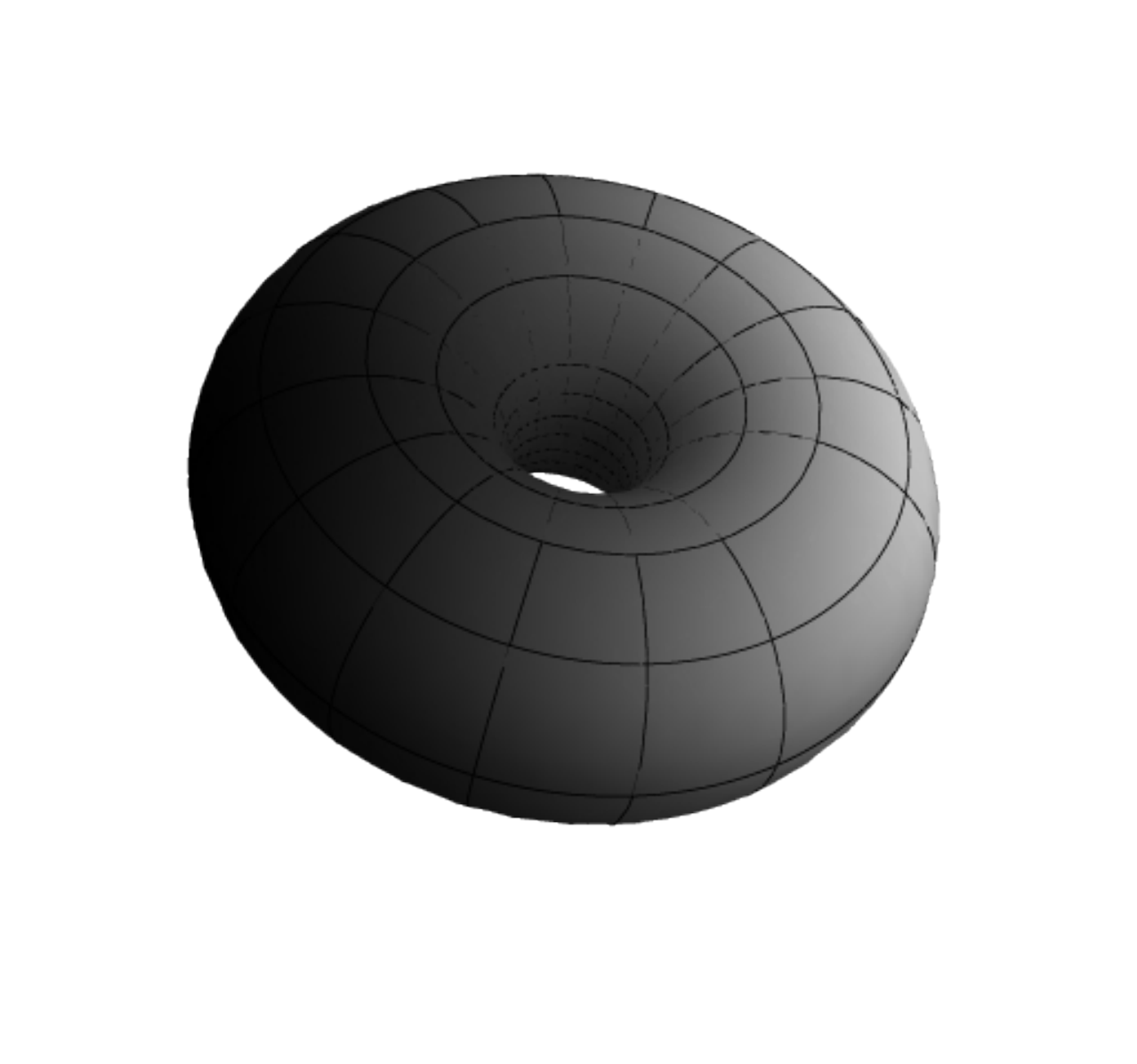}
\caption{The Clifford torus visualised in $\mathbb{R}^{3}$ by applying a stereographic projection from $\mathbb{S}^{3}$ to $\mathbb{R}^{3}$. These are constant $\tau$ and $\rho$ slices of the geometry \eqref{ds:cliff}.} \label{fig:clifford}
\end{figure}

\subsubsection{Higher-dimensional Clifford tori}
The previous configurations can be generalised to higher dimensions. We write the metric on the $(p+1)$-sphere as
\beq \label{ds:sphere}
d\Omega^{2}_{(p+1)}=d\theta^2+\cos^2\theta d\Omega_{(\kappa_1)}^2+\sin^2\theta d\Omega_{(\kappa_2)}^2~~,~~\kappa_1+\kappa_2=p~~,
\eeq
and choose the embedding $\sin^2\theta=R^2$. The free energy \eqref{free:cliff} becomes 
\beq
\mathcal{F}[R]=\frac{\Omega_{(n+1)}}{16\pi G}r_+^{n}\int d\mathbb{T}_{(p)}\int_{\rho_-}^{\rho_+}\rho^{p}\left(1-\frac{r_m^{n+p+1}}{\rho^{n+p+1}}-\frac{\rho^2}{L^2}\right)^{\frac{n}{2}}R^{\kappa_1}\left(1-R^2\right)^{\frac{\kappa_2}{2}}~~,
\eeq
where $d\mathbb{T}_{(p)}$ denotes the volume form on the $p$-dimensional torus. Varying this free energy with respect to $R$ leads to a unique solution given by
\beq \label{eq:mintori}
R^2=\frac{\kappa_1}{\kappa_1+\kappa_2}~~.
\eeq
Therefore the induced metric becomes
\beq
\textbf{ds}^2=-R_0^2dt^2+R_0^{-2}d\rho^2+\rho^2\left(\frac{\kappa_1}{\kappa_1+\kappa_2}d\Omega_{(\kappa_1)}^2+\frac{\kappa_2}{\kappa_1+\kappa_2}d\Omega_{(\kappa_2)}^2\right)~~,
\eeq
where $R_0$ is given in \eqref{ds:cliff}. We see that in general the two spherical parts of the geometry have unequal radii. In the case where the two radii are equal, that is, when $\kappa_1=\kappa_2$ these geometries are known as higher-dimensional Clifford tori. These configurations give rise to static black hole horizon topologies of the form $\mathbb{S}^{1}\times\mathbb{T}^{(p)}\times\mathbb{S}^{(D-p-3)}$ and are valid, according to App.~\ref{sec:valanal}, in there regime $r_0\ll L$, $r_0\ll r_m$ and $r_+\ll \rho_{\pm}$.

The on-shell free energy \eqref{free:cliff} becomes
\beq\label{f:c22}
\mathcal{F}=\frac{\Omega_{(n+1)}}{16\pi G}r_+^{n}\mathbb{T}_{(p)}\ \int_{\rho_-}^{\rho^+} d\rho ~\rho^{p}\left(1-\frac{r_m^{n+p+1}}{\rho^{n+p+1}}-\frac{\rho^2}{L^2}\right)^{\frac{n}{2}}~~,
\eeq
where $\mathbb{T}_{(p)}$ is the volume of the $p$-dimensional torus, given by
\beq
\mathbb{T}_{(p)}=\Omega_{(\kappa_1)}\Omega_{(\kappa_2)}\left(\frac{\kappa_1}{\kappa_1+\kappa_2}\right)^{\frac{\kappa_1}{2}}\left(\frac{\kappa_2}{\kappa_1+\kappa_2}\right)^{\frac{\kappa_2}{2}}~~.
\eeq
Since we cannot find a closed form for the free energy \eqref{f:c22}, we integrate it numerically for several values of $p$ and $n$, the result is given in Fig.~\ref{fig:freeclifford}.
\begin{figure}[h!] 
\centering
  \includegraphics[width=0.5\linewidth]{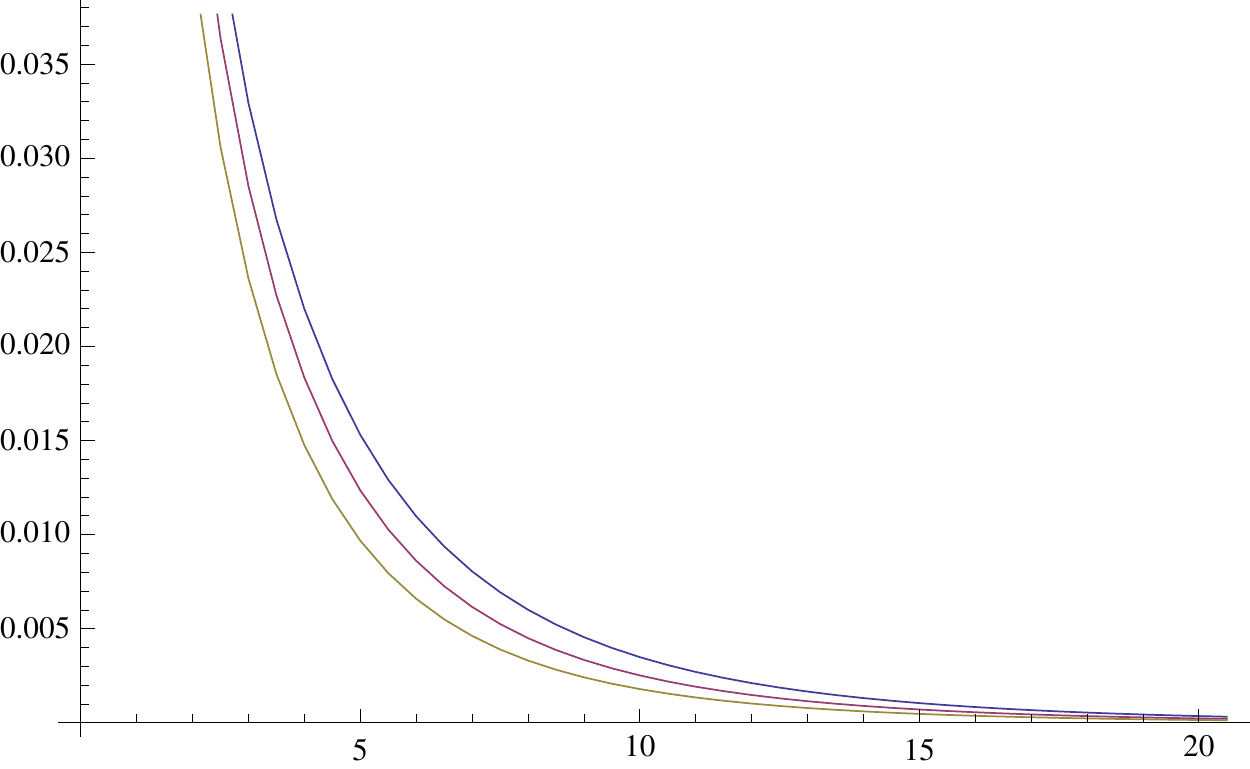}
  \begin{picture}(0,0)(0,0)
\put(-300,113){ $ \frac{16\pi G \mathcal{F}}{r_+^n\Omega_{(n+1)}\mathbb{T}_{(p)}}  $}
\put(-20,-5){ $ n  $}
\end{picture} 
\caption{Free energy for the Clifford Tori with $r_m=\frac{1}{2}$, $L=1$ and $p=2$ (blue curve), $p=3$ (red curve) and $p=4$ (yellow curve).} \label{fig:freeclifford}
\end{figure}
These configurations are also valid the limit $L\to\infty$, however, in that case they are not compact as $\rho_+\to\infty$.

\section{Discussion}\label{sec:conclusions}  
In this paper we have constructed a series of new blackfold
configurations which can give rise to interesting novel black hole
horizon geometries and topologies in asymptotically flat, plane wave and
de Sitter space-times. These blackfold geometries are such that they
intersect limiting surfaces  in the ambient space-time, which are either
inherently present in the space-time or introduced via rotation. The
key ingredient in this work was the recursive use of minimal surfaces in
$\mathbb{R}^{(D)}$ and in $\mathbb{S}^{(D)}$. The presence of limiting
surfaces allowed us to turn several of the non-compact minimal surfaces
in $\mathbb{R}^{(D)}$ into compact minimal surfaces, at least in some
directions, in Lorentzian space-times.

Non-trivial minimal surfaces in $\mathbb{L}^{3}$ are also available in the literature (see e.g.\cite{Mira:2003, LeeS:2008, Wook:2011} for Lorentzian helicoids and catenoids). However, these are embedded in a non-trivial way in the time-like direction and hence most of these surfaces do not preserve a one parameter family of isometries of the ambient space-time. According to the blackfold method explained in Sec.~\ref{sec:validity}, these are not suitable for constructing black hole geometries. 

We have highlighted the fact that there exists the potential for a fruitful interplay between the mathematics of minimal surfaces and black hole geometries. In particular, we were able to show that minimal surfaces such as planes, helicoids, catenoids and Clifford tori give rise to black hole horizons. However, we have only scratched the surface since the mathematics of minimal surfaces is a very active and productive subject of mathematics and, in particular, few examples of higher-dimensional minimal surfaces are known.

We believe that in Sec.~\ref{sec:validity} we have made an important contribution to the blackfold method.
Namely, we have given a prescription for systematically 
analysing the regime of validity of a blackfold configuration based on
the second order effective free energy \eqref{eq:free} obtained in \cite{Armas:2013hsa}. This allows for a
classification of all the length scales associated with the geometric invariants that characterise each
blackfold configuration, and are defined order-by-order in a derivative expansion. Therefore at each order in
the expansion, higher-order invariants must be classified in order to assess the validity of each
configuration. What our analysis has shown is that the blackfold approach to leading order would not be complete without the understanding of higher-order corrections up to second order in the derivative expansion, as this is required in order to understand the regimes of validity of leading order configurations. As a simple example, in Sec.~\ref{sec:blackdisc}, we have applied this method to a disc geometry in flat space-time for which all intrinsic and extrinsic curvature invariants vanish but the length scale associated with local variations of the thickness yields the condition $r_+\Omega\ll1$ near the axis of rotation, which is required in order to capture the ultraspinning limit of Myers-Perry black holes. 

In Sec.~\ref{sec:theorems} and \ref{sec:sclasses} we have proved several assertions regarding solutions to the blackfold equations. For example, we have shown that the only two stationary minimal surfaces embedded into $\mathbb{R}^{3}$ which solve the blackfold equations in flat space-time are the plane and the helicoid. We believe that a systematic study of the blackfold equations using symmetries as a guiding principle might turn out to be a fruitful endeavour for classifying the topologies of black hole horizons in higher-dimensions, at least those which admit regimes with a large separation of scales. 

In Sec.~\ref{sec:helicoid} we constructed rotating black helicoids in $D\ge6$ in asymptotically flat space-time. One interesting feature of these geometries is that, by taking an appropriate limit, one recovers the family of singly-spinning Myers-Perry black holes, described by a rotating black disc. Therefore, this family of rotating black helicoids seems to be connect to the family of Myers-Perry black holes, at least in the ultraspinning regime. It would be interesting to explore this using numerical methods. While this result is true in flat space-time, we have shown that it is not the case in (Anti)-de Sitter space-times as 
it was not possible to construct the corresponding helicoid geometries there. In a later publication \cite{Armas:2015nea}, we show how discs and helicoid geometries can be used to construct new effective theories and other non-trivial black hole geometries such as helicoidal black rings and helicoidal black tori.

In plane wave space-times we also constructed different families of rotating black helicoids and black discs. Though there are no exact analytic black holes in asymptotically plane wave space-times in vacuum, we have given evidence for the existence of several families of rotating black hole solutions with spherical topology (the analogue of Myers-Perry black holes). We have also shown that these geometries can be captured by taking an appropriate limit of black helicoids. In addition, we showed that rotating black catenoids also give rise to black hole horizons in plane wave space-times and in App.~\ref{sec:blackodd} we constructed several p-sphere black holes. It would be interesting to construct approximate metrics for these geometries using the machinery of \cite{Emparan:2007wm, LeWitt:2009qx, Camps:2012hw}.

In Sec.~\ref{sec:discpw} we showed that in plane wave space-times black discs and helicoid geometries can be
static due to the presence of an inherent limiting surface for certain values of the plane wave matrix
components. In the case of the disc this is analogous to what happens in de Sitter space-time, for which such
geometries describe the static ultraspinning regime of Kerr-de Sitter black holes. We also showed the existence
of static black $p$-ball geometries, which analogously to the de Sitter case, capture the intersection of the
horizon of higher-dimensional Kerr-de Sitter black holes with the cosmological horizon when $p$ is an even
number. However, we have noticed that these solutions are valid for all $p$. For odd $p$ these are connected to
another family of black hole solutions which do not have spherical horizon topology. As observed in
\cite{Emparan:2009vd}, it is not possible to construct odd-ball geometries when only centrifugal force and 
tension need to equilibrate each other. When inherent limiting surfaces are present in the background
space-time, however, they act as an internal pressure and its interplay with the tension allows for the existence of static odd-ball geometries. We have observed the same phenomenon in App.~\ref{sec:blackodd} where static black even $p$-spheres are also possible configurations.

In Sec.~\ref{sec:catenoid} we have given the first example of a non-trivial blackfold solution whose worldvolume geometry has self-intersections. This geometry, known as Scherk's second surface, interpolates between the catenoid and the helicoid and hence connects these two families (and also the disc since it can be obtained as limit of the helicoid) in plane wave space-times. However, it is unclear at the present moment whether or not blackfold geometries with self-intersections do give rise to regular black hole solutions since they do not satisfy \eqref{r:d}. However, one might expect that gravitational backreaction might smooth out such intersections.

We conclude this discussion with some \textit{caveats} and corresponding
avenues for future research. 

First of all, the perturbative construction of approximate metrics using
the blackfold approach has not been fully understood when a blackfold
worldvolume intersects a limiting surface in the space-time. It would
be very interesting to understand how this works using the methods
of \cite{Emparan:2007wm, Camps:2012hw} for the simplest example of
Myers-Perry black holes. In fact, our criteria of validity put forth in Sec.~\ref{sec:rvalidity} lead to the conclusion that for geometries with boundaries, the blackfold approximation is expected to break down near the boundary. This is due to the fact that the local variations in the thickness vanish too quickly as one approaches the boundary $r_0=0$. This suggests that either the geometry near the boundary must be replaced by something else than a black brane geometry \eqref{ds:blackp} or that one should require the effective blackfold description to have a smooth limit as $r_0\to0$ and conjecture that this requirement describes the geometry of the corresponding gravitational object. This is indeed the case for Myers-Perry black holes and indicates that the blackfold method is working better than expected. 

We also note that the blackfold method has previously been
successfully applied in plane wave space-times in vacuum by perturbing
\eqref{ds:blackp}, in particular, for perturbatively constructing a black
string geometry in $D=5$ \cite{LeWitt:2009qx}. However, there also some
apparent tension between the joint desiderata of horizon regularity
and plane wave asymptotics was exhibited.  The construction presented
here differs in some respect from that in \cite{LeWitt:2009qx},
in particular in the way the matched asymptotic expansion can be
implemented, and as a result we do not see any obvious obstruction to
constructing solutions with the desired asymptotically plane wave boundary
conditions. Specifically, because of the non-trivial extrinsic and/or
intrinisc geometry of the configurations discussed in this paper, we can
choose the perturbative parameter in the matched asymptotic expansion to
be the length scale associated with the extrinsic or intrinsic scales of
the worldvolume geometry. By contrast, in the black string construction
of \cite{LeWitt:2009qx} (with its trivial extrinsic geometry) the only
other available dimensionful parameters were the inverse length scales
$\sim \sqrt{A_q}$ associated with the plane wave profile. Thus we can
treat the plane wave background exactly, while the matched asymptotic
expansion in \cite{LeWitt:2009qx} required an expansion of the plane
wave metric itself, thought of as a perturbation of Minkowski space,
in inverse powers of the typical mass scale $\mu$ ($ \sim \sqrt{A_1}$,
say).  Such an expansion corresponds to an expansion in positive powers
of some suitably defined radial coordinate and is therefore not suitable
for exploring the asymptotics of the full solution.

In fact, a preliminary study of the perturbative construction of
a black ring, embedded as in theorem \ref{theo:full}, in plane wave
backgrounds in the intermediate region tells us that the perturbations
fall-off rapidly enough at infinity and do not change the asymptotics. It
would be interesting to study this for more non-trivial embeddings of
the black ring as in App.~\ref{sec:oddspheres} using the methods of
\cite{Emparan:2007wm, Caldarelli:2008pz} and to use numerical methods
as in \cite{Kleihaus:2012xh,Dias:2014cia, Kleihaus:2014pha, Figueras:2014dta} in order to
construct the full solution.

As was explained in Sec.~\ref{sec:validity}, the blackfold method
for minimal surfaces can be seen, up to second order, as a purely
hydrodynamic expansion in a curved background. It would be very
interesting to understand how these geometries are modified when second
order corrections given in \eqref{eq:free} are taken into account using
the tools available in \cite{Armas:2013hsa,Armas:2013goa, Armas:2014bia,
Armas:2014rva}. 

Finally, it would be interesting to consider charged blackfolds and
construct analogous geometries in plane wave space-times in string theory.


\section*{Acknowledgements}
We would like to thank Irene Amado and Joan Camps for useful discussions. We are extremely grateful to Roberto Emparan, Troels Harmark and  Niels A. Obers for useful discussions, comments and for a detailed reading of an earlier draft of this paper. We would also like to thank William H. Meeks III and Joaqu\'{i}n P\'{e}rez for useful e-mail correspondence. This work has been supported by the Swiss National Science Foundation and the �Innovations- und Kooperationsprojekt C-13� of the Schweizerische Universit\"{a}tskonferenz SUK/CUS. JA acknowledges the current support of the ERC Starting Grant 335146 \textbf{HoloBHC}.
 

\appendix



 \section{Detailed validity analysis of configurations} \label{sec:valanal}
 In this appendix we provide the specific details regarding the validity analysis of several of the configurations studied in the core of this paper.
 
 \subsection*{Black helicoids in flat space-time}
 Here we study the validity regime of the black helicoids of Sec.~\ref{sec:flathelicoids}. Since the embedding \eqref{emb:helic} is a \textbf{Type I} embedding \eqref{ds:type1} in flat space-time then we only need to check, according to the analysis of Sec.~\ref{sec:validity} and the requirements \eqref{eq:req}, the invariants $|\textbf{k}^{-1}\nabla_a\nabla^{a}\textbf{k}|^{-\frac{1}{2}}$, $|\mathcal{R}|^{-\frac{1}{2}}$ and $|u^{a}u^{b}\mathcal{R}_{ab}|^{-\frac{1}{2}}$. Since this is flat space-time we have $R_0=1$ and hence, according to \eqref{r:type1} we have $\mathcal{R}=\mathcal{R}_{\tilde{\mathbb{E}}}$. Therefore we need to require $r_0\ll |\mathcal{R}_{\tilde{\mathbb{E}}}|^{-\frac{1}{2}}$, explicitly, 
 \beq \label{r0:helic}
r_0\ll \frac{(\lambda^2+a^2\rho^2)}{\sqrt{2}a \lambda}~~.
\eeq
This has a minimum at the origin $\rho=0$ and is maximal at the boundaries $\rho_\pm$. Hence one only needs to satisfy $r_0\ll \lambda/(\sqrt{2}a)$ which is always possible by appropriately tuning the temperature $T$ in \eqref{eq:pressure} and the ratio $\lambda/a$.\footnote{Note that, as explained in the beginning of Sec.~\ref{sec:flathelicoids}, the only physical parameter in the embedding \eqref{emb:helic} is the ratio $\lambda/a$.} Note that if we had taken the limit $\lambda\to0$ in \eqref{r0:helic} we would have obtained a divergent result since the plane $\mathbb{R}^2$ is Ricci-flat. For the invariant $|u^{a}u^{b}\mathcal{R}_{ab}|^{-\frac{1}{2}}$ we instead obtain the requirement
\beq \label{r:helic}
r_0\ll \textbf{k}\frac{\sqrt{\lambda^2+a^2\rho^2}}{\Omega a \lambda}~~,
\eeq
which is minimum at the boundary where $\textbf{k}=0$. Since $r_0$ scales with the same power of $\textbf{k}$ (see Eq.~\eqref{eq:pressure}) as the r.h.s. of \eqref{r:helic} we need to require that $r_+\ll 1/(a\Omega)$ which is again always possible by appropriately tuning the temperature $T$ and the ratio $1/(a\Omega)$. This is still compatible with the requirement $\Omega \lambda<1$ obtained below \eqref{eq:reqh}. We finally check the invariant associated with variations in the local temperature, we find that we must have  
\beq \label{c:w}
r_+\ll \textbf{k}\frac{\sqrt{a^2 \rho ^2+\lambda ^2}}{a \Omega  \sqrt{\lambda ^2+2 a^2 \rho ^2-\Omega ^2 \left(a^2 \rho
   ^2+\lambda ^2\right)^2}}~~.
\eeq
Near $\rho=0$ this implies that we must have $r_+\ll \rho_+$.  
  
\subsection*{Black discs of \textbf{Type I} in plane wave space-times}
Here, the validity of regime of the configurations of Sec.~\ref{sec:disc1} is analysed. Since we are dealing with a \textbf{Type I} embedding in plane wave space-times we need to check if the invariants $\textbf{k}^{-1}\nabla_a\nabla^a\textbf{k}$, $\mathcal{R}$ and $u^{a}u^{b}\mathcal{R}$ as well as \eqref{r:pw} satisfy the requirements \eqref{eq:req}. From the intrinsic curvature invariants we obtain the requirements
\beq \label{req:disc1}
r_0\ll \frac{R_0^2}{\sqrt{2A_1(2+A_1\rho^2)}}~~,~~r_0\ll \textbf{k} \frac{R_0}{\sqrt{(1+R_0^2)A_1+(R_0^2-1)\Omega^2}}~~.
\eeq
As explained in Sec.~\ref{sec:cembed} for static embeddings of \textbf{Type I} in plane wave space-times \eqref{r:type1} tells us that $\mathcal{R}$ diverges at the boundary where $R_0=\textbf{k}=0$. Therefore, when $\Omega=0$ we need to introduce $\epsilon$ and consider the configuration to be valid up to $\rho=\rho_\pm \mp \epsilon$. In this case, it is enough to require $r_0\ll|\textbf{k}^{-1}\nabla_a\nabla^a\textbf{k}|^{-\frac{1}{2}}$ as we will see below. If $\Omega\ne0$ then both requirements \eqref{req:disc1} take their maximum value at the boundary $\textbf{k}=0$ and minimum when $\rho=0$. Both of them reduce to $r_+\ll \sqrt{A_1}^{-1}$. The invariants associated with the curvatures of the background \eqref{r:pw} yield
\beq
r_0\ll \frac{R_0}{2\sqrt{A_1}}~~,~~r_0\ll R_0\frac{\textbf{k}}{\sqrt{A_1}\Omega \rho}~~,
\eeq
and lead again to $r_+\ll \sqrt{A_1}^{-1}$ if $\Omega\ne0$. Finally, the invariant associated to changes in the local temperature leads to the condition
\beq \label{req:discpw}
r_+\ll\frac{\textbf{k}}{\sqrt{A_1-\Omega^2}\sqrt{2+(A_1-\Omega^2)\rho^2}}~~.
\eeq
Near $\rho=0$ this leads to the requirement $r_+\ll\rho_+$ while near the boundary it becomes impossible to satisfy. Therefore one needs to introduce $\epsilon$ and assume a well defined boundary expansion.	  
  
\subsection*{Black discs of \textbf{Type II} in plane wave space-times}  
We now look the configurations of Sec.~\ref{sec:disc2}. Since these are \textbf{Type II} embeddings in plane wave space-times we know from the general analysis of Sec.~\ref{sec:cembed} that $R_{||}=R_{//}=0$. Furthermore, from \eqref{r:type2}, we have that $\mathcal{R}=\mathcal{R}_{\mathbb{E}}=0$. Therefore we only need to check the invariants $\textbf{k}^{-1}\nabla_a\nabla^a\textbf{k}$ and $u^{a}u^{b}\mathcal{R}_{ab}$. From the last invariant we obtain the requirement
\beq
r_0\ll\frac{\textbf{k}}{\sqrt{2A_1}}~~,
\eeq
and hence we need $r_+\ll \sqrt{A_1}^{-1}$ while from the first invariant we obtain again condition \eqref{req:discpw} and hence the same conclusions as in the previous case apply to this configuration.

\subsection*{Black helicoids of \textbf{Type I} in plane wave space-times}
The configurations found in Sec.~\ref{sec:helicoid1} are \textbf{Type I} embeddings and hence we need to analyse the invariants $\textbf{k}^{-1}\nabla_a\nabla^a\textbf{k}$, $\mathcal{R}$, $u^{a}u^{b}\mathcal{R}_{ab}$ as well as $R_{||},R_{//}$. Assuming that the solution is not static we find the first two set of bounds
\beq
r_0\ll \frac{1}{\sqrt{2}} \frac{\lambda}{\sqrt{A_1\lambda^2+a^2}}~~,~~r_0\ll \frac{\textbf{k}}{\sqrt{A_1-a^2\Omega^2}}~~,
\eeq
which are obtained by evaluating $\mathcal{R}$ and $u^{a}u^{b}\mathcal{R}_{ab}$ at the origin $\rho=0$. These imply that we need $r_0\ll \sqrt{A_1+a^2/\lambda^2}^{-1}$ and $r_+\ll\sqrt{A_1-a^2\Omega^2}^{-1}$. The invariant associated with variations of the local temperature implies that $r_+\ll\rho_+$ near $\rho=0$ and near the boundary it forces us to introduce the cutt-off $\epsilon$. From the second set of invariants we need to require that 
\beq
r_0\ll \frac{\sqrt{\lambda^2+a^2\rho^2}}{\sqrt{A_1}a\rho}~~,~~r_0\ll \frac{\textbf{k}}{\sqrt{A_1}\Omega a\rho}~~.
\eeq
This takes its minimum value at the boundaries $\rho_\pm$ when $\Omega\ne0$. If $\Omega=0$ then these diverge at the boundary but this divergence has been taken care of by the introduction of $\epsilon$. We find the same type of requirements as for the invariants $\mathcal{R}$ and $u^{a}u^{b}\mathcal{R}_{ab}$ and hence these configurations are valid in the interval $\rho_- +\epsilon\le\rho\le\rho_+-\epsilon$.

\subsection*{Black helicoids of \textbf{Type II} in plane wave space-times}
For these embeddings of Sec.~\ref{sec:helicoid2}, we have that $R_{||}=R_{//}=0$, therefore we only need to check $\textbf{k}^{-1}\nabla_a\nabla^a\textbf{k}$, $\mathcal{R}$ and $u^{a}u^{b}\mathcal{R}_{ab}$. Using \eqref{r:type1} we find the requirements
\beq
r_0\ll \frac{\lambda^2+a^2\rho^2}{\sqrt{2} a \lambda}~~,~~r_0\ll \frac{\textbf{k}}{\sqrt{A_1}}\frac{\sqrt{\lambda^2+a^2\rho^2}}{\sqrt{\lambda^2+a^2(2\rho^2-\lambda^2\Omega^2)}}~~,
\eeq
associated with the curvatures. The above requirements take their most strict value at the origin $\rho=0$. It is then only sufficient to require $r_0\ll \lambda/a$ and $r_+\ll \sqrt{A_1(1-a^2\Omega^2)}^{-1}$. The analysis of the invariant $\textbf{k}^{-1}\nabla_a\nabla^a\textbf{k}$ as the same as for helicoids of \textbf{Type I} and hence the same conclusions apply here.

\subsection*{Higer-dimensional Clifford tori in de Sitter space-times}
Since this is a \textbf{Type III} embedding found in Sec.~\ref{sec:desitter}, we need to analyse the invariants $\textbf{k}^{-1}\nabla_a\nabla^a\textbf{k}$, $\mathcal{R}$ and $u^{a}u^{b}\mathcal{R}_{ab}$ as well as $R_{||}$ and $R_{//}$. These last two give rise to the conditions $r_0\ll L$ and $r_0\ll r_m$. Computing explicitly the worldvolume Ricci scalar we find
\beq
\mathcal{R}=\frac{b_1+b_2 R_0+b_3 \rho R_0'+b_4\rho^2R_0''}{b_5 \rho^2}~~,
\eeq
for some constants $b_i$ and where the prime denotes a derivative with respect to $\rho$. The Ricci scalar only diverges when $\rho=0$ since $R_0'$ and $R_0''$ do not diverge anywhere except at $\rho=0$. Therefore the singularity is shielded behind the black hole horizon. Next we need to compute the scalar $u^{a}u^{b}\mathcal{R}_{ab}$. We can check that in general one has
\beq
\mathcal{R}_{\tau\tau}=b_1R_0\left(\frac{b_2 R_0'}{\rho}+b_3 R_0''\right)~~,~~\mathcal{R}_{\phi_{\hat a}\phi_{\hat a}}=\sin^2(\theta_{\hat a})\left(b_1+b_2 R_0+b_3 \rho R_0'\right)~~,
\eeq
for some constants $b_i$ and where $\theta_{\hat a}$ are angles which parametrize the $\kappa_1$ and $\kappa_2$-dimensional spheres. Only the component $\mathcal{R}_{\tau\tau}$ diverges and that only happens when $\rho=0$. From the invariant $\textbf{k}^{-1}\nabla_a\nabla^a\textbf{k}$ we find that it is sufficient to require $r_+\ll \rho_{\pm}$ near the origin $\rho=0$, however near the boundaries we need to introduce the cut-off $\epsilon$. Therefore, these configurations are valid blackfold solutions in the interval $\rho_--\epsilon\le\rho\le\rho_+-\epsilon$.
  
  \section{Higher-dimensional black helicoids in flat and plane-wave space-times}  \label{sec:higherhelicoid}
In this section we construct the higher-dimensional analogue of the black helicoids constructed in Sec.~\ref{sec:helicoid} and Sec.~\ref{sec:helicoidpw}. There are two generalisations of the helicoid geometry in $\mathbb{R}^{(D-1)}$ available in the literature, the Barbosa-Dajczer-Jorge helicoids \cite{Jorge:1984} and, recently, the Choe-Hoppe helicoids \cite{Hoppe:2013}. The latter case is not suitable for constructing black hole horizon geometries in flat and plane wave space-times because  these helicoid geometries have a conical singularity at the origin. Therefore we focus on the Barbosa-Dajczer-Jorge helicoids \cite{Jorge:1984}. These helicoids can be of \textbf{Type I} or \textbf{Type II} and we will analyse both cases simultaneously. 

\subsubsection*{Embedding coordinates and geometry}

Explicit coordinate embeddings for the Barbosa-Dajczer-Jorge helicoids into a subset $\mathbb{R}^{2N+1}$, where $N$ is an integer, of $\mathbb{R}^{(D-1)}$ or $\mathbb{R}^{(D-2)}$ are given in \cite{LeeLee:2014}. These can be written in the form
 \beq \label{emb:higherhell}
 \begin{split}
 X^{q}(\rho_q,\phi) &= \rho_q \cos (a_q \phi)~\text{if $q$ is odd and $1\le q\le 2N$}~~,\\
 X^{q}(\rho_q,\phi)&= \rho_{q-1} \sin (a_{q-1} \phi)~\text{if $q$ is even and $1\le q\le 2N$}~~,\\
X^{q}(\rho_q,\phi)&=\lambda ~\phi~\text{if $q=2N+1$}~~,
 \end{split}
 \eeq
 and $t=\tau~,~y=0,~,X^{i}=0~,~i=2N+2,...,D-1$ if the embedding is of \textbf{Type I} and $t=\tau~,~y=z,~,X^{i}=0~,~i=2N+2,...,D-2$ if the embedding is of \textbf{Type II}. Here $a_q,\lambda$ are constants which without generality we assume to be $a_q>0$ and $\lambda\ge0$. Note that $N$ and $p$ are related such that $p=2N$. The coordinates lie within the range $-\infty<\rho_q,\phi<\infty$. For $N=1$ and $\lambda=1$ we obtain the case studied in Sec.~\ref{sec:helicoid} and Sec.~\ref{sec:helicoidpw}. In general we can rescale $\phi$ such that $\phi\to \lambda^{-1}\phi$ and $a_q \to  \lambda^{-1} a_q$ and get rid of $\lambda$. However, we will not do so, since we want to consider later the case $\lambda=0$ which represents a minimal cone. The induced metric on the worldvolume of \textbf{Type I} takes the form
 \beq \label{ds:helicoidh}
 \textbf{ds}^2=-R_0^2d\tau^2+\sum_{\hat{a}=1}^{N}d\rho_{\hat{a}}^2+(\lambda^2+\sum_{\hat{a}=1}^{N}a_{\hat{a}}^2\rho_{\hat{a}}^2)d\phi^2~~,
 \eeq
 while on the worldvolume of \textbf{Type II} it reads
 \beq
\textbf{ds}^2= -R_0^2d\tau^2+2(1-R_0^2)d\tau dz+(2-R_0^2)dz^2+\sum_{\hat{a}=1}^{N}d\rho_{\hat{a}}^2+(\lambda^2+\sum_{\hat{a}=1}^{N}a_{\hat{a}}^2\rho_{\hat{a}}^2)d\phi^2~~,
 \eeq 
 where,
 \beq \label{R0:helix}
 R_0^2=1-\left(\sum_{\hat{a}=1}^{N}A_{\hat{a}}^2\rho_{\hat{a}}^2+A_{N+1}^2\lambda^2\phi^2\right)~~.
 \eeq
 Here we have made a slight modification of notation. For each of the $N$ two-planes $(x_{\hat a},x_{\hat a+1})$ of $\mathbb{R}^{(2N+1)}$ we have set $A_{x_{\hat a},x_{\hat a+1}}=A_{\hat a}$ and also $A_{x_{2N+1}}=A_{N+1}$. These choices will be compatible with the choice of Killing vector field as we will see below.

The helicoid can be boosted along the $\phi$ direction with boost velocity $\Omega$ such that $\textbf{k}^{a}\partial_a=\partial_\tau+\Omega\partial_\phi$ which maps onto the vector field in the ambient space-time
\beq \label{kill:hell}
k^{\mu}\partial_\mu=\partial_t +\Omega\sum_{\hat{a}=1}^{N}a_{\hat{a}}\left(x_{\hat{a}}\partial_{x_{\hat{a}+1}}-x_{\hat a+1}\partial_{x_{\hat{a}}}\right)+\lambda\Omega\partial_{x_{2N+1}}~~.
\eeq
For this to be a Killing vector field of the space-time \eqref{ds:pw} we need to require that for each of the $N$ two-planes $(x_{\hat a},x_{\hat a+1})$ of $\mathbb{R}^{(2N+1)}$, $A_{x_{\hat a},x_{\hat a+1}}=A_{\hat a}$ and $A_{N+1}=0$. Therefore the helicoid is rotating with angular velocity $a_{\hat a}\Omega$ in each of the $(x_{\hat a},x_{\hat a+1})~,~\hat a=1,..,N$ planes and it is boosted along the $x_{2N+1}$ direction with boost velocity $\lambda \Omega$. The modulus of the Killing vector field is
\beq \label{eq:modhigh}
\textbf{k}^2=1+\sum_{\hat a=1}^{N}A_{\hat a}\rho_{\hat a}^2-\Omega^2\left(\lambda^2+\sum_{\hat a=1}^{N}a_{\hat a}^2\rho_{\hat a}^2\right)~~.
\eeq
From this expression we see that a limiting surface appears in general and that for the solution to be valid at the origin $\rho_{\hat a}=0$ we must require that $\Omega^2\lambda^2<1$. If the geometry is static $\Omega=0$ then a limiting surface may also exist provided that at least one of eigenvalues $A_{\hat a}$ is negative. The boundaries of the geometry are given by the ellipsoidal equation
\beq
\sum_{\hat{a}=1}^{N}(A_{\hat{a}}+a_{\hat{a}}^2\Omega^2)\rho_{\hat{a}}^2=\Omega^2\lambda^2-1~~.
\eeq
These higher-dimensional helicoids give rise to black hole horizon topologies $\mathbb{R}\times\mathbb{S}^{(D-3)}$ in the case of \textbf{Type I} and $\mathbb{R}^2\times\mathbb{S}^{(D-4)}$ in the case of \textbf{Type II}. The size of the transverse $(n+1)$-dimensional sphere varies from a maximum size at the origin $\rho_{\hat a}=0$ and vanishes at the boundaries.

 \subsubsection*{Solution to the equations of motion}
 We have shown in theorem \ref{theo:solpw} that helicoids are solutions of the blackfold equations. We will now conclude the same for their higher-dimensional versions. First we note that higher-dimensional helicoids are minimal surfaces in $\mathbb{L}^{(D)}$ by appropriately tuning the components $A_{qr}$. For these configurations to be minimal surfaces in plane wave space-times they need to satisfy \eqref{eq:cpw}. Before writing it explicitly, we need to compute the unit normal vector to the surface embedded in $\mathbb{R}^{(2N+1)}$. This has the form
\beq
n_{\rho}=\frac{1}{\sqrt{\lambda^2+\sum_{\hat{a}=1}^{N}a_{\hat{a}}^2\rho_{\hat{a}}^2}}\left(0,\lambda \sin(a_1\phi),-\lambda\cos(a_1\phi),...,\sum_{\hat{a}=1}^{N}a_{\hat{a}}\rho_{\hat{a}}\right)~~,
\eeq
where we have omitted the transverse index $i$ from ${n^{i}}_{\rho}$ since there is only one normal direction to the surface in $\mathbb{R}^{(2N+1)}$. Therefore, Eq.~\eqref{eq:cpw} demands that
\beq
\sum_{\hat{a}=1}^{N}\sin(a_{\hat{a}}\phi)\cos(a_{\hat{a}}\phi)\left(A_{\hat a}-A_{\hat{a}+1}\right)+A_{x_{2N+1}}\lambda\phi\sum_{\hat{a}=1}^{N}a_{\hat{a}}\rho_{\hat{a}}=0~~.
\eeq
Since the last term is linear in $\phi$ we need to require $A_{x_{2N+1}}=0$ and furthermore that $A_{\hat a}=A_{\hat{a}+1}~,~\hat a=1,...,N$. These choices are compatible with the requirement for the Killing vector field \eqref{kill:hell} to be a Killing vector field of the plane wave space-time \eqref{ds:pw}.

According to corollary \ref{cor:type11} and corollary \ref{cor:type2c} the vanishing of ${K_{\tau\tau}}^{i}$ is enough for the configuration to be a static solution of the blackfold equations. In order for it to be a stationary solution then one must also require that $u^{\hat a}u^{\hat b}{K_{\hat a \hat b}}^{i}=0$. Explicit computation of the extrinsic curvature tensor leads to the result
 \beq
 {K_{\rho_{\hat a}\rho_{\hat b}}}={K_{\phi\phi}}=0~~,~~{K_{\rho_{\hat b}\phi}}=-\frac{a_{\hat b}\lambda}{\sqrt{\lambda^2+\sum_{\hat a=1}^{N}a_{\hat a}^2\rho_{\hat a}^2}}~~. 
 \eeq
 Therefore we since we have that ${K_{\phi\phi}}=0$, the blackfold equations are trivially satisfied for rotating higher-dimensional helicoids. 
 
\subsubsection*{Free energies}
We now present the free energies of these configurations. For \textbf{Type I} helicoids the free energy is given by
\beq 
\mathcal{F}=\frac{\Omega_{(n+1)}}{16\pi G}r_+^{n} \int d\phi\int d\rho~ R_0\sqrt{\lambda^2+\sum_{\hat a=1}^{N}a_{\hat a}^2\rho_{\hat a}^2}\thinspace~ \textbf{k}^{n}~~,~~d\rho=\prod_{{\hat a}=1}^{N}d\rho_{\hat a}~~,
\eeq 
while for \textbf{Type II} helicoids we have that
\beq \label{eq:freehigh}
\mathcal{F}=\frac{\Omega_{(n+1)}}{16\pi G}r_+^{n}\int dz  \int d\phi \int d\rho\sqrt{\lambda^2+\sum_{\hat{a}=1}^{N}a_{\hat a}^2\rho_{\hat a}^2}\thinspace~ \textbf{k}^{n}~~,
\eeq 
where $R_0$ is given in \eqref{R0:helix} and $\textbf{k}$ is given in \eqref{eq:modhigh}. If we set $N=1$ then these free energies reduce to those analysed in Sec.~\ref{sec:helicoidpw} and if we further set $A_{1}=0$ then they reduce to the helicoid of Sec.~\ref{sec:helicoid}. The flat space-time limit of these higher-dimensional helicoids is obtained by setting $A_{\hat a}=0~,~\hat a=1,...,N$. For general $N$, it is not possible to integrate these free energies and obtain closed form expressions. However, this is not a problem using numerics.  In Fig.~\ref{fig:freehelicoiddajczer} we plot the free energy density \eqref{eq:freehigh} as a function of $n$ for a static \textbf{Type II} helicoid embedded in $\mathbb{R}^{5}$. 
  \begin{figure}[h!] 
\centering
  \includegraphics[width=0.5\linewidth]{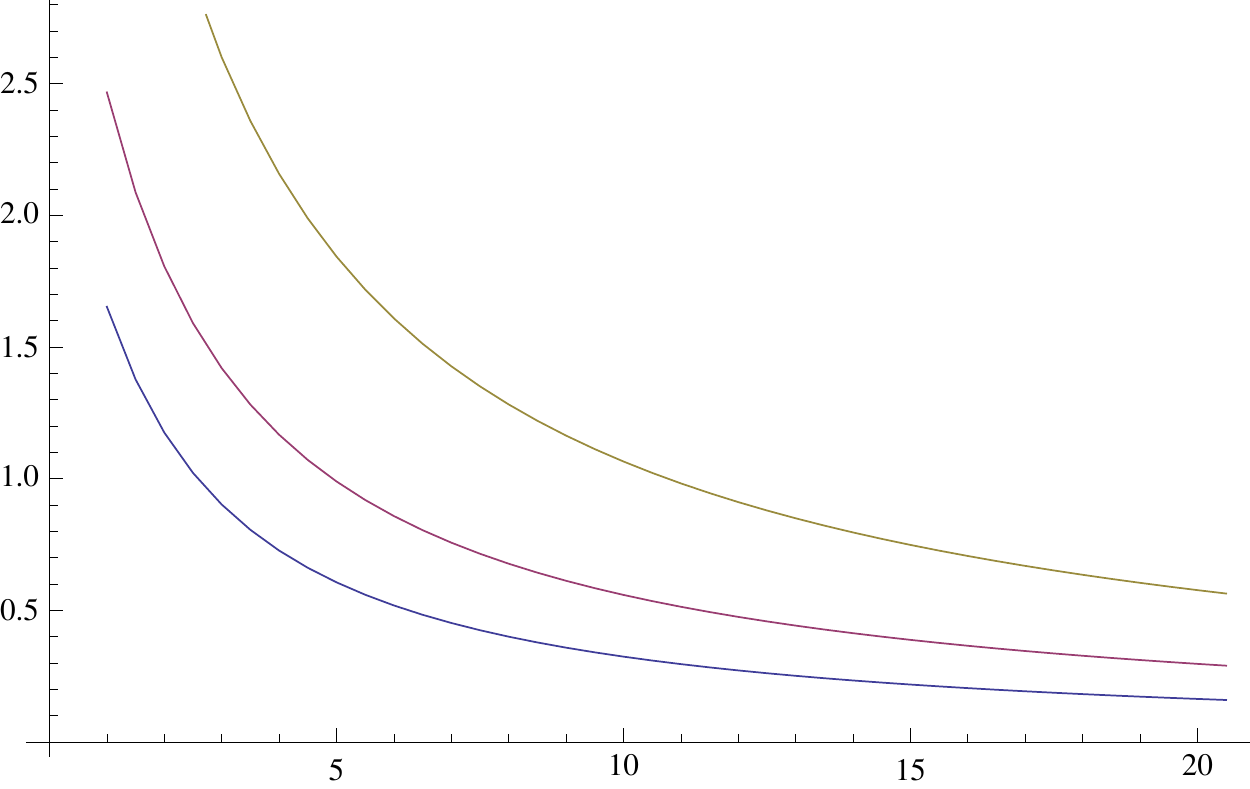}
  \begin{picture}(0,0)(0,0)
\put(-310,113){ $ \frac{ \mathcal{F}}{\int dz \int d\phi}\frac{32\pi G}{r_+^{n}\Omega_{(n+1)}}  $}
\put(-20,-5){ $ n  $}
\end{picture} 
\caption{Free energy density as a function of $n$ for static ($\Omega=0$) black Barbosa-Dajczer-Jorge helicoids of \textbf{Type II} with $N=2$, $a_1=a_2=1$ and $A_{1}=A_{2}=-1$ for $\lambda=\frac{1}{2}$ (blue curve), $\lambda=1$ (red curve) and $\lambda=2$ (yellow curve).} \label{fig:freehelicoiddajczer}
\end{figure}

\subsubsection*{Validity analysis}
The validity analysis follows the same footsteps as in the previous cases. Here we will just look at higher-dimensional helicoids of \textbf{Type II}. The analysis for higher-dimensional helicoids of \textbf{Type I} is very similar, though more cumbersome. For embeddings of \textbf{Type II} it is only necessary to analyse the invariants $\textbf{k}^{-1}\nabla_a\nabla^a\textbf{k}$, $\mathcal{R}$ and $u^{a}u^{b}\mathcal{R}_{ab}$. Explicitly evaluating the worldvolume Ricci scalar leads to
\beq  \label{r:helich}
\mathcal{R}=-\frac{2}{(\lambda^2+\sum_{\hat{a}=1}^{N}a_i^2\rho_i^2)^2}\left(\lambda^2\sum_{\hat{a}=1}^{N}a_{\hat{a}}^2 + \mathbb{P}_{\hat{a}\hat{b}}^{\hat{a}\ne \hat{b}}a_{\hat{a}}^2 a_{\hat{b}}^2(\rho_{\hat{a}}^2+\rho_{\hat{b}}^2) \right)~~,
\eeq
where the last term represents a sum of all the inequivalent permutations of $\hat{a},\hat{b}$. Clearly, this is a geometry that lies within the regime of validity when $\lambda\ne0$ as it does not diverge anywhere. In particular it implies that 
\beq
r_0\ll \frac{\lambda}{\sqrt{\sum_{\hat{a}=1}^{N}a_{\hat{a}}^2}}~~,
\eeq
i..e, the thickness of the blackfold must be much smaller than the pitch of the helicoid. However, when $\lambda=0$ the geometry is that of a higher-dimensional cone and the curvature blows up at the origin $\rho_{\hat{a}}=0$. Therefore it lies outside the validity regime of the method. A similar expression to \eqref{r:helich} can be found for $u^{a}u^{b}\mathcal{R}_{ab}$, which again does not diverge anywhere on the worldvolume geometry. In particular, one finds that the requirement
\beq
r_+\ll \frac{1}{\sqrt{\sum_{\hat{a}=1}^{N}\left(A_{\hat{a}}-\Omega^2a_{\hat{a}}^2\right)}}~~,
\eeq
is sufficient for these configurations to be valid. From the invariant $\textbf{k}^{-1}\nabla_a\nabla^a\textbf{k}$ we find that we need to require $r_+\ll \rho_{\hat a}^{+}$ near the origin $\rho_{\hat a}=0$ where $\rho_{\hat a}^{+}=\sqrt{A_{\hat a}-a_{\hat a}^2\Omega^2}^{-1}$. However, near the boundary we need to introduce the cut-off $\epsilon$ as in all other cases.


\section{Higher-dimensional black catenoids in plane wave space-times}  \label{sec:highercatenoid}
In theorem \ref{theo:solpw2} we have shown that catenoids solve the blackfold equations with an appropriate choice of eigenvalues $A_{q}$. We will now show that this is the case for a specific class of higher-dimensional catenoids of \textbf{Type II}. 

\subsubsection*{Embedding and geometry}
Higher-dimensional catenoids were found in \cite{Barbosa1981} (see also \cite{Hoppe:2013}) and explicit embeddings are given in \cite{2007arXiv0708.3310T}. Generically higher-dimensional catenoids can be embedded in $\mathbb{R}^{(p+1)}$, where $p\ge2$, by choosing a coordinate $\rho$ and a function $z(\rho)$ such that
\beq \label{eq:rho}
\rho=\int_{c}^{z(\rho)}\frac{dr}{\left(c^{-2(p-1)}r^{2(p-1)}-1\right)^{\frac{1}{2}}}~~.
\eeq
In the case $p=2$, $z(\rho)$ is smooth and is well defined on $\mathbb{R}$ while for the cases $p\ge3$, $z(\rho)$ is defined on the interval $[-S,S]$ where
\beq\label{eq:S}
S=S(c)=\int_{c}^{+\infty}\frac{dr}{\left(c^{-2(p-1)}r^{2(p-1)}-1\right)^{\frac{1}{2}}}<\infty~~.
\eeq
By defining the unit vector $\omega$ on $\mathbb{R}^{(p)}$ (i.e., the unit vector on the hyperplane $x_{p+1}=0$) the embedding of the higher-dimensional catenoid of \textbf{Type II} is given by $X^{q}(\rho,\omega)=(z(\rho)\omega,\rho)~,~q=1,...,p+1$ as well as $t=\tau~,~y=z$ and $X^{i}=0~,~i=p+2,..,D-2$. To see precisely how this works we first consider the case $p=2$. Integrating \eqref{eq:rho} we find
\beq
\rho=c \log\left(\frac{z(\rho)+\sqrt{z^2(\rho)-c^2}}{c}\right)~~.
\eeq
Solving it explicitly for $z(\rho)$ we obtain
\beq
z(\rho)=c\cosh\left(\frac{\rho}{c}\right)~~.
\eeq
By defining the unit vector on $\mathbb{R}^{2}$ as $\omega^{q}=(\cos(a\phi),\sin(a\phi))$ we obtain exactly the embedding given in Sec.~\ref{sec:catenoid}. In general, we find for any $p$ that
\beq \label{eq:genrho}
\rho=\frac{\frac{\sqrt{\pi } c^2 \Gamma \left(\frac{4-3 p}{2-2 p}\right)}{\Gamma \left(1+\frac{1}{2-2
   p}\right)}-z(\rho)^2 \left(\frac{c}{z(
   \rho)}\right)^p \, _2F_1\left(\frac{1}{2},\frac{p-2}{2 (p-1)};\frac{4-3
   p}{2-2 p};\left(\frac{f(\rho) }{c}\right)^{2-2 p}\right)}{(p-2)c}~~.
\eeq
For $p\ge3$ we cannot invert this transcendental equation and find $z(\rho)$ explicitly. However this can be done numerically. Similarly, we can also find $S$ by integrating \eqref{eq:S}
\beq
S=\frac{1}{(p-2)c}\frac{\sqrt{\pi } c^2 \Gamma \left(\frac{4-3 p}{2-2p}\right)}{\Gamma \left(1+\frac{1}{2-2
   p}\right)}~~.
\eeq
The induced worldvolume metric takes the simple form
\beq \label{ds:hcat}
\textbf{ds}^2= -R_0^2d\tau^2+2(1-R_0^2)d\tau dz+(2-R_0^2)dz^2+(1+z'(\rho)^2)d\rho^2+z^2(\rho)d\Omega^2_{(p-1)}~~,
\eeq
where we have set $A_{x_qx_q}=A_{1}~,~q=1,...p$ and $A_{x_{p+1}x_{p+1}}=0$, hence
\beq
R_0^2=1+A_{1}z^2(\rho)~~.
\eeq
 This requirement is necessary for setting the catenoid to rotate in order to solve the equations of motion. The first derivative of $z(\rho)$ in \eqref{ds:hcat} with respect to $\rho$ can be determined from \eqref{eq:genrho}, such that
\beq \label{eq:fp}
\frac{z '(\rho ) \left(\frac{c}{z(\rho )}\right)^{p-1}}{\sqrt{1-\left(\frac{z(\rho )}{c}\right)^{2-2 p}}}=1~~.
\eeq

\subsubsection*{Solution to the equations of motion}
In order to solve the equations of motion we will set the catenoid to rotate with angular velocity $\Omega^{\hat {a}}$ on each of the Cartan angles $\phi_{\hat a}$ associated with each of the $[p/2]$ two-planes of the $(p-1)$-dimensional sphere such that $\textbf{k}^{a}\partial_a=\partial_\tau+\sum_{\hat a=1}^{[p/2]}\Omega^{\hat a}\partial_{\phi_{\hat a}}$. For this to correspond to a background Killing vector field \eqref{eq:killing} we require $A_{x_qx_q}=A_{1}~,~q=1,...p$, which will be necessary for solving the equations of motion.

For \textbf{Type II} embeddings, according to corollary \ref{cor:type2c} we need to solve \eqref{eq:c3}. We first need to evaluate the normal vector, which reads
\beq
n_{\rho}=\frac{1}{\sqrt{1+z'(\rho)^2}}\left(0,0,-\omega^{q},z(\rho)\right)~~q=1,...,p-1~.
\eeq
With this we compute the extrinsic curvature components,
\beq
{K_{\tau\tau}}=-\frac{z(\rho)}{\sqrt{1+z'(\rho)^2}}\left(A_1+A_{p+1}\rho\right)~~,
\eeq
while for the other components we find
\beq
{K_{\rho\rho}}=-z''(\rho)~~,~~{K_{\rho\sigma_{\hat{a}}}}=z'(\rho)n_{\hat{b}}\partial_{\sigma_{\hat{a}}}\omega^{\hat b}~~,~~{K_{\sigma_{\hat{a}}\sigma_{\hat{b}}}}=z(\rho)n_{\hat c}\partial_{\sigma_{\hat{a}}}\partial_{\sigma_{\hat{b}}}\omega^{\hat c}~~.
\eeq
In the equation of motion only the $\sigma_{\hat{a}}$ coordinates that correspond to Cartan angles $\phi_{{\hat{a}}}$ play a role. It is easy to show that the mixed components ${K_{\phi_{\hat a}\phi_{\hat{b}}}}~,~\hat a\ne \hat b$ vanish. Therefore, Eq.~\eqref{eq:c3} reduces to
\beq
{K_{\tau\tau}}+(\Omega^{\hat{a}})^2{K_{\phi_{\hat a}\phi_{\hat a}}}=\frac{z(\rho)}{\sqrt{1+z'(\rho)^2}}\left(-A_1-A_{p+1}\rho-\sum_{\hat a}^{[p/2]} (\Omega^{\hat a})^2\omega_{\hat a}\partial_{\phi_{\hat a}}^2\omega^{\hat a}\right)=0~~.
\eeq
Since the second term in the parenthesis is linear in $\rho$ and the other terms do not depend on $\rho$ we need to set $A_{p+1}=0$. The remaining equation is solved if we set $\Omega^{\hat a}=\Omega~,~\hat a=1,..,[p/2]$ and only take odd values of $(p-1)$. \footnote{The same requirement of odd number of dimensions parametrising the sphere is also necessary for the rotating black odd spheres that we construct in App.~\ref{sec:blackodd}.} In this case we have that,
\beq
-\sum_{\hat a}^{[p/2]} (\Omega^{\hat a})^2\omega_{\hat a}\partial_{\phi_{\hat a}}^2\omega^{\hat a}=(p-1)\Omega^2~~,
\eeq
and hence we obtain a solution if $A_1=(p-1)\Omega^2$. Note that we have assumed that all Cartan angles have periodicity $2\pi$. Explicit computation of $\textbf{k}$ leads to the result $\textbf{k}=1$. Therefore, these higher-dimensional catenoids are non-compact in the $z$ and $\rho$ directions and give rise to black hole horizons of the form $\mathbb{R}^{(p)}\times\mathbb{S}^{(D-p-2)}$.

\subsubsection*{Free energy and validity}
The on-shell free energy of the rotating catenoids is given by
\beq
\mathcal{F}=\frac{\Omega_{(n+1)}}{16\pi G}\Omega_{(p-1)}\int dz \int d\rho~ z(\rho)^2\sqrt{1+z'(\rho)^2} r_+^{n}~~,
\eeq
where $r_+=n/(4\pi T)$. In the case $p=2$ this reduces to the free energy of the catenoid \eqref{free:cat2}. This integral can be evaluated numerically. However, because the integrant is everywhere positive then the free energy density will be a positive quantity in general.

We will now turn our attention to the validity of these configurations. Since this is a \textbf{Type II} we only need to analyse the invariants $\textbf{k}^{-1}\nabla_a\nabla^a\textbf{k}$, $\mathcal{R}$ and $u^{a}u^{b}\mathcal{R}_{ab}$. Evaluating explicitly the worldvolume Ricci scalar we find
\beq \label{ricci:cat}
\mathcal{R}=\frac{(p-2)+(p-2)z'(\rho)-2(p-1)z(\rho)z''(\rho)}{z^2(\rho)(1+z'(\rho)^2)^2}~~.
\eeq
One should now look for divergences in this expression. This invariant would diverge if $z(\rho)$ would vanish at some point. Looking at Eq.~\ref{eq:genrho} we see that if $z(\rho)\to0$ then the r.h.s. of \eqref{eq:genrho} becomes imaginary. Therefore we conclude that the Ricci scalar cannot diverge due to $z(\rho)$ since it has no zeros. The Ricci scalar could however diverge due to $z'(\rho)$ and $z''(\rho)$. From \eqref{eq:fp} we can deduce the behaviour of $z'(\rho)$ and $z''(\rho)$ and it is easy to see that $z'(\rho)$ and also $z''(\rho)$ approach $\infty$ only if $\rho\to\infty$. From \eqref{ricci:cat} we deduce that $\mathcal{R}\to0$ as $\rho\to\infty$ and hence constitutes no problem. We also compute the required components of the worldvolume Ricci tensor
\beq
\begin{split}
\mathcal{R}_{\tau\tau}&=-A_1\frac{pz'(\rho)^2+pz'(\rho)^4+z(\rho)z''(\rho)}{(1+z'(\rho)^2)^2}~~,\\
\mathcal{R}_{\phi_{\hat a}\phi_{\hat a}}&=f(\theta_{\hat a})\frac{(p-2)+(p-2)z'(\rho)^2-z(\rho)z''(\rho)}{(1+z'(\rho)^2)^2}~~,
\end{split}
\eeq
where $f(\theta_{\hat a})$ is a function of the form $(\cos\theta_{\hat a})^\alpha (\sin\theta_{\hat a})^{\beta}$ for some constants $\alpha$ and $\beta$. By the same arguments as above, the invariant $u^{a}u^{b}\mathcal{R}_{ab}$ does not diverge anywhere. The invariant $\textbf{k}^{-1}\nabla_a\nabla^a\textbf{k}$ vanishes since $\textbf{k}$ is constant along the worldvolume. Therefore we conclude that higher-dimensional catenoids are valid solutions of the blackfold equations.

\section{Black $p$-spheres in plane wave space-times} \label{sec:blackodd}
In this appendix we construct a series of black hole geometries with constant mean extrinsic curvature in plane wave space-times. The phenomenology of these black holes, when constructed in space-times with a non-trivial gravitational potential, is similar to the phenomenology of soap bubbles: tension must equilibrate with internal pressure. These black holes can be stationary, in which case, there is also an interplay between internal pressure, tension and centrifugal repulsion. These configurations constitute the analogue examples of those found in flat \cite{Emparan:2009vd} and (Anti)-de Sitter space-times \cite{Armas:2010hz}. The latter cases were described in \eqref{eq:dspodd}.


\subsection{Black $p$-spheres} \label{sec:oddspheres}
In order to embed these geometries we consider writing the $(D-2)$ Euclidean part of the metric \eqref{ds:pw} in the form
\be \label{eq:dss}
\sum_{q=1}^{p+1}dx_q^2=dr^2+r^2d\Omega_{(p)}^2+\sum_{q=p+2}^{D-2}dx_q^2~~,
\ee
where we label the coordinates on the $p$-sphere by $\mu^{\hat{a}}~,~\hat a=1,...,p$. We now choose the parametrisation
\beq \label{e:odds}
t=\tau~~,~~y=0~~,~~r=R~~,~~\mu^{\hat{a}}=\sigma^{\hat {a}}~~,~~X^{i}=0~,~i=p+2,...,D-2~~.
\eeq
With this choice, the induced metric on the worldvolume becomes
\beq
\textbf{ds}^2=-R_0^2d\tau^2+R^2d\Omega_{(p)}^2~~,~~R_0^2=1+A(R,\sigma^{\hat a})~~.
\eeq
We now assume that the blackfold is rotating with equal angular velocity along each of the $[(p+1)/2]$ Cartan angles $\phi_{\hat a}$ of the $p$-sphere, such that
\beq
\textbf{k}^{a}\partial_a=\partial_\tau+\Omega\sum_{\hat a=1}^{[(p+1)/2]} \partial_{\phi_{\hat a}}~~,~~\textbf{k}=\sqrt{R_0^2-\Omega^2R^2}~~,
\eeq
where we have assumed that $p$ is odd otherwise the term proportional to $\Omega^2$ in $\textbf{k}^2$ would be $\sigma^{\hat a}$-dependent. However, if $\Omega=0$, this is not required. For this Killing vector to correspond to a Killing vector field of the metric \eqref{ds:pw} we need to impose the same relations between the components $A_{qr}$ as for the higher-dimensional helicoids and catenoids of App.~\ref{sec:higherhelicoid} and App.~\ref{sec:highercatenoid}. For simplicity we focus on the case where $A_{x_q x_q}=A_{1}~,~q=1,...,p+1$. In this case we have that $A(R,\sigma^{\hat a})=A_1R^2$. 

The free energy of these configurations to leading order can be obtained using \eqref{eq:free} and reads
\beq \label{p:p}
\mathcal{F}[R]=-\Omega_{(p)} R_0 R^{p} P~~,~~P=-\frac{\Omega_{(n+1)}}{16\pi G} \left(\frac{n}{4\pi T}\textbf{k}\right)^{n}~~.
\eeq
This in fact takes the same form as the free energy for black odd-spheres in (Anti)-de Sitter space-time \cite{Armas:2010hz} provided we identify $A_1=L^2$. Varying this free energy with respect to $R$ leads to the solution
\beq \label{b:odd}
\Omega^2R^2=R_0^2 \frac{p+\mathbf{R}^2(n+p+1)}{(n+p)+\textbf{R}^2(n+p+1)}~~,~~\textbf{R}^2=A_1 R^2~~,
\eeq
which takes the same form as in (Anti)-de Sitter space \cite{Armas:2010hz} under the same identification. For $p=1$ these represent black rings in asymptotically plane-wave space-times. In general these have horizon topology $\mathbb{S}^{(p)}\times \mathbb{S}^{(n+1)}$. If $A_1<0$ then static solutions exists due to the repulsive gravitational potential as in de Sitter space-time. The balancing condition \eqref{b:odd} becomes \eqref{b:oddds} and $p$ can also take even values. Since the free energy is the same as in (Anti)-de Sitter space so are its thermodynamic properties, given in \cite{Armas:2010hz}. The validity of these configurations will be studied in the next section.


\subsection{Products of $m$-spheres}
In this section we generalise the previous construction to an arbitrary product of $m$-spheres. This will constitute the analogue configurations in plane wave space-times of those constructed in \cite{Emparan:2009vd, Armas:2010hz}. We consider writing the $(D-2)$-dimensional Euclidean metric of \eqref{ds:pw} as product of $m$ balls where the dimension of each sphere is $p_{(\hat a)}$ such that
\be 
\sum_{q=1}^{D-2}dx_q^2=\sum_{\hat a=1}^{m}\left(dr_{\hat a}^2+r_{\hat a}^2d\Omega_{(p_{\hat a})}^2\right)+\sum_{i=p+m+1}^{D-2}dx_i^2~~,
\ee
where $p=\sum_{\hat a=1}^{m}p_{(\hat a)}$. We make a similar choice of matrix components $A_{qr}$ as in the previous section, namely, for each set of coordinates $x^{q}~,~q=1,...,p_{\hat a}$ associated with each ball we set $A_{x_qx_q}=A_{\hat a}$. We further choose the embedding map
\beq 
t=\tau~~,~~y=0~~,~~r_{\hat a}=R_{\hat a}~~,~~\mu^{\hat{a}}=\sigma^{\hat {a}}~~,~~X^{i}=0~,~i=p+m+1,...,D-2~~,
\eeq
where the coordinates $\mu^{\hat{a}}$ now parametrize all the $p$ coordinates on the $m$ spheres.  The induced worldvolume geometry is
\beq
\textbf{ds}^2=-R_0^2d\tau^2+\sum_{\hat a=1}^{m}R_{\hat a}^2d\Omega_{(p_{\hat a})}^2~~,~~R_0^2=\left(1+\sum_{\hat a=1}^{m}A_{\hat a}R_{\hat a}^2\right)~~.
\eeq
Therefore we can see this geometry as a product of odd-spheres being embedded in an inhomogenous (Anti)-de Sitter space-time. We assume the geometry to be rotating with angular velocity $\Omega^{\hat a}$ in each of the Cartan angles associated with each $p_{\hat a}$-dimensional sphere. The Killing vector field is thus of the form
\beq
\textbf{k}^{a}\partial_a=\partial_\tau+\sum_{\hat a=1}^{[(p+1)/2]}\Omega^{\hat a}\partial_{\phi_{\hat a}}~~,~~\textbf{k}^2=R_0^2-\sum_{\hat a=1}^{m}\Omega_{\hat a}^2R_{\hat a}^2~~,
\eeq
where we have assumed each $p_{\hat a}$ to be an odd number. However, if each $\Omega_{\hat a}$ vanishes, this is not necessary and $p_{\hat a}$ must only satisfy $p_{\hat a}\ge1$ for all $\hat a$.

The free energy \eqref{eq:free} to leading order is given by
\beq
\mathcal{F}[R_a]=-V_{(p)}R_0P~~,~~V_{(p)}=\prod^{m}_{\hat a=1}\Omega_{(p_{\hat a})}R_{\hat a}^{p_{\hat a}}~~,
\eeq
where the pressure is given by \eqref{p:p} and $V_{(p)}$ is the volume of the product of m-spheres. Varying this equation with respect to each $R_{\hat a}$ gives rise to a set of $m$ coupled equations. The general solution takes the same form as in (Anti)-de Sitter space-time \cite{Armas:2010hz}
\beq \label{eq:ads}
(\Omega^{\hat a})^2R_{\hat a}^2=R_0^2\frac{p_{\hat a}+\textbf{R}_{\hat a}^{2}\left(n+p+1\right)}{(n+p)+(n+p+1)\textbf{R}^2}~~,~~\textbf{R}_{\hat a}^2=A_{\hat a}^2R_{\hat a}^2~~, ~~\textbf{R}^2=\sum_{\hat a=1}^{m}R_{\hat a}^{2}~~.
\eeq
In particular, if we set $A_{\hat a}=L^2~,~\hat{a}=1,...,m$ we obtain the same result as in \cite{Armas:2010hz}. These configurations give rise to horizon topologies of the form $\prod_{\hat a=1}^{m} \mathbb{S}^{(p_{\hat a})}\times \mathbb{S}^{(D-p-2)}$. Furthermore these configurations also admit a static solution $\Omega_{\hat{a}}=0$ for all $\hat a$ provided we take $A_{\hat a}<0$ for all $\hat {a}$. The thermodynamics also take the same form to leading order as for their (Anti)-de Sitter counterparts. 

\subsubsection*{Validity analysis}
For these configurations we need to analyse all scalars present in \eqref{eq:req} except for the scalar $\textbf{k}^{-1}\nabla_a\nabla^{a}\textbf{k}$ as it vanishes since $\textbf{k}$ is constant over the worldvolume. The scalars $\mathcal{R}$ and $K^{i}K_{i}$ give rise to the same condition, namely, $r_0\ll \textbf{R}$. The scalars $u^{a}u^{b}\mathcal{R}_{ab}$, $R_{//}$ and $R_{||}$ give rise to the condition
\beq
r_+\ll R_{\hat a}\left(1+\sum_{\hat a=1}^{m}A_{\hat a}R_{\hat a}^2\right)^{-\frac{1}{2}}~~,
\eeq
where $r_+=n/(4\pi T)$. These conditions are satisfied by taking $r_0\ll \text{min}(R_{a},\sqrt{A_{a}}^{-1})$.


\subsection{String and branes with a $p$-sphere}
Here we construct boosted strings along the $y$-direction of the plane wave space-time \eqref{ds:pw} and later also branes with a  $p$-spheres. These have no non-trivial analogue in flat or (Anti)-de Sitter space-times. We consider a similar embedding to \eqref{e:odds} describing boosted strings with a $p$-sphere, more precisely,
\beq \label{e:sodd}
t=\tau~~,~~y=z~~,~~r=R~~,~~\mu^{\hat{a}}=\sigma^{\hat {a}}~~,~~X^{i}=0~,~i=p+2,...,D-2~~.
\eeq
In this case the induced metric reads 
\beq \label{ds:sodd}
\textbf{ds}^2=-R_0^2d\tau^2+2(1-R_0^2)d\tau dz^2+(2-R_0^2)dz^2+R^2d\Omega_{(p)}^2~~,~~R_0^2=1+A_1R^2~~,
\eeq
where we have made the same choices for the components $A_{qr}$ as in Sec.~\ref{sec:oddspheres}. These geometries are non-compact along the $z$-direction. The Killing vector field of the boosted string with boost velocity $H$ is given by
\beq
\textbf{k}^{a}\partial_a=\partial_\tau+H \partial_z+\Omega\sum_{\hat {a}=1}^{[(p+1)/2]}\partial_{\phi_{\hat a}}~~,
\eeq
with norm,
\beq
\textbf{k}=\sqrt{(1+\textbf{R}^2)+2\textbf{R}^2 H-(1-\textbf{R}^2)H^2-\Omega^2R^2}~~,
\eeq
where we have assumed that the sphere is rotating with equal angular velocity in all Cartan angles and that $p$ is an odd number. If $\Omega=0$ then it is only necessary to require $p\ge1$. In this case the free energy to leading order is
\beq \label{eq:oddf}
\mathcal{F}[R]=-\Omega_{(p)}\int dz R^{p}P~~.
\eeq
Varying this with respect to $R$ and solving the resulting equation of motion leads to the solution
\beq
\Omega^2R^2=\textbf{R}^2 (1+H)^2+\frac{p \left(1-H^2\right)}{(n+p)}~~.
\eeq
If we take the flat space-time limit $A_1\to0$ and $H=0,p=1$ this geometry describes the uniform black cylinder constructed in \cite{Emparan:2009vd}. Since we must have that $\Omega^2R^2>0$ then this implies that we must have
\beq
H> \frac{2 p}{p-\textbf{R}^2 (n+p)}-1~~.
\eeq
Furthemore we must have that $\textbf{k}>0$ for the solution to be valid which implies that $H<1$. Therefore we obtain the bound on the boost velocity $H$,
\beq
\frac{2 p}{p-\textbf{R}^2 (n+p)}-1<H<1~~.
\eeq
Note also that a static solution with $\Omega=0$ exists provided
\beq
\textbf{R}^2=\frac{p \left(H^2-1\right)}{(n+p) (1+H)^2}~~.
\eeq
Since $H^2<1$ we must have that $A_1<0$ for this configuration to exist. These configurations have horizon topology $\mathbb{R}\times \mathbb{S}^{(p)}\times \mathbb{S}^{(D-p-3)}$. The validity analysis of these configurations results in the same conclusion as in the previous section, namely, one needs to require $r_0\ll \text{min}(R_{a},\sqrt{A_{a}}^{-1})$.

\subsubsection*{Thermodynamic properties}
The thermodynamics properties of these configurations can be obtained from the free energy \eqref{eq:oddf} using \eqref{eq:thermo}. The total mass, angular momentum and entropy read
\beq
M=\frac{\Omega_{(n+1)}V_{(p)}}{16\pi G}\int dz~ \tilde r_+^{n}\frac{H^2- \left(\textbf{R}^2 (H+1) (n+p)-(n+p+1)\right)}{H^2-1}~~,
\eeq
\beq
J=\frac{\Omega_{(n+1)}V_{(p)}}{16\pi G}\int dz~\tilde r_+^{n} \frac{(n+p) R \sqrt{\frac{p(1- H^2)}{n+p}+\textbf{R}^2 (H+1)^2}}{H^2-1}~~,
\eeq
\beq
S=\frac{\Omega_{(n+1)}V_{(p)}}{16\pi G}\int dz~ \tilde r_+^{n}\frac{n}{T}~~,
\eeq
where we have defined $V_{(p)}=\Omega_{(p)}R^{p}$ and 
\beq
\tilde r_+^{n}=\left(\frac{n}{4\pi T}\right)^{n}\left(\frac{n(1- H^2)}{(n + p)}\right)^{\frac{n}{2}}~~,
\eeq
which vanishes when $H=\pm 1$. Furthermore, the configuration also has a momentum $\mathcal{P}$ associated with the boost velocity $H$. One can obtain it from the free energy \eqref{eq:oddf} in a similar manner as for the angular momentum and entropy, that is, by taking the derivative $-\partial\mathcal{F}/\partial H$, leading to
\beq
\mathcal{P}=\frac{\Omega_{(n+1)}V_{(p)}}{16\pi G}\int dz~\tilde r_+^{n}\frac{(n+p)\left(\textbf{R}^2 (H+1)-H\right)}{H^2-1}~~.
\eeq
This is the total momentum along the $z$-direction.

\subsubsection*{Branes of odd-spheres} 
It is possible to generalise the previous configurations by considering instead strings of products of $m$-spheres. Here, however, we will make yet another generalisation by adding extra boosted flat directions to the brane worldvolume \eqref{ds:sodd}. We take the embedding map \eqref{e:sodd} but choose some of the $X^{i}$ functions such that $X^{\hat l}=z^{\hat l}~,~\hat l=p+2,...,p+2+\ell$ and $X^{i}=0~,~i=p+3+\ell,...,D-2$. The worldvolume metric \eqref{ds:sodd} becomes
\beq
\textbf{ds}^2=-R_0^2d\tau^2+2(1-R_0^2)d\tau dz^2+2(1-R_0^2)dz^2+R^2d\Omega_{(p)}^2+\sum_{\hat l=p+2}^{p+2+\ell} dz_{\hat l}^2~~,
\eeq
while the Killing vector field, now also boosted with boost velocity $H^{\hat l}$ along each $z^{\hat l}$ direction, takes the form
\beq
\begin{split}
&\textbf{k}^{a}\partial_a=\partial_\tau+H \partial_z+\Omega\sum_{\hat a=1}^{[(p+1)/2]}\partial_{\phi_{\hat a}}+\sum_{\hat l=p+2}^{p+2+\ell}H_{\hat l}\partial_{z^{\hat l}}~~,\\
&\textbf{k}=\sqrt{(1+\textbf{R}^2)+2\textbf{R}^2 H-(1-\textbf{R}^2)H^2-\bar{H}^2-\Omega^2R^2}~~, ~~\bar H^2=\sum_{\hat l=p+2}^{p+2+\ell}H_{\hat l}^2~~.
\end{split}
\eeq
By explicit evaluating the free energy and solving the resulting equation of motion we find the general solution
\beq
\Omega^2R^2=\textbf{R}^2 (H+1)^2-\frac{p \left(\bar H^2+H^2-1\right)}{(n+p)}~~,
\eeq
giving rise to valid blackfold solutions with horizon topologies $\mathbb{R}^{(\ell+2)}\times \mathbb{S}^{(p)}\times \mathbb{S}^{(n+1)}$.




\addcontentsline{toc}{section}{References}
\footnotesize
\providecommand{\href}[2]{#2}\begingroup\raggedright\endgroup


\begin{thebibliography}{10}

\bibitem{Harmark:2009dh}
T.~Harmark, ``{Domain Structure of Black Hole Space-Times},''
  \href{http://dx.doi.org/10.1103/PhysRevD.80.024019}{{\em Phys. Rev.} {\bf
  D80} (2009)  024019},
\href{http://arxiv.org/abs/0904.4246}{{\tt arXiv:0904.4246 [hep-th]}}.

\bibitem{Armas:2011ed}
J.~Armas, P.~Caputa, and T.~Harmark, ``{Domain Structure of Black Hole
  Space-Times with a Cosmological Constant},''
  \href{http://dx.doi.org/10.1103/PhysRevD.85.084019}{{\em Phys.Rev.} {\bf D85}
  (2012)  084019},
\href{http://arxiv.org/abs/1111.1163}{{\tt arXiv:1111.1163 [hep-th]}}.

\bibitem{Emparan:2009cs}
R.~Emparan, T.~Harmark, V.~Niarchos, and N.~A. Obers, ``{World-Volume Effective
  Theory for Higher-Dimensional Black Holes},''
  \href{http://dx.doi.org/10.1103/PhysRevLett.102.191301}{{\em Phys. Rev.
  Lett.} {\bf 102} (2009)  191301},
\href{http://arxiv.org/abs/0902.0427}{{\tt arXiv:0902.0427 [hep-th]}}.

\bibitem{Emparan:2009at}
R.~Emparan, T.~Harmark, V.~Niarchos, and N.~A. Obers, ``{Essentials of
  Blackfold Dynamics},'' \href{http://dx.doi.org/10.1007/JHEP03(2010)063}{{\em
  JHEP} {\bf 03} (2010)  063},
\href{http://arxiv.org/abs/0910.1601}{{\tt arXiv:0910.1601 [hep-th]}}.

\bibitem{Armas:2012jg}
J.~Armas and N.~A. Obers, ``{Relativistic Elasticity of Stationary Fluid
  Branes},'' \href{http://dx.doi.org/10.1103/PhysRevD.87.044058}{{\em
  Phys.Rev.} {\bf D87} (2013)  044058},
\href{http://arxiv.org/abs/1210.5197}{{\tt arXiv:1210.5197 [hep-th]}}.

\bibitem{Armas:2013hsa}
J.~Armas, ``{How Fluids Bend: the Elastic Expansion for Higher-Dimensional
  Black Holes},'' \href{http://dx.doi.org/10.1007/JHEP09(2013)073}{{\em JHEP}
  {\bf 1309} (2013)  073},
\href{http://arxiv.org/abs/1304.7773}{{\tt arXiv:1304.7773 [hep-th]}}.

\bibitem{Armas:2013goa}
J.~Armas, ``{(Non)-Dissipative Hydrodynamics on Embedded Surfaces},''
  \href{http://dx.doi.org/10.1007/JHEP09(2014)047}{{\em JHEP} {\bf 1409} (2014)
   047},
\href{http://arxiv.org/abs/1312.0597}{{\tt arXiv:1312.0597 [hep-th]}}.

\bibitem{Armas:2014bia}
J.~Armas and T.~Harmark, ``{Black Holes and Biophysical (Mem)-branes},''
  \href{http://dx.doi.org/10.1103/PhysRevD.90.124022}{{\em Phys.Rev.} {\bf D90}
  (2014)  124022},
\href{http://arxiv.org/abs/1402.6330}{{\tt arXiv:1402.6330 [hep-th]}}.

\bibitem{Armas:2014rva}
J.~Armas and T.~Harmark, ``{Constraints on the effective fluid theory of
  stationary branes},'' \href{http://dx.doi.org/10.1007/JHEP10(2014)063}{{\em
  JHEP} {\bf 1410} (2014)  63},
\href{http://arxiv.org/abs/1406.7813}{{\tt arXiv:1406.7813 [hep-th]}}.

\bibitem{Emparan:2009vd}
R.~Emparan, T.~Harmark, V.~Niarchos, and N.~A. Obers, ``{New Horizons for Black
  Holes and Branes},'' \href{http://dx.doi.org/10.1007/JHEP04(2010)046}{{\em
  JHEP} {\bf 04} (2010)  046},
\href{http://arxiv.org/abs/0912.2352}{{\tt arXiv:0912.2352 [hep-th]}}.

\bibitem{Armas:2010hz}
J.~Armas and N.~A. Obers, ``{Blackfolds in (Anti)-de Sitter Backgrounds},''
  \href{http://dx.doi.org/10.1103/PhysRevD.83.084039}{{\em Phys.Rev.} {\bf D83}
  (2011)  084039}, \href{http://arxiv.org/abs/1012.5081}{{\tt arXiv:1012.5081
  [hep-th]}}.

\bibitem{Caldarelli:2010xz}
M.~M. Caldarelli, R.~Emparan, and B.~Van~Pol, ``{Higher-dimensional Rotating
  Charged Black Holes},'' \href{http://dx.doi.org/10.1007/JHEP04(2011)013}{{\em
  JHEP} {\bf 1104} (2011)  013},
\href{http://arxiv.org/abs/1012.4517}{{\tt arXiv:1012.4517 [hep-th]}}.

\bibitem{Emparan:2011hg}
R.~Emparan, T.~Harmark, V.~Niarchos, and N.~A. Obers, ``{Blackfolds in
  Supergravity and String Theory},''
  \href{http://dx.doi.org/10.1007/JHEP08(2011)154}{{\em JHEP} {\bf 1108} (2011)
   154},
\href{http://arxiv.org/abs/1106.4428}{{\tt arXiv:1106.4428 [hep-th]}}.

\bibitem{Kleihaus:2012xh}
B.~Kleihaus, J.~Kunz, and E.~Radu, ``{Black rings in six dimensions},''
  \href{http://dx.doi.org/10.1016/j.physletb.2012.11.015}{{\em Phys.Lett.} {\bf
  B718} (2013)  1073--1077},
\href{http://arxiv.org/abs/1205.5437}{{\tt arXiv:1205.5437 [hep-th]}}.

\bibitem{Dias:2014cia}
O.~J. Dias, J.~E. Santos, and B.~Way, ``{Rings, Ripples, and Rotation:
  Connecting Black Holes to Black Rings},''
  \href{http://dx.doi.org/10.1007/JHEP07(2014)045}{{\em JHEP} {\bf 1407} (2014)
   045},
\href{http://arxiv.org/abs/1402.6345}{{\tt arXiv:1402.6345 [hep-th]}}.

\bibitem{Kleihaus:2014pha}
B.~Kleihaus, J.~Kunz, and E.~Radu, ``{Black ringoids: spinning balanced black
  objects in $d\geq 5$ dimensions -- the codimension-two case},''
\href{http://arxiv.org/abs/1410.0581}{{\tt arXiv:1410.0581 [gr-qc]}}.

\bibitem{Figueras:2014dta}
P.~Figueras and S.~Tunyasuvunakool, ``{Black rings in global anti-de Sitter
  space},''
\href{http://arxiv.org/abs/1412.5680}{{\tt arXiv:1412.5680 [hep-th]}}.

\bibitem{LeWitt:2009qx}
J.~Le~Witt and S.~F. Ross, ``{Black holes and black strings in plane waves},''
  \href{http://dx.doi.org/10.1007/JHEP01(2010)101}{{\em JHEP} {\bf 1001} (2010)
   101},
\href{http://arxiv.org/abs/0910.4332}{{\tt arXiv:0910.4332 [hep-th]}}.

\bibitem{Hubeny:2002nq}
V.~E. Hubeny and M.~Rangamani, ``{Generating asymptotically plane wave
  space-times},'' \href{http://dx.doi.org/10.1088/1126-6708/2003/01/031}{{\em
  JHEP} {\bf 0301} (2003)  031},
\href{http://arxiv.org/abs/hep-th/0211206}{{\tt arXiv:hep-th/0211206
  [hep-th]}}.

\bibitem{Hubeny:2002pj}
V.~E. Hubeny and M.~Rangamani, ``{No horizons in pp waves},''
  \href{http://dx.doi.org/10.1088/1126-6708/2002/11/021}{{\em JHEP} {\bf 0211}
  (2002)  021},
\href{http://arxiv.org/abs/hep-th/0210234}{{\tt arXiv:hep-th/0210234
  [hep-th]}}.

\bibitem{Liu:2003cta}
J.~T. Liu, L.~A. Pando~Zayas, and D.~Vaman, ``{On horizons and plane waves},''
  \href{http://dx.doi.org/10.1088/0264-9381/20/20/302}{{\em Class.Quant.Grav.}
  {\bf 20} (2003)  4343--4374},
\href{http://arxiv.org/abs/hep-th/0301187}{{\tt arXiv:hep-th/0301187
  [hep-th]}}.

\bibitem{Gimon:2003xk}
E.~G. Gimon, A.~Hashimoto, V.~E. Hubeny, O.~Lunin, and M.~Rangamani, ``{Black
  strings in asymptotically plane wave geometries},''
  \href{http://dx.doi.org/10.1088/1126-6708/2003/08/035}{{\em JHEP} {\bf 0308}
  (2003)  035},
\href{http://arxiv.org/abs/hep-th/0306131}{{\tt arXiv:hep-th/0306131
  [hep-th]}}.

\bibitem{Hubeny:2003ug}
V.~E. Hubeny and M.~Rangamani, ``{Horizons and plane waves: A Review},''
  \href{http://dx.doi.org/10.1142/S0217732303012428}{{\em Mod.Phys.Lett.} {\bf
  A18} (2003)  2699--2712},
\href{http://arxiv.org/abs/hep-th/0311053}{{\tt arXiv:hep-th/0311053
  [hep-th]}}.

\bibitem{LeWitt:2008zx}
J.~Le~Witt and S.~F. Ross, ``{Asymptotically Plane Wave Spacetimes and their
  Actions},'' \href{http://dx.doi.org/10.1088/1126-6708/2008/04/084}{{\em JHEP}
  {\bf 0804} (2008)  084},
\href{http://arxiv.org/abs/0801.4412}{{\tt arXiv:0801.4412 [hep-th]}}.

\bibitem{Nitsche:2011}
J.~C.~C. Nitsche, ``{Lectures on Minimal Surfaces},'' {\em Cambridge University
  Press; Reissue edition} {\bf 11} (2011)  .

\bibitem{Emparan:2011ve}
R.~Emparan and N.~Haddad, ``{Self-similar critical geometries at horizon
  intersections and mergers},''
  \href{http://dx.doi.org/10.1007/JHEP10(2011)064}{{\em JHEP} {\bf 1110} (2011)
   064},
\href{http://arxiv.org/abs/1109.1983}{{\tt arXiv:1109.1983 [hep-th]}}.

\bibitem{Caldarelli:2008pz}
M.~M. Caldarelli, R.~Emparan, and M.~J. Rodriguez, ``{Black Rings in
  (Anti)-de{S}itter space},''
  \href{http://dx.doi.org/10.1088/1126-6708/2008/11/011}{{\em JHEP} {\bf 11}
  (2008)  011},
\href{http://arxiv.org/abs/0806.1954}{{\tt arXiv:0806.1954 [hep-th]}}.

\bibitem{MeeksReview}
W.~H. Meeks~III and J.~Perez, ``{The classical theory of minimal surfaces},''
  \href{http://dx.doi.org/10.1090/S0273-0979-2011-01334-9}{{\em Bull. Amer.
  Math. Soc.} {\bf 48} (2011)  325--407}.

\bibitem{Barbosa1981}
J.~Barbosa and M.~do~Carmo, ``{Helicoids, catenoids, and minimal hypersurfaces
  of $R^{n}$ invariant by an $\ell$-parameter group of motions.},'' {\em An.
  Acad. Brasil. Cienc.} {\bf 53} (1981)  403--408.

\bibitem{Jorge:1984}
J.~Barbosa, M.~Dajczer, and L.~Jorge, ``{Minimal ruled sub manifolds in spaces
  of constant curvature},'' {\em Univ.Math.J. 33 (1984), no. 4, 531-547} (1984)
   .

\bibitem{Hoppe:2013}
J.~Choe and J.~Hoppe, ``{Higher dimensional minimal submanifolds generalizing
  the catenoid and helicoid},'' {\em Tohoku Math. J. (2) 65 (2013), no. 1,
  43-55} (2013)  .

\bibitem{Choe:1996}
J.~Choe, ``On the existence of higher dimensional {E}nneper's surface,''
  \href{http://dx.doi.org/10.1007/BF02566436}{{\em Commentarii Mathematici
  Helvetici} {\bf 71} (1996) no.~1, 556--569}.
  \url{http://dx.doi.org/10.1007/BF02566436}.

\bibitem{Kaabachi_riemannminimal}
S.~Kaabachi and F.~Pacard, ``Riemann minimal surfaces in higher dimensions,''
  {\em Journal of the Institute of Mathematics of Jussieu} {\bf 4} (2007)
  613--637.

\bibitem{Brendle2013}
S.~Brendle, ``Minimal surfaces in $s^3$: a survey of recent results,''
  \href{http://dx.doi.org/10.1007/s13373-013-0034-2}{{\em Bulletin of
  Mathematical Sciences} {\bf 3} (2013) no.~1, 133--171}.
  \url{http://dx.doi.org/10.1007/s13373-013-0034-2}.

\bibitem{Mira:2003}
L.~J. Alias, R.~Chaves, and P.~Mira, ``{Bjoring problem for maximal surfaces in
  Lorentz-Minkowski space},'' {\em Math. Proc. Cambridge Philos. Soc.} {\bf
  134} (2003) no.~2, 289--316.

\bibitem{LeeS:2008}
S.~Lee, ``{Weierstrass representation for timelike minimal surfaces in
  Minkowski 3-space},'' {\em Communications in Mathematical Analysis, Conf. 01}
  {\bf 01} (2008)  11--19.

\bibitem{Wook:2011}
Y.-W. Kim, S.-E. Koh, H.-Y. Shin, and S.-D. Yang, ``{Spacelike Maximal
  Surfaces, Timelike Minimal Surfaces, and Bjorling Representation Formulae},''
  \href{http://dx.doi.org/10.4134/JKMS.2011.48.5.1083}{{\em Journal of the
  Korean Mathematical Society Volume 48, Issue ,5, pp.1083-1100} (2011)  }.

\bibitem{Armas:2012bk}
J.~Armas, T.~Harmark, N.~A. Obers, M.~Orselli, and A.~V. Pedersen, ``{Thermal
  Giant Gravitons},'' \href{http://dx.doi.org/10.1007/JHEP11(2012)123}{{\em
  JHEP} {\bf 1211} (2012)  123},
\href{http://arxiv.org/abs/1207.2789}{{\tt arXiv:1207.2789 [hep-th]}}.

\bibitem{Armas:2013ota}
J.~Armas, N.~A. Obers, and A.~V. Pedersen, ``{Null-Wave Giant Gravitons from
  Thermal Spinning Brane Probes},''
\href{http://arxiv.org/abs/1306.2633}{{\tt arXiv:1306.2633 [hep-th]}}.

\bibitem{Armas:2011uf}
J.~Armas, J.~Camps, T.~Harmark, and N.~A. Obers, ``{The Young Modulus of Black
  Strings and the Fine Structure of Blackfolds},''
  \href{http://dx.doi.org/10.1007/JHEP02(2012)110}{{\em JHEP} {\bf 1202} (2012)
   110},
\href{http://arxiv.org/abs/1110.4835}{{\tt arXiv:1110.4835 [hep-th]}}.

\bibitem{Emparan:2007wm}
R.~Emparan, T.~Harmark, V.~Niarchos, N.~A. Obers, and M.~J. Rodriguez, ``{The
  Phase Structure of Higher-Dimensional Black Rings and Black Holes},''
  \href{http://dx.doi.org/10.1088/1126-6708/2007/10/110}{{\em JHEP} {\bf 10}
  (2007)  110},
\href{http://arxiv.org/abs/0708.2181}{{\tt arXiv:0708.2181 [hep-th]}}.

\bibitem{Camps:2012hw}
J.~Camps and R.~Emparan, ``{Derivation of the blackfold effective theory},''
  \href{http://dx.doi.org/10.1007/JHEP03(2012)038}{{\em JHEP} {\bf 1203} (2012)
   038},
\href{http://arxiv.org/abs/1201.3506}{{\tt arXiv:1201.3506 [hep-th]}}.

\bibitem{Grignani:2010xm}
G.~Grignani, T.~Harmark, A.~Marini, N.~A. Obers, and M.~Orselli, ``{Heating up
  the BIon},'' \href{http://dx.doi.org/10.1007/JHEP06(2011)058}{{\em JHEP} {\bf
  1106} (2011)  058},
\href{http://arxiv.org/abs/1012.1494}{{\tt arXiv:1012.1494 [hep-th]}}.

\bibitem{Grignani:2011mr}
G.~Grignani, T.~Harmark, A.~Marini, N.~A. Obers, and M.~Orselli,
  ``{Thermodynamics of the hot BIon},''
  \href{http://dx.doi.org/10.1016/j.nuclphysb.2011.06.002}{{\em Nucl.Phys.}
  {\bf B851} (2011)  462--480},
\href{http://arxiv.org/abs/1101.1297}{{\tt arXiv:1101.1297 [hep-th]}}.

\bibitem{Niarchos:2012pn}
V.~Niarchos and K.~Siampos, ``{M2-M5 blackfold funnels},''
  \href{http://dx.doi.org/10.1007/JHEP06(2012)175}{{\em JHEP} {\bf 1206} (2012)
   175},
\href{http://arxiv.org/abs/1205.1535}{{\tt arXiv:1205.1535 [hep-th]}}.

\bibitem{Niarchos:2012cy}
V.~Niarchos and K.~Siampos, ``{Entropy of the self-dual string soliton},''
\href{http://arxiv.org/abs/1206.2935}{{\tt arXiv:1206.2935 [hep-th]}}.

\bibitem{Niarchos:2013ia}
V.~Niarchos and K.~Siampos, ``{The black M2-M5 ring intersection spins},'' {\em
  PoS} {\bf Corfu2012} (2013)  088,
\href{http://arxiv.org/abs/1302.0854}{{\tt arXiv:1302.0854 [hep-th]}}.

\bibitem{OperaBook}
J.~Oprea, ``{The mathematics of soap films: explorations with Maple},'' {\em
  Amer Mathematical Society} (2000)  .

\bibitem{Sauvigny2010}
U.~Dierkes, S.~Hildebrandt, and F.~Sauvigny, ``{Minimal Surfaces},'' {\em
  Springer} {\bf 339} (2010)  .

\bibitem{Ogawa1992}
A.~Ogawa, ``{Helicatenoid},'' {\em The Mathematica Journal issue 2} {\bf 2}
  (1992)  .

\bibitem{Emparan:2003sy}
R.~Emparan and R.~C. Myers, ``{Instability of ultra-spinning black holes},''
  {\em JHEP} {\bf 0309} (2003)  025,
\href{http://arxiv.org/abs/hep-th/0308056}{{\tt arXiv:hep-th/0308056
  [hep-th]}}.

\bibitem{Armas:2015nea}
J.~Armas and M.~Blau, ``New geometries for black hole horizons,''
  \href{http://dx.doi.org/10.1007/JHEP07(2015)048}{{\em Journal of High Energy
  Physics} {\bf 2015} (2015) no.~7, }.
  \url{http://dx.doi.org/10.1007/JHEP07%282015%29048}.

\bibitem{Gray:2006}
E.~Abbena, S.~Salamon, and A.~Gray, ``{Modern Differential Geometry of Curves
  and Surfaces with Mathematica, Third Edition},'' {\em CRC Press} (2006)  .

\bibitem{LeeLee:2014}
E.~Lee and H.~Lee, ``{Generalizations of the Choe-Hoppe helicoid and Clifford
  cones in Euclidean space},'' \href{http://arxiv.org/abs/1410.3418}{{\tt
  arXiv:1410.3418 [math]}}.

\bibitem{2007arXiv0708.3310T}
L.-F. {Tam} and D.~{Zhou}, ``{Stability properties for the higher dimensional
  catenoid in $\mathbb{R}^{n+1}$},'' {\em Proc. Amer. Math. Soc.} {\bf 137}
  (2009)  3451--3461.

\end{thebibliography}
\end{document}